\documentclass[12pt]{article}

\parindent 0pt
\parskip 5pt

\usepackage[margin=1in]{geometry}
\usepackage{amsmath,amsthm,amssymb}
\usepackage{latexsym,xspace}
\usepackage{caption,subfigure}
\usepackage{tikz}
\usetikzlibrary{positioning}
\usetikzlibrary{calc}
\usepackage{float}
\usepackage{url}
\usepackage{xspace}
\usepackage{nicefrac}
\usepackage{kbordermatrix}
\usepackage[inline]{enumitem}
\usepackage{xr}
\usepackage[linesnumbered]{algorithm2e}
\usepackage[normalem]{ulem}

\setlist[itemize,1]{leftmargin=2cm,labelsep=1cm,itemsep=20pt,topsep=10pt}
\setlist[enumerate,1]{topsep=5pt,itemsep=0pt,label=(\alph*)}

\floatplacement{figure}{tb}

\floatstyle{plain}
\restylefloat{figure}

\newtheorem{theorem}{Theorem}
\newtheorem{lemma}[theorem]{Lemma}
\newtheorem{corollary}[theorem]{Corollary}

\theoremstyle{remark}

\newcommand{\thsp}{\hspace*{0.5pt}}

\newcommand{\Path}[2]{$(#1,\ldots, #2)$ path}

\setlength{\abovedisplayskip}{0.5\baselineskip}
\setlength{\belowdisplayskip}{0.5\baselineskip}
\setlength{\abovedisplayshortskip}{0.25\baselineskip}
\setlength{\belowdisplayshortskip}{0.25\baselineskip}


\tikzset{b/.style = {circle,draw,inner sep=0pt,fill=black ,minimum size=3.0pt},
         w/.style = {circle,draw,inner sep=0pt,fill=white,thick,minimum size=3.2pt},
         i/.style = {circle,     inner sep=0pt,thick,minimum size=3.0pt},
       stc/.style = {circle,draw,inner sep=0pt,thick,minimum size=7.0mm},
       tac/.style = {rectangle,draw,inner sep=0pt,thick,rotate=45,minimum size=7mm}
     }
\tikzset{every picture/.style={line width=0.75pt}}
\tikzset{empty/.style={rectangle,draw=none,fill=none}}


\newcommand{\es}{\varnothing}
\newcommand{\sm}{\setminus}
\newcommand{\pre}{pre}
\renewcommand{\phi}{\varphi}
\newcommand{\ol}{\overline}
\newcommand{\qua}{\mbox{quasi-}}
\newcommand{\linfor}{\textsc{LinearForests}\xspace}

\newcommand{\chbip}{\textsc{ChordalBipartite}\xspace}
\newcommand{\bip}{\textsc{Bipartite}\xspace}
\newcommand{\och}{\textsc{OddChordal}\xspace}
\newcommand{\mon}{\textsc{Monotone}\xspace}
\newcommand{\qmon}{\textsc{Quasimonotone}\xspace}
\newcommand{\hf}{\textsc{HoleFree}\xspace}
\newcommand{\ehf}{\textsc{EvenHoleFree}\xspace}
\newcommand{\flaw}{\textsc{Flaw}\xspace}
\newcommand{\noflaw}{\textsc{Flawless}\xspace}
\newcommand{\LR}{\ensuremath{L{:}R}} 
\newcommand{\NP}{\ensuremath{\mathbb{NP}}}
\newcommand{\PP}{\ensuremath{\mathbb{P}}}
\newcommand{\dist}{\operatorname{dist}}
\newcommand{\diam}{\operatorname{diam}}

\newcommand{\cC}{\mathcal{C}}

\newcommand{\cF}{\mathcal{F}}
\newcommand{\cG}{\mathcal{G}}

\newcommand{\cN}{\mathcal{N}}

\newcommand{\cT}{\mathcal{T}}



\title{Quasimonotone graphs}
\author{Martin Dyer\thanks{School of Computing, University of Leeds, Leeds LS2~9JT, UK. Email: \texttt{M.E.Dyer@leeds.ac.uk}. }
\and Haiko M\"{u}ller\thanks{School of Computing, University of Leeds, Leeds LS2~9JT, UK. Email: \texttt{H.Muller@leeds.ac.uk}. }}
\date{February 13, 2018}

\begin{document}

\maketitle

\begin{abstract}
  For any class $\cC$ of bipartite graphs, we define quasi-$\cC$ to be the class of
  all graphs $G$ such that every bipartition of $G$ belongs to $\cC$.
  This definition is motivated by a generalisation of the switch Markov chain
  on perfect matchings from bipartite graphs to nonbipartite graphs.
  The monotone graphs, also known as bipartite permutation graphs
  and proper interval bigraphs, are such a  class of bipartite graphs.
  We investigate the structure of quasi-monotone graphs and hence
  construct a polynomial time recognition algorithm for graphs in this class.
\end{abstract}

\section{Introduction}
In~\cite{DyJeMu17} (with Jerrum) and \cite{DyeMu17a} we considered the \emph{switch} Markov chain on perfect matchings in bipartite and nonbipartite graphs. This chain repeatedly replaces two matching edges with two non-matching edges involving the same four vertices, if possible. (See~\cite{DyJeMu17,DyeMu17a} for details.) We considered the ergodicity and mixing properties of the chain.

In particular, we proved in~\cite{DyJeMu17} that the chain is \emph{rapidly mixing} (i.e.~converges in polynomial time) on the class of \emph{monotone} graphs. This class of bipartite graphs was defined by Diaconis, Graham and Holmes in~\cite{DiGrHo01}, motivated by statistical applications of perfect matchings. The biadjacency matrices of graphs in the class have a ``staircase'' structure. Diaconis et al. conjectured the rapid mixing property shown in~\cite{DyJeMu17}. We also showed in~\cite{DyJeMu17} that this class is, in fact, identical to the known class of \emph{bipartite permutation} graphs~\cite{SpBrSt87}, which is itself known to be identical to the class of \emph{proper interval bigraphs}~\cite{HelHua04}.

In extending the work of~\cite{DyJeMu17} to nonbipartite graphs in~\cite{DyeMu17a}, we showed that the rapid mixing proof for monotone graphs extends easily to a class of graphs which includes, beside the monotone graphs themselves,  all \emph{proper}, or \emph{unit}, \emph{interval graphs}~\cite{BogWes99}. In this class the bipartite graph given by the cut between any bipartition of the vertices of the graph must be a monotone graph. We called these graphs \emph{quasimonotone}.

In fact, ``quasi-'' is an operator on bipartite graph classes, and can be applied more generally. In this view, quasimonotone graphs are quasi-monotone graphs, as formally defined in section~\ref{sec:quasi-pre}, and discussed in section~\ref{ss:examples}, below.

For any class of bipartite graphs that is recognisable in polynomial time, the definition of its quasi-class implies membership in co-\NP, and deterministically only an exponential time recognition algorithm. Thus an immediate question is whether we can recognise the quasi-class in polynomial time. The main contribution of this paper is a proof that quasimonotone graphs have a polynomial time recognition algorithm.

\subsection{Definitions and notation} \label{ss:defs}
If $G$ is a graph, we will denote its vertex set by $V[G]$, and its edge set by $E[G]$. If $U\subseteq V[G]$, then $G[U]$ will denote the subgraph induced by $U$. To ease the notation we do not distinguish between $U$ and the subgraph it induces in $G$ where this does not cause ambiguity. So a cycle in $G$ is either a subgraph or the set of its vertices. Similarly, we will write $G=H$ when $G$ is isomorphic to $H$.

A subgraph of $G=(V,E)$ is a \emph{cycle} in $G$ if it is a connected and $2$-regular.  The \emph{length} or \emph{size} of a cycle is the number of its edges (or vertices). A \emph{chord} of a cycle $(U,F)$ in $(V,E)$ is an edge in $U^{(2)} \cap E \sm F$. A chord in a cycle of even length is \emph{odd} if the distance between its endpoints on the cycle is odd. That is, an \emph{odd chord} splits an even cycle into two smaller cycles of even length. An \emph{even chord} splits an even cycle into two smaller cycles of odd length.

A \emph{hole} in a graph is a chordless cycle of length at least five. A cycle of length three is a \emph{triangle}, and a cycle of length four a \emph{quadrangle}. A hole is \emph{odd} if it has an odd number of vertices, otherwise \emph{even}. Let $\hf$ be the class of graphs without a hole, and $\ehf$ the class of graphs without even holes. For the purposes of this paper, a \emph{long} hole will be defined as an odd hole of size at least 7.

A \emph{bipartition} $L,R$ of a set $V$ is such that $L \subseteq V$ and $R = V \sm L$. Then, if $G=(V,E)$ is any graph, the graph $G[\LR]$ is the bipartite graph with vertex bipartition $L,R$, and edge set the cut $\LR= \{xy\in E : x\in L, y\in R\}$. We refer to $G[\LR$] as a bipartition of $G$.

The \emph{distance} $\dist(u,v)$ between two vertices $u$ and $v$ is the length of a shortest \Path{u}{v} in $G$. If $H$ is an induced subgraph of $G$, and $x,y\in H$, we denote the distance from $x$ to $y$ in $H$ by $\dist_H(x,y)$. If $v\in V$, $\dist(v,H)$ is the minimum distance $\dist(v,w)$ from $v$ to any vertex $w\in H$. The maximum distance between two vertices in $G$ is the \emph{diameter} of $G$.

If $G=(V,E)$ and $v\in V$, we denote the neighbourhood of $v$ by $N(v)$, and $N(v)\cup\{v\}$ by $N[v]$.

\subsection{Structure of the paper}\label{ss:stucture}
Though the focus of the paper is on the class of quasimonotone graphs, in
section~\ref{sec:quasi-pre} we discuss a generalisation of the construction of the class
and some immediate properties. In section~\ref{ss:examples} we give some example.

Sections \ref{sec:qmg} to \ref{sec:nolongholes} show that quasimonotone graphs can be recognised
in polynomial time. We begin, in section \ref{ss:fp}, by proving some properties of quasimonotone graphs
for later use, using the characterisation of monotone graphs by forbidden induced subgraphs. The anticipated recognition algorithm first looks for flaws (defined in~\ref{ss:fp}) and then branches into different procedures depending on the length of a short hole (defined in~\ref{ss:short}) in the input
graph. We describe how to find such a hole in \ref{ss:short}. The
remaining forbidden subgraphs are preholes, also defined in \ref{ss:fp}.

Sections \ref{sec:long} and \ref{sec:prehole} deal with graphs containing a long
hole. Again we start with some technical lemmas showing that the long hole
enforces an annular structure in the absence of flaws. The structure is determined
by \emph{splitting}, described in \ref{ss:splitting}.
Possible preholes must wind round this annulus once or twice.  We complete the process
by checking for preholes, using a procedure given in \ref{ss:findprehole}.

If there is no long hole, we show that a minimal prehole consists of two triangles or 5-cycles
with two vertex-disjoint paths between them. We describe this in more detail in
section \ref{sec:nolongholes}. Clearly, we can list all triangles and 5-cycles in the
input graph in $O(n^5)$ time, so the difficulty arises from preholes of unbounded size.
Our algorithm requires listing only the triangles,
in $O(n^3)$ time, since we show that the prehole must be small when there is a 5-cycle..

Section \ref{sec:algorithm} summarises the algorithm with a formal description,
and discusses its running time.

In section~\ref{sec:NPc}, we give a short discussion of a central question raised in the paper,
recognising a prehole in an arbitrary graph. Though we  do not settle this question,
we show that a related question is \NP-complete. That is, given a graph, is it a prehole?

Finally, section \ref{sec:conclusions} concludes the paper.

\section{Quasi-classes and pre-graphs} \label{sec:quasi-pre}

A \emph{hereditary} class of graphs is closed under taking induced subgraphs.
Let $\bip$ denote the class of bipartite graphs, and let $\cC \subseteq \bip$.
Then we will say that the graph $G$ is \emph{$\qua\cC$} if $G[\LR]\in \cC$ for
all bipartitions $L,R$ of $V$.

\begin{lemma}
  If $\cC \subseteq \bip$ is a hereditary class that is closed under
  disjoint union then $\cC = \bip \cap \qua\cC$.
\end{lemma}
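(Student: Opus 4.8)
The plan is to prove the two inclusions $\bip \cap \qua\cC \subseteq \cC$ and $\cC \subseteq \bip \cap \qua\cC$ separately. The first is immediate and uses neither hypothesis on $\cC$. Given $G \in \bip \cap \qua\cC$, I would fix a proper $2$-colouring of $G$ with colour classes $A,B$ and consider the specific bipartition $L = A$, $R = B$. Since every edge of a bipartite graph joins its two colour classes, the cut $\LR$ is the whole of $E[G]$, so the cut-graph $G[\LR]$ equals $G$. As $G \in \qua\cC$ forces $G[\LR] \in \cC$ for \emph{every} bipartition, in particular this one gives $G \in \cC$.

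For the reverse inclusion I would take $G \in \cC$. Then $\cC \subseteq \bip$ gives $G \in \bip$, so it remains to show $G \in \qua\cC$, i.e. that $G[\LR] \in \cC$ for an arbitrary bipartition $L,R$ of $V[G]$. Fixing a proper bipartition $A,B$ of $G$, I would look at the four cells $A_L = A \cap L$, $A_R = A \cap R$, $B_L = B \cap L$, $B_R = B \cap R$, which partition $V[G]$. The key observation is a description of the cut: an edge of $G$ lies in $\LR$ exactly when its two endpoints fall on opposite sides of $L,R$, and since every edge joins $A$ to $B$, the edges of $G[\LR]$ are precisely those between $A_L$ and $B_R$ together with those between $A_R$ and $B_L$. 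Moreover no cut edge joins $A_L \cup B_R$ to $A_R \cup B_L$: the remaining candidate pairs either lie within a single colour class (impossible in $G$) or have both endpoints on the same side of $L,R$ (hence are not cut edges).

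This yields the decomposition $G[\LR] = G[A_L \cup B_R] \;\sqcup\; G[A_R \cup B_L]$ as a disjoint union of two \emph{induced} subgraphs of $G$: the vertex sets $A_L \cup B_R$ and $A_R \cup B_L$ are disjoint and together cover $V[G]$, and because $A_L \subseteq A$ and $B_R \subseteq B$ (and symmetrically for the other pair) the subgraph induced on each part consists of exactly the cut edges identified above. I would then finish by invoking the two closure hypotheses in turn: heredity gives $G[A_L \cup B_R], G[A_R \cup B_L] \in \cC$, and closure under disjoint union gives $G[\LR] \in \cC$. Since $L,R$ was arbitrary this proves $G \in \qua\cC$, and combining the inclusions gives the equality. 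The one step requiring genuine care — and the reason both hypotheses are needed — is verifying that the cut-graph really is the disjoint union of these two induced subgraphs; the subtlety to keep in mind is that $G[\LR]$ is \emph{not} itself an induced subgraph of $G$ (it discards the edges within $L$ and within $R$), which is exactly why a single appeal to heredity fails and the disjoint-union closure is essential, and one must also make sure the two parts partition all of $V[G]$, isolated vertices of the cut-graph included.
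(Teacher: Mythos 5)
Your proof is correct and takes essentially the same approach as the paper: the forward inclusion via the colour-class bipartition, for which the cut-graph $G[\LR]$ equals $G$ itself, and the reverse inclusion by decomposing $G[\LR]$ as the disjoint union of the two induced subgraphs on $(A\cap L)\cup(B\cap R)$ and $(A\cap R)\cup(B\cap L)$, then applying heredity followed by closure under disjoint union. The only difference is presentational: you verify explicitly that no cut edge crosses between the two parts, a point the paper leaves implicit.
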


\begin{proof}
  First let $G = (L \cup R,E)$ be any bipartite graph that does not belong
  to $\cC$. Since $G = G[\LR]$ the graph $G$ does not belong to $\qua\cC$.
  Hence $\cC \supseteq \bip \cap \qua\cC$.

  Next we show $\cC \subseteq \bip \cap \qua\cC$. Let $G = (X \cup Y,E)$
  be a graph in $\cC$ and let $\LR$ be a bipartition of $X \cup Y$.
  Now $G[\LR]$ is the disjoint union of $G_1 = G[(X \cap L)\cup(Y \cap R)]$
  and $G_2 = G[(X \cap R)\cup(Y \cap L)]$. The graphs $G_1$ and $G_2$ belong
  to $\cC$ since the class is hereditary, and hence $G[\LR]$ is in $\cC$ because
  $\cC$ is closed under disjoint union. Thus $G \in \qua\cC$.
\end{proof}
A hereditary graph class can equally well be characterised by a set $\cF$ of forbidden subgraphs.
The set $\cF$ is minimal if no graph in $\cF$ contains any other as an induced subgraph.

For a bipartite graph $H$, a graph $G=(V,E)$ is a \emph{pre-$H$} if there is a
bipartition $L,R$ of $V$ such that $G[\LR]=H$. In this case $H$ is a spanning subgraph
of $G$. Clearly any bipartite $H$ is itself a pre-$H$.

\begin{lemma}
If $\cC \subseteq \bip$ is characterised by a set $\cF$ of forbidden
induced subgraphs, let pre-$\cF = \{\text{pre-}H \mid H \in \cF\}$.
Then $\qua\cC$ is characterised by the set of forbidden induced subgraphs
pre-$\cF$.
\end{lemma}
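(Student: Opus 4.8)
The plan is to prove a logical equivalence: a graph $G$ avoids all forbidden subgraphs in pre-$\cF$ if and only if $G \in \qua\cC$. By the definition of $\qua\cC$, we have $G \in \qua\cC$ precisely when $G[\LR] \in \cC$ for every bipartition $L,R$ of $V$, and since $\cC$ is characterised by the forbidden family $\cF$, this holds precisely when no bipartition $G[\LR]$ contains any $H \in \cF$ as an induced subgraph. So the real content is to show that ``$G$ contains some pre-$H$ (for $H\in\cF$) as an induced subgraph'' is equivalent to ``some bipartition $G[\LR]$ contains $H$ as an induced subgraph.'' I would first pin down the exact meaning of ``induced'': for a bipartite graph $H$ with parts $A,B$, an induced copy of $H$ in $G[\LR]$ means a vertex subset $U = A' \cup B'$ with $A'\subseteq L$, $B'\subseteq R$ such that the cut edges between $A'$ and $B'$ reproduce $E[H]$ exactly, while edges of $G$ lying within $A'$ or within $B'$ are ignored (they are not cut edges). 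This built-in ignoring of same-side edges is exactly what the pre-$H$ notion captures.

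For the forward direction (no pre-$H$ as induced subgraph $\Rightarrow G\in\qua\cC$), I would argue the contrapositive. Suppose $G \notin \qua\cC$; then some bipartition $G[\LR]$ is not in $\cC$, so it contains an induced copy of some $H\in\cF$ on a vertex set $U = A'\cup B'$ with $A'\subseteq L$, $B'\subseteq R$. I then claim that $G[U]$ is a pre-$H$: taking the bipartition of $U$ given by $(A', B')$, the cut $G[U][A'{:}B']$ equals $H$ by the induced-subgraph hypothesis in $G[\LR]$, and that is exactly the definition of $G[U]$ being a pre-$H$. Hence $G$ contains an induced pre-$H$, contradicting the assumption.

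For the reverse direction (some induced pre-$H\Rightarrow G\notin\qua\cC$), suppose $G[U]$ is a pre-$H$ for some $H\in\cF$, witnessed by a bipartition $A',B'$ of $U$ with $G[U][A'{:}B']=H$. I would extend this local bipartition to a global one: set $L = A' \cup (V\sm U)$ and $R = B'$, or any extension placing $A'$ on one side and $B'$ on the other. Then in $G[\LR]$, restricting to $U$ recovers the cut between $A'$ and $B'$, which is $H$; the point to verify is that $U$ induces exactly $H$ in $G[\LR]$, i.e.\ no extra cut edges among $U$ appear beyond those of $H$ and the chosen extension does not create within-$U$ cut edges that belong to $H$'s side-sets. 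Since we placed all of $A'$ in $L$ and all of $B'$ in $R$, the edges of $G[\LR]$ inside $U$ are exactly the cut edges between $A'$ and $B'$, which is $H$ by hypothesis. Thus $G[\LR]$ contains $H$ as an induced subgraph and is not in $\cC$, so $G\notin\qua\cC$.

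Finally I would observe that this shows $\qua\cC$ is exactly the class of graphs avoiding every member of pre-$\cF$ as an induced subgraph, which is the claim. The main obstacle is purely bookkeeping: correctly handling the asymmetry in the definition of an induced subgraph of a bipartite cut graph — namely that edges of $G$ lying \emph{within} a colour class are invisible to the cut and must not be mistaken for chords of $H$ — and ensuring that when extending a local bipartition to a global one, no unintended cut edges are introduced among the chosen vertex set $U$. Once the meaning of ``induced copy inside a bipartition'' is stated carefully, both directions reduce to matching the two definitions, so there is no deep combinatorial difficulty, only the need to be precise about which edges count.
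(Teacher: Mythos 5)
Your proof is correct and takes essentially the same approach as the paper: one direction by extending the certifying bipartition of the pre-$H$ to a bipartition of all of $V$ (noting that same-side edges never enter the cut), and the other by restricting a bipartition witnessing $G\notin\qua\cC$ to the vertex set $U$ of an induced copy of $H$, so that $G[U]$ is a pre-$H$. If anything you are more explicit than the paper on the restriction direction --- the paper's ``conversely'' paragraph leaves the step from ``no $G[\LR]$ contains $H$'' to ``$G$ contains no induced pre-$H$'' implicit --- but the content is identical.
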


\begin{proof}
Suppose $G=(V,E)$ contains $H'=(V',E')$, a pre-$H$ for some $H\in\cF$.
Then $V'$ has a bipartition $L',R'$ such that $H'[L'$:$R']=H$.
Extending $L',R'$ to a bipartition $L,R$ of $V$, $G[\LR]$ contains $H$.
Then $G[\LR]\notin\cC$, so $G\notin\qua\cC$.  Conversely, if $G\in\qua\cC$,
every $G[\LR]\in\cC$, so no $G[\LR]$ contains $H$, for any $H\in\cF$. Thus
$G$ contains no pre-$H$, for any $H\in\cF$, that is, no $H'\in\textrm{pre-}\cF$.
\end{proof}
Note, however, that pre-$\cF$ may not be minimal for $\qua\cC$ when $\cF$ is minimal for $\cC$.
\subsection{Examples}\label{ss:examples}

The class $\qua\bip$ is clearly the set of all graphs.

If $\cC$ is the class of complete bipartite graphs, it is easy to see that
$\qua\cC$ is the class of complete graphs. Note however, that this class
is not closed under disjoint union. Now, if $\cC$ becomes the class of graphs
for which every component is complete bipartite, then $\qua\cC$ is the class of
graphs without $P_4$, paw or diamond. These three graphs are the pre-$P_4$'s,
see Fig.~\ref{fig:preP4}. To see this we observe the following:
\begin{itemize}
\item If a graph $G$ contains a pre-$P_4$ then there is a bipartition $G[\LR]$
  that contains a $P_4$ as induced subgraph. A connected component of $G[\LR]$
  containing such a $P_4$ is not complete bipartite.
\item Now $G$ does not contain a pre-$P_4$. If a connected component $H$ of a
  bipartition of $G$ is not complete bipartite, then $H$ contains a $P_4$,
  contradicting the fact that $G$ does not contain a pre-$P_4$.
\end{itemize}

\begin{figure}[htbp]
  \hspace*{\fill}
  \begin{tikzpicture}[scale=0.5]
    \node[i] (0) at (0,0) {}; \node[i] (2) at (0,2) {};
    \node[b] (a) at (0,1) {}; \node[w] (b) at (2,1) {};
    \node[b] (c) at (4,1) {}; \node[w] (d) at (6,1) {};
    \draw (a)--(b)--(c)--(d);
  \end{tikzpicture}
  \hfill
  \begin{tikzpicture}[scale=0.5]
    \node[b] (a) at (0,0) {}; \node[w] (b) at (0,2) {};
    \node[b] (c) at ($(a)+(30:2)$) {};
    \node[w] (d) at ($(c)+( 2,0)$) {};
    \draw (c)--(a)--(b)--(c)--(d);
  \end{tikzpicture}
  \hfill
  \begin{tikzpicture}[scale=0.5]
    \node[w] (b) at (2,0) {};
    \node[b] (c) at (2,2) {};
    \node[b] (a) at ($(b)+(150:2)$) {};
    \node[w] (d) at ($(b)+( 30:2)$) {};
    \draw (c)--(a)--(b)--(c)--(d)--(b);
  \end{tikzpicture}
  \hspace*{\fill}
  \caption{The pre-$P_4$'s: the path $P_4$, the paw and the diamond}
  \label{fig:preP4}
\end{figure}
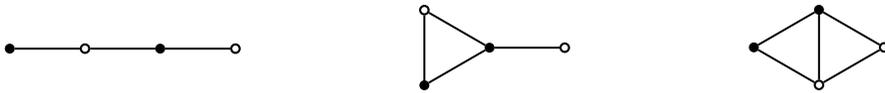

Bounded-degree graphs give another example. If $\cC_d$ is the class of bipartite graphs with degree at most $d$, for a fixed integer $d>0$, then $\qua\cC_d$ is the class of all graphs with degree at most $d$. To see this note that, if $v$ has degree at most $d$ in $G$, then $v$ has degree at most $d$ in any $G[\LR]$. Conversely, if $v$ has degree $d'>d$, then $v$ also has degree $d'$ in any $G[\LR]$ such that $v\in L$, $N(v)\subseteq R$. The unique forbidden subgraph for $\cC_d$ is clearly the star $K_{1,d+1}$. Therefore, the class $\qua\cC_d$ is characterised by  forbidding pre-$K_{1,d+1}$'s, a set with size $O(d^2)$. Hence $\qua\cC_d$ can be recognised in polynomial time, for fixed $d$.

A less obvious example is for the class $\cC$ of \emph{linear forests}, which
are disjoint unions of paths.
Its quasi-class contains all graphs with connected components that are either
a path or an odd cycle.

\chbip is the class of bipartite graphs in which every cycle of length 6 or more has a chord.
\och is the class of graphs in which every even cycle of length 6 or more has an odd chord.
We show in~\cite{DyeMu17a} that quasi-$\chbip=\och$.
However, the complexity of the recognition problem for the class \och is open,
even though the class \chbip can be recognised in linear time.
More generally, polynomial time recognition of $\cC$ does not
directly imply the same property for $\qua\cC$. All we can assert is membership in co-\NP,
by guessing a bipartition $L,R$ and showing in polynomial time that $G[\LR]\notin\cC$.

Finally, as remarked above, if \mon is the class of monotone graphs and
\qmon is the class of quasimonotone graphs, then $\qua\mon = \qmon$, by definition.
Most of this paper examines the structure and polynomial time recognition of graphs in this class.
As we have remarked, the linear-time recognition of \mon has little relevance to this issue.

A further example, \textsc{Quasichains}, is discussed in~\cite{DyeMu17a}. It is the quasi-class arising from unions of \emph{chain graphs} (defined in~\ref{ss:splitting} below), so $\textsc{Quasichains}\subset\qmon$.

Note that $\qua\cC$ does not necessarily inherit the properties as $\cC$. For example, consider \emph{perfection}. Since $\cC\subseteq\bip$, $\cC\subset\textsc{Perfect}$, the class of perfect graphs. But every bipartition of an odd hole is a linear forest. Thus, for any bipartite superclass of linear
forests, the quasi-class contains odd holes, which are imperfect. In particular, this holds for
the class \qmon. However, if $\cC$ is hereditary, closed under disjoint union,  since $\cC\subseteq\qua\cC$ and fails to have some property,then clearly $\qua\cC$ cannot have the property. Thus \qmon is not closed under edge deletions, since \mon is not.

However, if $\cC$ is closed under edge deletions, so is $\qua\cC$. To see this not that, if $G'=G\setminus e$ is deleted, then either $G'[\LR]=G[\LR]$ or $G'[\LR]=G[\LR]\setminus e$, so $G'[\LR]\in \cC$, and hence $G'\in\qua\cC$.

On the other hand, if $\cC$ is closed under edge contraction, $\qua\cC$ is unlikely to have this property. Any class which includes the cycle of length $\ell$, but excludes the cycle of length $\ell'<\ell$, is clearly not closed under edge contraction. Thus $\qua\linfor$ is not closed under edge contraction, even though \linfor is, since it includes all odd cycles, but no even cycle. In particular, $\qua\cC$ is unlikely to be \emph{minor-closed}, even when $\cC$ has this property, since this requires closure under edge contractions.

\section{The structure of quasimonotone graphs} \label{sec:qmg}
\subsection{Flaws and preholes} \label{ss:fp}
A bipartite graph is \emph{monotone} if and only if the rows and columns of
its biadjacency matrix can be permuted such that the ones appear consecutively
and the boundaries of these intervals are monotonic functions of the row or
column index. That is, all the ones are in a staircase-shaped region
in 
the biadjacency matrix.
Equivalent characterisations exist. For instance, a bipartite graph is monotone
if and only if it does not contain a hole, tripod, stirrer or armchair as
induced subgraph. The tripod, stirrer and armchair are depicted in Fig.~\ref{fig:tsa}. Monotone
graphs are also called \emph{bipartite permutation graphs}~\cite{SpBrSt87} and
\emph{proper interval bigraphs}~\cite{HelHua04}.

\begin{figure}[htbp]
  \hspace*{\fill}
  \begin{tikzpicture}[xscale=0.4,yscale=0.3]
    \node[b] (hh) at (3,6) {};                            
    \node[w] (l1) at (2,3) {}; \node[b] (l2) at (1,0) {}; 
    \node[w] (m1) at (3,3) {}; \node[b] (m2) at (3,0) {}; 
    \node[w] (r1) at (4,3) {}; \node[b] (r2) at (5,0) {}; 
    \foreach \leg in {l,m,r} \draw (hh)--(\leg1)--(\leg2);
  \end{tikzpicture}
  \hspace*{\fill}
  \begin{tikzpicture}[scale=0.3]
    \node[b] (hh) at (3,6) {};
    \foreach \x in {1,5} {
      \draw (\x,2) node[b] (\x!2) {} (\x,0) node[w] (\x!0) {};
    };
    \draw (3,0) node[b] (3!0) {} (3,2) node[w] (3!2) {} ;
    \draw (hh)--(3!2)--(3!0)--(1!0)--(1!2)--(3!2)--(5!2)--(5!0)--(3!0);
  \end{tikzpicture}
  \hspace*{\fill}
  \begin{tikzpicture}[xscale=0.3,yscale=0.26]
    \foreach \y in {0,4} \node[w] (0!\y) at (0,\y) {};
    \foreach \y in {2,7} \node[b] (0!\y) at (0,\y) {};
    \foreach \y in {0,4} \node[b] (3!\y) at (3,\y) {};
    \node[w] (3!2) at (3,2) {};
    \draw (0!0)--(0!2)--(0!4)--(0!7)  (3!0)--(3!2)--(3!4)--(0!4)  (0!2)--(3!2);
  \end{tikzpicture}
  \hspace*{\fill}
  \caption{The tripod, the stirrer and the armchair.}
  \label{fig:tsa}
\end{figure}
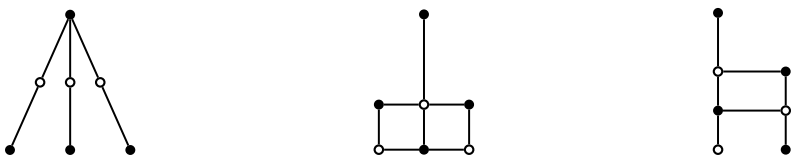

Let $\mon$ denote the class of monotone graphs, then the $\qmon$ will denote the class $\qua\mon$. Two example graphs are shown in Fig.~\ref{fig:quasiexample}.

Let $\flaw$ be the class containing all pre-tripods, pre-stirrers and pre-armchairs. We will say that any graph in $\flaw$ is a \emph{flaw}. A \emph{flawless} graph $G$ will be one which contains no flaw as an induced subgraph. Since all flaws have seven vertices, we can test in $O(n^7)$ time whether an input graph $G$ on $n$ vertices is flawless. Let $\noflaw$ denote the class of flawless graphs.

Therefore, quasimonotone graphs are characterised by the absence of \pre
holes, \pre tripods, \pre stirrers and \pre-armchairs. Let $\qmon$ be the
class of quasimonotone graphs. Clearly $\qmon\subseteq \ehf\,\cap\,\noflaw$,
but equality does not hold, as we now discuss.

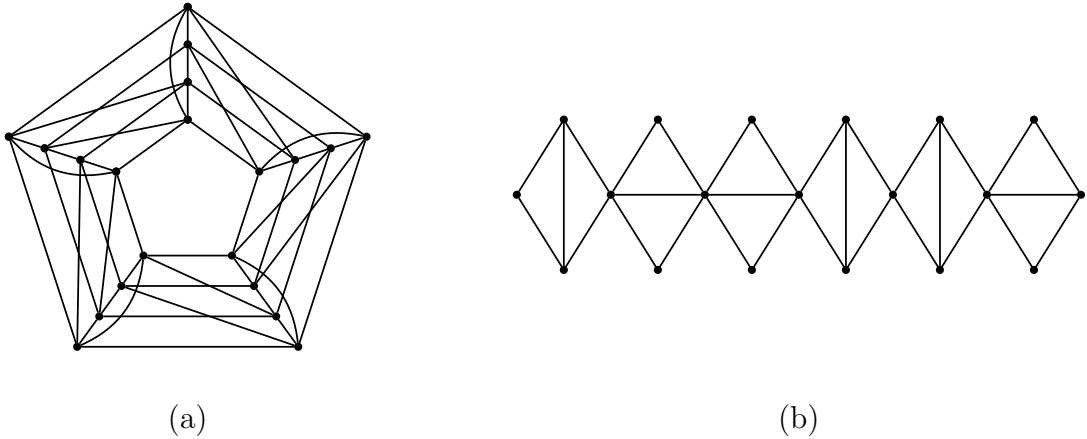
\begin{figure}[ht]
\begin{center}
  \tikzset{every node/.style={circle,draw,thick,fill=black,inner sep=0.8pt}}
  \begin{tikzpicture}[scale=0.5,line width=0.67pt]
    \foreach[count=\r from 2] \n in {11,12,13,14} \node (\n) at ( 90:\r) {};
    \foreach[count=\r from 2] \n in {15,16,17,18} \node (\n) at (162:\r) {};
    \foreach[count=\r from 2] \n in {19,20,21,22} \node (\n) at (234:\r) {};
    \foreach[count=\r from 2] \n in {23,24,25,26} \node (\n) at (306:\r) {};
    \foreach[count=\r from 2] \n in {27,28,29,30} \node (\n) at ( 18:\r) {};
    \foreach[count=\r from 2] \n in {31,32,33,34} \node (\n) at ( 90:\r) {};
    \foreach[count=\r from 2] \n in {35,36,37,38} \node (\n) at (162:\r) {};
    \foreach[count=\m from 15] \n in {11,12,...,30} \draw (\n)--(\m);
    \draw (11)--(12)--(13)--(14) to[bend right] (11)
          (15)--(16)--(17)--(18) to[bend right] (15)
          (19)--(20)--(21)--(22) to[bend right] (19)
          (23)--(24)--(25)--(26) to[bend right] (23)
          (27)--(28)--(29)--(30) to[bend right] (27);
    \draw (11)--(17)  (12)--(18)
          (15)--(21)  (16)--(22)
          (19)--(25)  (20)--(26)
          (23)--(29)  (24)--(30)
          (27)--(33)  (28)--(34);
    \draw (0,-6) node[empty] {(a)} ;
  \begin{scope}[xscale=1.25,xshift=6cm]
    \foreach \n in {2,4,6,8,10,12} \node (b\n) at (\n,-2) {} ;
    \foreach \n in {2,4,6,8,10,12} \node (t\n) at (\n,2) {} ;
    \foreach \n in {1,3,5,7,9,11,13} \node (m\n) at (\n,0) {} ;
    \foreach \n in {1,3,5,7,9,11} {\draw (m\n) -- ++(1,2) ;};
    \foreach \n in {3,5,7,9,11,13} {\draw (m\n) -- ++(-1,2);};
    \foreach \n in {1,3,5,7,9,11} {\draw (m\n) -- ++(1,-2) ;};
    \foreach \n in {3,5,7,9,11,13} {\draw (m\n) -- ++(-1,-2);};
    \draw (t2)--(b2) (t8)--(b8) (m3)--(m5)--(m7) (t10)--(b10) (m11)--(m13) ;
    \draw (7,-6) node[empty] {(b)} ;
  \end{scope}
  \end{tikzpicture}
\end{center}
\caption{Two quasimonotone graphs}\label{fig:quasiexample}
\end{figure}
Let $P=(p_1,p_2,\dots,p_\ell)$ be a path in $G$. The \emph{alternating bipartition} $L,R$ of $P$ assigns $L=\{p_1,p_3,\ldots\}$ and $R=\{p_2,p_4,\ldots\}$. We will say that $P$ is \emph{prechordless} if it is an induced path in $G[\LR]$. In particular, any induced path in $G$ is prechordless. Similarly, let $C=(p_1,p_2,\ldots,p_\ell)$ be an even cycle in $G$. Then $C$ is a prehole if it is a hole in $G[\LR]$. Thus $C$ must be an even cycle, and all chords must run between $L$ and $L$ or $R$ and $R$ in an alternating bipartition $L,R$ of $C$. This is equivalent to requiring that $C$ has no odd chord. The alternating partition is inconsistent for an odd cycle, so an odd cycle $C$ cannot be a prehole.

From this discussion, it is clear that $G$ contains no prehole if and only if it is odd-chordal, as defined in section~\ref{ss:examples}. Thus it follows that $\qmon=\noflaw\cap\och$.
Given an input graph $G$
, we wish to test whether or not $G\in \qmon$. We can test whether $G\in\noflaw$ in polynomial time, but we do not know how to determine whether $G\in\och$. Thus it is not clear that this \qmon can be recognised in polynomial time, since preholes can be of arbitrary size in \noflaw. See Fig,~\ref{prehole:example} for a family of such preholes.

The main contribution of this paper will be to show that the recognition problem for \qmon is indeed in polynomial time. We will not be too concerned with the efficiency of our algorithm beyond polynomiality, so the bounds we prove will often be far from optimal.

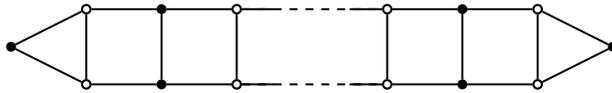
\begin{figure}[H]
\tikzset{every node/.style={circle,draw,fill=none,inner sep=0pt,minimum size=1.25mm}}
\centerline{\begin{tikzpicture}[xscale=1,yscale=0.5,font=\small]
\draw (0,0) node[b] (0) {}
 (1,1) node[w] (1) {} (1,-1) node[w] (1') {}
 (2,1) node[b] (2) {} (2,-1) node[b] (2') {}
 (3,1) node[w] (3) {} (3,-1) node[w] (3') {}
 (5,1) node[w] (5) {} (5,-1) node[w] (5') {}
 (6,1) node[b] (6) {} (6,-1) node[b] (6') {}
 (7,1) node[w] (7) {} (7,-1) node[w] (7') {}
 (8,0) node[b] (8) {} ;
 \draw (3')--(2')--(1')--(0)--(1)--(2)--(3) (5')--(6')--(7')--(8)--(7)--(6)--(5) ;
 \draw (3)--++(0.5,0) (5)--++(-0.5,0) (3')--++(0.5,0) (5')--++(-0.5,0) ;
 \draw[dashed] (3)--(5) (3')--(5') ;
 \draw (1)--(1') (2)--(2') (3)--(3') (5)--(5') (6)--(6') (7)--(7') ;
\end{tikzpicture}}
\caption{An infinite family of preholes}\label{prehole:example}
\end{figure}

\subsection{Properties of flawless graphs}

First we prove a useful lemma about graphs in $\noflaw$.

\begin{lemma}\label{lem:quasi10}
Let $G\in\noflaw$, $P=(p_1,p_2,p_3,p_4,p_5,p_6,p_7)$ be a prechordless path in $G$, $(p_2,p_3,p_4,p_5,p_6)$ be a hole in $G$, or $(p_1,p_2,p_3,p_4,p_5,p_6)$ be a prehole in $G$. If $v\notin P$ is such that $\dist(v,P)=\dist(v,p_4)$, then $\dist(v,p_4)= 1$.
\end{lemma}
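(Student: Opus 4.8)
The plan is to argue by contradiction. Suppose $d := \dist(v,p_4) = \dist(v,P) \ge 2$ and fix a shortest path $v = u_0, u_1, \dots, u_d = p_4$ in $G$. Write $w = u_{d-1}$ and $w' = u_{d-2}$, so that $w' = v$ when $d = 2$. Since every vertex of $P$ lies at distance at least $d$ from $v$, while $\dist(v,u_j) = j < d$ for $j < d$, none of $u_0,\dots,u_{d-1}$ lies on $P$; in particular $w, w' \notin P$. The same inequality supplies the two adjacency constraints that drive the proof: a neighbour of $w$ in $P$ is at distance at most $(d-1)+1 = d$ from $v$, hence at distance exactly $d$, whereas a neighbour of $w'$ in $P$ would be at distance at most $d-1 < d$, which is impossible. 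Thus $w'$ has no neighbour in $P$ at all, while $w$ is adjacent to $p_4$ and possibly to other vertices of $P$, but only to those realising the minimum distance to $v$. We therefore obtain a branch $p_4 \,{-}\, w \,{-}\, w'$ hanging off $p_4$ with $w' \not\sim P$, and the goal is to locate a flaw, contradicting $G \in \noflaw$. Reducing to this two-step branch at the outset is what lets the entire analysis be carried out at distance two, independently of the actual value of $d$.

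Next I would isolate the local structure around $p_4$. In each of the three cases the five vertices $p_2,p_3,p_4,p_5,p_6$, taken with the alternating bipartition $L \ni p_3,p_5$ and $R \ni p_2,p_4,p_6$, induce in the cut either a chordless $P_5$ centred at $p_4$ (when $P$ is a prechordless path or a prehole) or a $5$-hole (the hole case), every remaining edge among them being a chord running inside $L$ or inside $R$. Consider the seven vertices $\{p_2,p_3,p_4,p_5,p_6,w,w'\}$ and extend the alternating bipartition by placing $w \in L$ and $w' \in R$. In the generic situation, where the only neighbour of $w$ among $p_2,\dots,p_6$ lying in $R$ is $p_4$ itself (so that $w$ may still be adjacent to $p_3$ or $p_5$, which lie in $L$), every edge other than $p_2p_3, p_3p_4, p_4p_5, p_5p_6, p_4w, ww'$ runs inside a part. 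The cut is then exactly the tripod with centre $p_4$ and legs $p_4 p_3 p_2$, $p_4 p_5 p_6$ and $p_4 w w'$, so these seven vertices induce a pre-tripod, a flaw.

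The remaining possibility is that $w$ is adjacent to one of the opposite-parity vertices $p_2$ or $p_6$; say $w \sim p_2$. Then $w$ and $p_3$ are common neighbours of $p_2$ and $p_4$, creating a quadrangle $w p_2 p_3 p_4$, and the tripod argument fails because a quadrangle always carries an even number of within-part edges and so cannot be reduced to a tree using a single internal edge. Here I would instead exhibit an armchair. In the hole case the fifth cycle edge $p_6 p_2$ supplies the missing pendant structure: recolouring so that one $5$-cycle edge falls inside a part realises the cut as a quadrangle $w p_2 p_3 p_4$ carrying three pendants, that is, a pre-armchair. In the path and prehole cases the same effect is obtained by bringing in the neighbour $p_1$ of $p_2$ (and symmetrically $p_7$, or the appropriate prehole vertex, when $w \sim p_6$): on the seven vertices $\{p_1,p_2,p_3,p_4,p_5,w,w'\}$ the colouring induced by the alternating bipartition together with $w \in L$, $w' \in R$ confines every even chord of $P$ to a part and leaves the cut equal to the quadrangle $w p_2 p_3 p_4$ with pendants $w'$ on $w$, $p_1$ on $p_2$ and $p_5$ on $p_4$, again a pre-armchair.

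The main obstacle is precisely this second regime: once $w$ attaches to a vertex of $P$ of the \emph{opposite} parity the clean tripod disappears, and one must choose a fresh bipartition, and sometimes an auxiliary vertex $p_1$ or $p_7$, to recover one of the three flaws. This includes the sub-cases where $w$ is adjacent to both $p_2$ and $p_6$, which I expect to yield a pre-stirrer, and where chords of $P$ are present. Each configuration must be checked so that the cut matches a flaw exactly, with all extra edges kept inside the parts; verifying this for every adjacency pattern of $w$ and every admissible chord set of $P$ is the bulk of the work, but since there are only finitely many patterns on seven vertices it is routine. In every case the conclusion is the same: a flaw is forced, contradicting $G \in \noflaw$, and hence $\dist(v,p_4) = 1$.
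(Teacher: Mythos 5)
Your proposal is correct and follows essentially the same route as the paper's proof: reduce to the distance-two configuration $p_4\,{-}\,w\,{-}\,w'$ with $w'$ having no neighbour in $P$, extend the alternating bipartition to $w$ and $w'$, and case on whether $w$ meets neither, one, or both of $p_2,p_6$, obtaining a pre-tripod, pre-armchair (via the shifted window $p_1,\ldots,p_5$ or $p_3,\ldots,p_7$), or pre-stirrer respectively, with the hole and prehole cases handled by wrapping the window around the cycle. The only differences are presentational: you make the reduction to distance two and the harmlessness of within-part chords explicit, where the paper disposes of these with a "without loss of generality."
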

\begin{proof}
Clearly, the shortest path from $v$ to $p_4$ cannot use any edge of $P$. Therefore,
suppose, without loss of generality, that $\dist(v,P)=2$, and $(v,u,p_4)$ is the shortest path from $v$ to $P$. Consider the alternating bipartition of $P$ extended to $u\in R$ and $v\in L$, as shown by the black ($L$) and white ($R$) vertices in Fig.~\ref{fig:dist2}. There are no edges from $v$ to $P$, since $\dist(v,P)>1$, and there are no edges between vertices of $P$ in $G[\LR]$, since $P$ is prechordless.

\begin{figure}[H]
\centering{
\begin{tikzpicture}[line width=0.5pt,minimum size=3pt,inner sep=1.5pt,scale=0.9,font=\small]
\draw (0,0) node[circle,b,label=below:$p_2$] (1){} (1,0) node[circle,draw,fill=none,label=below:$p_3$] (2){}
(2,0) node[circle,b,label=below:$p_4$] (3){} (3,0) node[circle,draw,fill=none,label=below:$p_5$] (4){}
(4,0) node[circle,b,label=below:$p_6$] (5){};
\draw (1)--(2)--(3)--(4)--(5);
\draw (2,1) node[circle,draw,fill=none,label=left:$u$] (6){} (2,2) node[circle,b,label=above:$v$] (7){} (3)--(6)--(7);
\draw (2,-0.75) node {$P$};
\end{tikzpicture}
\hspace*{1.5cm}
\begin{tikzpicture}[line width=0.5pt,minimum size=3pt,inner sep=1.5pt,scale=0.9,font=\small]
\draw (0,0) node[circle,draw,fill=none,label=below:$p_3$] (1){} (1,0) node[circle,b,label=below:$p_4$] (2){} (2,0) node[circle,draw,fill=none,label=below:$p_5$] (3){} (3,0) node[circle,b,label=below:$p_6$] (4){}
(4,0) node[circle,draw,fill=none,label=below:$p_7$] (5){};
\draw (1)--(2)--(3)--(4)--(5);
\draw (1,1) node[circle,draw,fill=none,label=left:$u$] (6){} (1,2) node[circle,b,label=above:$v$] (7){} (2)--(6)--(7) (6)--(4);
\draw (2,-0.75) node {$P$};
\end{tikzpicture}
\hspace*{1.5cm}
\begin{tikzpicture}[line width=0.5pt,minimum size=3pt,inner sep=1.5pt,scale=0.9,font=\small]
\draw (0,0) node[circle,b,label=below:$p_2$] (1){} (1,0) node[circle,draw,fill=none,label=below:$p_3$] (2){}
(2,0) node[circle,b,label=below:$p_4$] (3){} (3,0) node[circle,draw,fill=none,label=below:$p_5$] (4){}
(4,0) node[circle,b,label=below:$p_6$] (5){};
\draw (1)--(2)--(3)--(4)--(5);
\draw (2,1) node[circle,draw,fill=none,label=north west:$u$] (6){} (2,2) node[circle,b,label=above:$v$] (7){} (3)--(6)--(7) (5)--(6)--(1);
\draw (2,-0.75) node {$P$};
\end{tikzpicture}
}
\caption{}\label{fig:dist2}
\end{figure}
Thus the only possible edges in $G[\LR]$, other than in $P$ and the 2-path $(v,u,p_3)$, are those joining $u$ to a vertex in $L$. There are three cases, where none, one or both of these edges are present, as shown in Fig.~\ref{fig:dist2}. Note that the ``one'' case has a symmetric version, where the  vertices in $P$ are $p_1,p_2,p_3,p_4,p_5$, and the edge $up_2$ is present. But these graphs are the tripod, armchair  and stirrer, respectively, contradicting $G$ being flawless.

If $(p_2,p_3,p_4,p_5,p_6)$ is a hole, or $(p_1,p_2,p_3,p_4,p_5,p_6)$ is a prehole, we need only observe that the configurations in  Fig.~\ref{fig:dist2} exist, if $P$ is allowed to ``wrap around'' $C$. That is, if $p_1,p_7$ are interpreted as $p_6,p_2$ respectively.
\end{proof}
Note that any subpath of a prehole or odd hole $C$ is prechordless.
\begin{lemma}\label{lem:quasi20}
  Every odd hole or prehole in a connected flawless graph is dominating.
\end{lemma}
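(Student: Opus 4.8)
The plan is to argue by contradiction, reducing everything to a single application of Lemma~\ref{lem:quasi10}. Suppose $C$ is an odd hole or prehole in the connected flawless graph $G$, but $C$ fails to dominate, so some vertex lies at distance at least $2$ from $C$. The first step is to extract from this a vertex at distance \emph{exactly} $2$. Since $G$ is connected, I would take any $w$ with $\dist(w,C)\ge 2$ together with a shortest path $(x_0,x_1,\dots,x_k)$ from a vertex $x_0=c\in C$ to $x_k=w$. Along such a path one has $\dist(x_i,C)=i$ for every $i$ (otherwise the path could be shortened), so $v:=x_2$ satisfies $\dist(v,C)=2$, realised by the $2$-path $(v,u,c)$ with $u=x_1$. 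In particular $c$ is a vertex of $C$ nearest to $v$, and $v\notin C$.

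The second step is to position $c$ as the central vertex $p_4$ of one of the three configurations in Lemma~\ref{lem:quasi10}, and then read off the contradiction $\dist(v,c)=1$. I would split on the number of vertices of $C$. If $C$ has at least seven vertices (an odd hole of size $\ge 7$, or a prehole of size $\ge 8$), take the seven consecutive vertices of $C$ centred at $c$, giving a subpath $P=(p_1,\dots,p_7)$ with $p_4=c$; these seven vertices are distinct precisely because $|C|\ge 7$, and $P$ is prechordless by the remark following Lemma~\ref{lem:quasi10}. If $C$ is a $5$-hole, relabel it $(p_2,\dots,p_6)$ with $p_4=c$ and invoke the hole case; if $C$ is a $6$-prehole, relabel it $(p_1,\dots,p_6)$ with $p_4=c$ and invoke the prehole case. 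In each case $p_4=c\in P\subseteq C$, so that $\dist(v,P)=\dist(v,C)=2=\dist(v,p_4)$, and $v\notin C\supseteq P$. The hypotheses of Lemma~\ref{lem:quasi10} are therefore met, and it yields $\dist(v,p_4)=1$, contradicting $\dist(v,c)=2$.

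The genuine content is concentrated in the bookkeeping of the previous paragraph, and this is where I expect the only real care to be needed. One must verify that the nearest vertex $c$ can always be taken to sit at the centre $p_4$, that the resulting seven vertices are genuinely distinct — which requires $|C|\ge 7$ and thereby forces the separate treatment of the $5$-hole and the $6$-prehole, exactly the two short cases that Lemma~\ref{lem:quasi10} already handles by its wrap-around convention — and that the chosen subpath is prechordless, which is supplied verbatim by the remark that every subpath of a prehole or odd hole is prechordless. The key identity $\dist(v,P)=\dist(v,p_4)$ is immediate once $c$ is a nearest vertex of $C$, since $P\subseteq C$ forces $\dist(v,P)\ge\dist(v,C)=\dist(v,c)$ while $c=p_4\in P$ forces the reverse inequality; ties among nearest vertices cause no trouble.

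Finally I would stress that connectivity is essential and is used only in the first step, to guarantee a vertex at distance \emph{exactly} $2$: a disconnected $G$ with a component disjoint from $C$ would make the statement false, so the hypothesis cannot be dropped. With the distance-$2$ vertex in hand, the remainder is a purely mechanical invocation of Lemma~\ref{lem:quasi10}, and no further case analysis on chords or flaws is required, since all of that was absorbed into the earlier lemma.
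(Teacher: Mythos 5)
Your proof is correct and follows essentially the same route as the paper's: take a nearest vertex of $C$, centre a seven-vertex subpath of $C$ on it (wrapping around when $|C|<7$), observe it is prechordless, and invoke Lemma~\ref{lem:quasi10}. The only difference is cosmetic — you extract a vertex at distance exactly $2$ before applying Lemma~\ref{lem:quasi10}, whereas the paper leaves that reduction to the ``without loss of generality'' step inside the proof of Lemma~\ref{lem:quasi10} itself.
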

\begin{proof}
  Let $C$ be an odd hole or prehole in the connected flawless graph $G$. We show $\dist(v,C)\le1$ for every vertex $v$ of $G$.

If $v\in C$, this is obvious. Otherwise, let $w$ be a vertex such that $\dist(v,C)=\dist(v,w)$.  Consider the subpath $P=(p_1,p_2,\ldots,p_7)$ of $C$ such that $w=p_4$, where this path wraps around $C$ if $|C|<7$.
Since $C$ is a hole or a prehole, $P$ is prechordless. The result then follows from Lemma~\ref{lem:quasi10}.
\end{proof}
If $C$ is an odd hole we will call $n(C)=\{v\in V:\dist(v,C)\leq 1\}$, the neighbourhood of $C$. Thus, if $G$ is connected, then $G=N(C)$ for any odd hole $C\subseteq G$.
\begin{lemma}\label{lem:quasi30}
 Suppose $G\in\noflaw\cap\thsp\ehf$, and that $C$ is an odd hole in $G$, of length at least seven. Then every vertex $v\in V$ has at most three neighbours in $C$. If there are two neighbours, $w,x$, then $\dist_C(w,x)=2$. If there are three neighbours, $w,x,y$, then $\dist_C(w,x)=\dist_C(x,y)=2$. If $C$ is a short odd hole in $G$, then $v$ has at most two neighbours on $C$.
\end{lemma}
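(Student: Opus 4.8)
The plan is to fix the odd hole $C=(c_0,\dots,c_{\ell-1})$ (indices mod $\ell$) and analyse a single vertex $v$. If $v\in C$ the statement is immediate, since the only neighbours of $v$ inside the chordless cycle $C$ are its two cycle-neighbours, which lie at $\dist_C$-distance $2$ from one another. So assume $v\notin C$, write $N(v)\cap C=\{w_1,\dots,w_k\}$ in cyclic order, and let $d_i$ be the length of the arc $A_i$ of $C$ running from $w_i$ to $w_{i+1}$ and containing no other neighbour of $v$, so $\sum_i d_i=\ell$. The basic object is the cycle $D_i$ obtained from $A_i$ by adding $v$: it has length $d_i+2$, and since $A_i$ is an induced subpath of $C$ and $v$ has no neighbour interior to $A_i$, the only possible chord of $D_i$ is the edge $w_iw_{i+1}$, present exactly when $d_i=1$.

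First I would extract everything that follows from $G\in\ehf$. When $2\le d_i\le\ell-2$ the cycle $D_i$ is chordless, so to avoid an even hole we need $d_i+2\notin\{6,8,10,\dots\}$; thus each such gap is $2$ (a harmless quadrangle) or odd (a possibly long odd hole). Since $\sum_i d_i=\ell$ is odd, the number of odd gaps is odd, so at least one gap is odd and hence large. For $k=2$ this already pins $\dist_C(w_1,w_2)$ to $1$ or $2$: an even distance $\ge4$ is excluded directly, and an odd distance $\ge3$ forces the complementary arc to be even and $\ge4$, an even hole. It remains to rule out distance $1$ and to bound $k$, which is where flawlessness enters.

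The engine for the flawless part is an explicit construction of an induced tripod centred at $v$. Suppose one can select three neighbours $a,b,c\in N(v)\cap C$ together with $a',b',c'\in C$ with $a'\sim a$, $b'\sim b$, $c'\sim c$, the seven vertices distinct, and each of $a',b',c'$ \emph{private}, i.e.\ adjacent among the seven only to its own partner in $\{a,b,c\}$. Putting $v,a',b',c'$ on one side and $a,b,c$ on the other exhibits these seven vertices as a pre-tripod, contradicting $G\in\noflaw$. I would use this twice. If $k\ge4$, choosing three well-separated neighbours (e.g.\ alternate ones) and taking for each a cycle-neighbour on its outer side supplies private pendants once $\ell$ and the gap pattern are accounted for; hence $k\le3$. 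If $k=3$ and some internal gap is odd and $\ge3$, the middle neighbour acquires a private pendant on its long side, so the construction again applies and each internal gap is forced down to $2$, which is exactly the pattern $w,w{+}2,w{+}4$ claimed. When an extra edge or chord appears among the seven chosen vertices the relevant flaw becomes a stirrer or armchair rather than a tripod, and I would record those variants in parallel.

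The step I expect to be the main obstacle is excluding internal gaps equal to $1$, i.e.\ showing that $v$ cannot be adjacent to two consecutive vertices of $C$, together with the corresponding short-hole count. Here the tripod construction stalls, since the cycle-neighbour of the middle vertex on the short side is itself a selected neighbour and no private pendant is available; one must instead locate a stirrer or armchair elsewhere on the hole and check that the chosen bipartition introduces no spurious cut-edge. The short-hole case $|C|=5$ I would treat by a direct finite analysis, since $C\cup\{v\}$ spans only boundedly many vertices and the admissible adjacency patterns can be enumerated against the forbidden flaws and the even-hole condition.
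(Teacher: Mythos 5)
Your $\ehf$ bookkeeping is sound (even gaps of length at least $4$ die by even holes, and for $k=2$ your complementary-arc argument matches the paper's treatment of even distances), but the case you defer as ``the main obstacle'' --- a gap equal to $1$ --- is not a loose end that more flaw-hunting can close; it is the step where your toolkit provably cannot succeed. Consider a $7$-hole $C=(c_1,\dots,c_7)$ plus a vertex $v$ adjacent exactly to $c_1$ and $c_2$. This graph is even-hole-free (its only chordless cycles are $C$ and the triangle $vc_1c_2$), and it contains no flaw: a stirrer needs eight crossing edges, but every $7$-vertex induced subgraph has at most seven edges; an armchair needs seven crossing edges, so all edges of a seven-edge induced subgraph would have to cross, which is impossible since each such subgraph is either $C_7$ or contains the triangle; and the few candidate tripod centres ($c_1$, $c_2$) never admit three disjoint legs, since $v$'s only neighbours are $c_1,c_2$ themselves. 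So ``locating a stirrer or armchair elsewhere on the hole'' is impossible here, and no argument from $\noflaw\cap\thsp\ehf$ alone can exclude this configuration (the same holds for $v$ adjacent to $c_1,c_2,c_3$, which kills the three-neighbour pattern too). What the paper does at this point is different in kind: for $\nu=\dist_C(w,x)$ odd, including $\nu=1$, it forms the even cycle $C'$ through $v$ and the long $(w,x)$-arc, gives $C'$ its alternating bipartition $\LR$, and argues that either $G[\LR]$ contains a hole of length at least $6$, or $v$ is adjacent to every second vertex of $C'$, in which case $G[\LR]$ contains a stirrer (Fig.~\ref{fig:dist1a}). The stirrer branch contradicts $\noflaw$; the hole branch is a prehole, i.e.\ it contradicts the absence of preholes (quasimonotonicity) rather than the hypotheses as literally stated --- and the example above shows some such extra ingredient is unavoidable. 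Your proposal contains no analogue of this bipartite-cycle argument, so the central claims of the lemma (distance $2$ for two neighbours, the $2,2$ pattern for three, and hence also the bound on the number of neighbours) are not established.

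Separately, you have misread the final clause. In this paper a ``short'' odd hole is not a $5$-hole but an isometric one: $\dist(u,w)=\dist_C(u,w)$ for all $u,w\in C$ (the definition appears in the section on finding short holes). The clause therefore still concerns holes of length at least seven, and the paper proves it in one line: if $v$ has three neighbours, they occur as $w,x,y$ with $\dist_C(w,y)=4$ while $\dist(w,y)\le 2$ through $v$, so $C$ is not short. Your proposed finite analysis of $|C|=5$ proves a different statement and leaves this clause untouched. A smaller point: your $k\ge 4$ engine as described (``alternate neighbours with pendants on the outer side'') does not always produce a flaw --- for the wheel, where all gaps equal $1$, that selection creates extra crossing edges and yields a graph that is none of the three flaws; a valid selection exists but must be chosen so that every unwanted adjacency lands inside one side of the bipartition, and the paper avoids the issue altogether by extracting its stirrer from the alternating bipartition of $C'$.
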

\begin{proof}
If $v\in C$, $v$ has exactly two neighbours in $C$, so the lemma is true. Thus suppose $v\notin C$.
If $w$ is the only neighbour of $v\in C$, then $G[\{v\}\cup C]$ is the graph shown in Fig.~\ref{fig:dist1}. Note that $w$ may have several such neighbours $v_1,v_2,\ldots$, but no two can be connected by an edge, since otherwise there is a bipartition containing a tripod, see~Fig.~\ref{fig:dist1}. If $v_1,v_2$ are leaves in $G$, then they vertices are false twin.
\begin{figure}[H]
\centering{
\begin{tikzpicture}[line width=0.5pt,minimum size=3pt,inner sep=1.5pt,scale=0.8,font=\small]
\draw (0,0) node[circle,draw,w] (1){} (1,0.5) node[circle,draw,b] (2){}
(2,0.75) node[circle,draw,w,label=below:$w$] (3){} (3,0.5) node[circle,draw,b] (4){}
(4,0) node[circle,draw,w] (5){};
\draw (1)--(2)--(3)--(4)--(5);
\draw (2,1.4) node[circle,draw,b,label=above:$v$] (6){} (3)--(6);
\draw (2,-0.25) node {$C$}; \draw (-0.5,-0.5)--(1) (4.5,-0.5)--(5) ;
\end{tikzpicture}\hspace*{1.5cm}
\begin{tikzpicture}[line width=0.5pt,minimum size=3pt,inner sep=1.5pt,scale=0.8,font=\small]
\draw (0,0) node[circle,draw,w] (1){} (1,0.5) node[circle,draw,b] (2){}
(2,0.75) node[circle,draw,w,label=below:$w$] (3){} (3,0.5) node[circle,draw,b] (4){}
(4,0) node[circle,draw,w] (5){};
\draw (1)--(2)--(3)--(4)--(5);
\draw (1.6,1.4) node[circle,draw,b,label=above:$v_1$] (6){}
(2.4,1.4) node[circle,draw,b,label=above:$v_2$] (7){} (7)--(3)--(6);
\draw (2,-0.25) node {$C$}; \draw (-0.5,-0.5)--(1) (4.5,-0.5)--(5) ;
\end{tikzpicture}\hspace*{1.5cm}
\begin{tikzpicture}[line width=0.5pt,minimum size=3pt,inner sep=1.5pt,scale=0.8,font=\small]
\draw (0,0) node[circle,draw,w] (1){} (1,0.5) node[circle,draw,b] (2){}
(2,0.75) node[circle,draw,w,label=below:$w$] (3){} (3,0.5) node[circle,draw,b] (4){}
(4,0) node[circle,draw,w] (5){};
\draw (1)--(2)--(3)--(4)--(5);
\draw (1.6,1.4) node[circle,draw,b,label=above:$v_1$] (6){}
(2.4,1.4) node[circle,draw,w,label=above:$v_2$] (7){} (7)--(6)--(3) (7)edge[thick,densely dotted](3);
\draw (2,-0.25) node {$C$}; \draw (-0.5,-0.5)--(1) (4.5,-0.5)--(5) ;
\end{tikzpicture}
}
\caption{}\label{fig:dist1}
\end{figure}
Now suppose $v$ has two neighbours $w,x$ on $C$. Since $\ell=|C|\geq 7$ is odd, we may assume $\nu=\dist_C(w,x)\leq(\ell-1)/2$. Suppose first that $\nu$ is odd. Then we obtain an even cycle $C'$ by omitting $\nu-1$ vertices of $C$ and adding $v$.  Consider the alternating bipartition $\LR$ of $C'$. See Fig.~\ref{fig:dist1a}, with black nodes $L$, and white nodes $R$. Now $C'$ has even length $\ell'=\ell-\nu+2\geq (\ell+5)/2\geq 6$. Then, since $C$ is chordless, there is a an even hole in $G[\LR]$, unless $v$ is adjacent to every vertex of $L$.   However,  since $\ell\geq 7$, $G[\LR]$ contains a stirrer,  as shown in Fig.~\ref{fig:dist1a}, contradicting $G\in\noflaw$. Thus $\dist_C(w,x)$ cannot be odd.
\begin{figure}[H]
\centering{
\begin{tikzpicture}[line width=0.5pt,minimum size=3pt,inner sep=1.5pt,xscale=0.65,yscale=0.6,font=\small]
\foreach \x in {1,3,5,7} {\draw (90-180/7+360*\x/7:3cm) node[circle,draw,w](\x) {} ;};
\draw (90-180/7+360/7:3.4cm) node {$w$} (90-180/7+360:3.4cm) node {$x$} ;
\foreach \x in {2,4,6} {\draw (90-180/7+360*\x/7:3cm) node[circle,draw,b] (\x) {} ;};
\draw (0,3.5) node[circle,draw,b,label=above:$v$] (0) {} ;
\draw[densely dashed] (7)--(6)--(5) ; \draw[densely dotted] (1)--(7) ;
\draw[thick] (0)--(1)--(2)--(3)--(0)--(5)--(4)--(3) (0)--(7) ;
\end{tikzpicture}
}
\caption{}\label{fig:dist1a}
\end{figure}
Thus suppose $\dist_C(w,x)=\nu\leq (\ell-1)/2$ is even and $\nu>2$, so $\nu\geq 4$. Then $v,w,x$ lie on a chordless cycle in $G$ of even length $\nu+2\geq6$. This is an even hole, contradicting $G\in\ehf$, so we must have $\dist_C(w,x)=2$. Then $G[\{v\}\cup C]$ is the graph shown in Fig.~\ref{fig:dist1b}, and $G$ has another odd hole of length $\ell$, passing through $v$.

There can be several vertices $v_1,v_2,\ldots$ with neighbours $w$ and $x$, but there can be no edge between any pair of these vertices. Otherwise there is a bipartition containing an armchair. See Fig.~\ref{fig:dist1b}. If these vertices have neighbours only in $C$, then they are all false twin.
\begin{figure}[H]
\centering{
\begin{tikzpicture}[line width=0.5pt,minimum size=3pt,inner sep=1.5pt,scale=0.9,font=\small]
\draw (0,0) node[circle,draw,b] (1){} (1,0.5) node[circle,draw,w,label=below:$w$] (2){}
(2,0.75) node[circle,draw,b] (3){} (3,0.5) node[circle,draw,w,label=below:$x$] (4){}
(4,0) node[circle,draw,b] (5){};
\draw (1)--(2)--(3)--(4)--(5);
\draw (2,1.8) node[circle,draw,w,label=above:$v$] (6){} (2)--(6)--(4) ;
\draw (2,-0.25) node {$C$}; \draw (-0.5,-0.5)--(1) (4.5,-0.5)--(5) ;
\end{tikzpicture}\hspace*{3cm}
\begin{tikzpicture}[line width=0.5pt,minimum size=3pt,inner sep=1.5pt,scale=0.9,font=\small]
\draw (0,0) node[circle,draw,b] (1){} (1,0.5) node[circle,draw,w,label=below:$w$] (2){}
(2,0.75) node[circle,draw,b] (3){} (3,0.5) node[circle,draw,w,label=below:$x$] (4){}
(4,0) node[circle,draw,b] (5){};
\draw (1)--(2)--(3)--(4)--(5);
\draw (2,1.8) node[circle,draw,b,label=below:$\strut v_1$] (6){}
(2,2.7) node[circle,draw,w,label=above:$v_2$] (7){} (2)--(6)--(4) (2)--(7)--(4) (7)--(6) ;
\draw (2,-0.25) node {$C$}; \draw (-0.5,-0.5)--(1) (4.5,-0.5)--(5) ;
\end{tikzpicture}
}
\caption{}\label{fig:dist1b}
\end{figure}
Now suppose $v$ has at least three neighbours $w,x,y$ on $C$, where $w,y$ are such that $\dist_C(w,y)$ is maximised. Consider the alternating bipartition $\LR$ of the \Path{w}{y} in $C$, extended to $v\in R$. Then $v$ must be adjacent to every vertex in $R$ between $w$ and $y$, since otherwise there is an even hole in $G[\LR]$. If $v$ has exactly three neighbours, $G[\LR]$ contains the subgraph  shown in Fig.~\ref{fig:dist1c}, with $L$, $R$ are the white and black vertices, respectively. Now $v$ cannot have a fourth neighbour $z$ on $C$. Otherwise, $G$ has a stirrer, involving $v$, the \Path{w}{z}, and $y$, as shown in Fig.~\ref{fig:dist1c}.
\begin{figure}[H]
\centering{
\begin{tikzpicture}[line width=0.5pt,minimum size=3pt,inner sep=1.5pt,scale=0.9,font=\small]
\draw (0,0) node[circle,draw,b,label=below:$w$] (1){} (1,0.5) node[circle,draw,w] (2){}
(2,0.75) node[circle,draw,b,label=below:$x$] (3){} (3,0.5) node[circle,draw,w] (4){}
(4,0) node[circle,draw,b,label=below:$y$] (5){};
\draw (1)--(2)--(3)--(4)--(5);
\draw (2,1.75) node[circle,draw,w,label=above:$v$] (6){} (1)edge[bend left=20](6) (6)edge[bend left=20](5) (6)--(3);
\draw (2,-0.25) node {$C$}; \draw (-0.5,-0.5)--(1) (4.5,-0.5)--(5) ;
\end{tikzpicture}
\hspace*{2cm}
\begin{tikzpicture}[line width=0.5pt,minimum size=3pt,inner sep=1.5pt,yscale=0.7,xscale=0.9,font=\small]
\draw  (0,0) node[circle,draw,b,label=below:$w$] (1){} (1,0.5) node[circle,draw,w] (2){} (2,0.75) node[circle,draw,b,label=below:$x$] (3){} (3,0.75) node[circle,draw,w] (4){}
(4,0.67) node[circle,draw,b,label=below:$z$] (5){} (5,0.45) node[circle,draw,w] (6){} (6,0) node[circle,draw,b,label=below:$y$] (7){} ;
\draw (1)--(2)--(3)--(4)--(5)--(6)--(7) ;
\draw (2,1.75) node[circle,draw,w,label=above:$v$] (9){} (1)edge[bend left=10](9) (9)edge[bend left=10](5) (9)edge[bend left=25](7);
\draw (2,-0.5) node {$C$}; \draw  (9)--(3) ;
\end{tikzpicture}
}
\caption{}\label{fig:dist1c}
\end{figure}
Finally, if $v$ has three neighbours in $C$, as shown in Fig.~\ref{fig:dist1c}, then $\dist(w,y)=2$, and $\dist_C(w,y)=4$. Thus $C$ is not a short odd hole, a contradiction.
\end{proof}
The following is similar to Lemma~\ref{lem:quasi30}, but the details of the proof are slightly different.
\begin{lemma}\label{lem:quasi35}
  Let $C$ be a prehole in $G\in\noflaw$. Then every vertex $v\in C$ has at most five neighbours in $C$. Two of these are via edges of $C$, so $v$ is incident to at most three chords.
If there are two chords, $vw,vx$, then $\dist_C(w,x)=2$. If there are three chords, $vw,vx,vy$, then $\dist_C(w,x)=\dist_C(x,y)=2$.
\end{lemma}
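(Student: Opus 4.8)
The plan is to mirror the structure of the proof of Lemma~\ref{lem:quasi30}, but using the prehole's own alternating bipartition throughout rather than introducing a fresh even cycle. Let $C=(p_1,\dots,p_\ell)$ be the prehole, and fix the alternating bipartition $L,R$ of $C$ that witnesses $C$ being a hole in $G[\LR]$. The key difference from the odd-hole case is that now $v\in C$, so $v$ already sits on one side of this bipartition, say $v\in L$; its two $C$-edges go to its two $C$-neighbours in $R$, and any chord $vz$ is an edge of $G$ that, to avoid destroying the prehole, must be an \emph{even} chord, i.e.\ must join $v$ to another vertex of $L$ (a vertex at even distance along $C$). So I would first record that every chord at $v$ lands in $L\sm\{v\}$, i.e.\ $\dist_C(v,z)$ is even for each chord $vz$.

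\textbf{First} I would bound the number of chords and force $\dist_C=2$ between chord-endpoints, exactly as in Lemma~\ref{lem:quasi30}. Suppose $vw$ is a chord with $\nu=\dist_C(v,w)\ge4$ (even, by the previous paragraph). Then the shorter arc of $C$ between $v$ and $w$ together with the chord $vw$ forms a chordless cycle in $G$ of length $\nu+1$; but $\nu$ even makes this an odd cycle, which is harmless, so instead I would use the complementary arc. Taking the longer arc of length $\ell-\nu$ plus the chord gives a cycle of length $\ell-\nu+1$, and since in a prehole $\ell$ is even and $\nu$ is even this is again odd. I would therefore argue, as in the odd-hole proof, directly on the induced subgraph $G[\LR]$ restricted to $C$ plus the chord: the chord $vw$ with both endpoints in $L$ creates, together with the two arcs of $C$, two even closed walks in $G[\LR]$, at least one of which is a genuine even cycle of length $\ge6$ that is chordless in $G[\LR]$ (since $C$ was a hole there), i.e.\ an even hole, contradicting $G\in\och=\noflaw\cap\och$ — no, more carefully, contradicting that $C$ is a \emph{prehole} unless the chord is forced short. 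The cleanest route is to reduce to flawlessness: a chord $vw$ at $C$-distance $\ge4$ lets me exhibit a stirrer in some $G[\LR]$, just as the $\nu$-odd case did in Lemma~\ref{lem:quasi30}, so $\dist_C(v,w)=2$ for every chord.

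\textbf{Next}, with every chord at $C$-distance exactly $2$, I would bound their number. A vertex $z\in L$ at $C$-distance $2$ from $v$ must be one of the two vertices $p_{i-2},p_{i+2}$ (writing $v=p_i$), so a priori $v$ could have up to two such chords, giving $2+2=4$ neighbours. To reach the stated bound of \emph{five} neighbours (three chords) I would reconsider: the statement allows three chords with $\dist_C(w,x)=\dist_C(x,y)=2$, so $w,x,y$ are consecutive even-distance vertices $p_{i-2},p_i$ — no. Here I must be careful, and I expect \textbf{this counting to be the main obstacle}: I would set $w,x,y$ to be chord-endpoints with $x$ the ``middle'' one and show, via the alternating bipartition extended through $v$, that $v$ adjacent to non-consecutive (in the $L$-ordering) vertices forces an even hole in $G[\LR]$, so the chord-endpoints must be consecutive members of $L$ along $C$, at mutual $C$-distance $2$; a fourth such chord then yields a stirrer exactly as in Fig.~\ref{fig:dist1c}. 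The verification that the three-chord configuration does not itself contain a hidden prehole or flaw, and that a fourth chord always produces a flaw, is the delicate case-analysis I would carry out by the same extend-the-bipartition-and-find-an-even-hole-or-stirrer technique used throughout Lemmas~\ref{lem:quasi10}–\ref{lem:quasi30}, and I would lean on Lemma~\ref{lem:quasi10} to handle the wrap-around subpaths when $\ell$ is small.
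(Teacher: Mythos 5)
Your proposal founders on a misreading of the statement, and the step you build on it cannot be repaired along the lines you suggest. The distance-$2$ conclusion of Lemma~\ref{lem:quasi35} concerns the \emph{mutual} distances of the chord endpoints $w,x,y$, not their distances from $v$: the chords at $v$ may run to consecutive even-position vertices lying far from $v$ on $C$. Your first step instead asserts $\dist_C(v,w)=2$ for every chord $vw$, which is false --- the paper's own examples refute it (in Fig.~\ref{fig:prehole} the degree-$5$ vertex has a chord to its antipodal vertex, at $C$-distance $8$, and the family of Fig.~\ref{prehole:example} has chords at arbitrarily large $C$-distance), and, as you yourself half-notice in your second paragraph, it would cap $v$ at two chords and four neighbours, contradicting the bound of five that the lemma attains. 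The justification you offer also fails for exactly the reason you spotted and then waved away: all chords of a prehole are even and $v\in C$, so both cycles formed by a chord and an arc of $C$ are odd, and the chord itself is invisible in the certifying bipartition (both ends lie in $L$); hence neither an even-hole argument nor the stirrer argument of Lemma~\ref{lem:quasi30} transfers. Note moreover that an even hole in some $G[\LR]$ is just a prehole of $G$, which $\noflaw$ does not forbid ($C$ itself is one), so the ``even hole in $G[\LR]$'' contradictions you invoke prove nothing under the lemma's hypothesis.

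The idea you are missing is the one the paper's short proof turns on: \emph{flip $v$ to the other side of the bipartition}. Supposing $v$ has at least four chords, their endpoints sit at even positions and (the paper argues, invoking the absence of even holes in $G$) must be consecutive, say $c_0,c_2,c_4,c_6$ on a subpath $(c_0,c_1,\ldots,c_7)$ of $C$; moving $v$ from $L$ to $R$ puts exactly the chords at $v$ into the cut and removes $v$'s two $C$-edges from it, whereupon $\{c_0,v,c_3,c_4,c_5,c_6,c_7\}$ induces an armchair in the new $G[\LR]$ (the quadrangle $v c_4 c_5 c_6$ with pendants $c_0$, $c_3$, $c_7$), contradicting $G\in\noflaw$; see Fig.~\ref{prehole:fig20}. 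Your closing hope --- ``a fourth chord yields a stirrer as in Fig.~\ref{fig:dist1c}'' --- can in fact be realised (after the flip, $\{v,c_0,c_1,c_2,c_3,c_4,c_6\}$ induces a stirrer), but \emph{only} in the flipped bipartition: in the certifying bipartition the chords are not edges of $G[\LR]$ at all, so no flaw can ever involve them, and ``extending the alternating bipartition through $v$'' is vacuous since $v$ already belongs to it. (In fairness, the consecutiveness of the chord endpoints is justified in the paper by assuming $G$ has no even holes, a hypothesis absent from the lemma's statement, so your difficulty with that particular step mirrors a real terseness of the source; but the false distance claim and the missing flip are defects of the proposal, not of the paper.)
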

\begin{proof}
  Otherwise, $v$ must have at least four chords. These must be even chords to $c_0,c_2,c_4,c_6$, where $P=(c_0,c_1,\ldots,c_6,c_7)$ is a subpath of $C$, since $C$ is a prehole and $G$ has no even holes. We now move $v$ from $L$ to $R$. The only new edges which appear in $G[\LR]$ are those adjacent to $v$. But now $c_0,v,c_3,c_4,c_5,c_6,c_7$ induce an armchair in $G[\LR]$, contradicting $G\in\noflaw$.   See Fig.~\ref{prehole:fig20}.
\end{proof}
\begin{figure}[H]
\tikzset{every node/.style={circle,draw,fill=none,inner sep=0pt,minimum size=1.25mm}}
\centerline{\begin{tikzpicture}[xscale=0.4,yscale=0.5,font=\small]
\draw
(-2,1) node[label=above:$c_0$] (0) {} (0,1) node[b,label=above:$c_1$]  (1) {}  (2,1)  node[label=above:$c_2$] (2) {} (4,1)  node[b,label=above:$c_3$] (3) {}  (6,1)  node[label=above:$c_4$] (4) {}
(8,1)  node[b,label=above:$c_5$] (5) {}  (10,1)  node[label=above:$c_6$] (6) {}   (12,1)  node[b,label=above:$c_7$] (7) {}
(3,-3.5)  node[label=below:$v$] (v) {} (1,-3.5)  ;
\draw (0)--(1)--(2)--(3)--(4)--(5)--(6)--(7) ;
\draw[densely dashed] (v)--(0) (v)--(2) (v)--(4) (v)--(6) ;
\draw[thin] (0)edge[bend right=40](v) (7)edge[bend left=25](v) ;
\end{tikzpicture}\hspace*{0.75in}
\begin{tikzpicture}[xscale=0.4,yscale=0.5,font=\small]
\draw
(-2,1) node[label=above:$c_0$] (0) {} (0,1) node[b,label=above:$c_1$]  (1) {}  (2,1)  node[label=above:$c_2$] (2) {} (4,1)  node[b,label=above:$c_3$] (3) {}  (6,1)  node[label=above:$c_4$] (4) {}
(8,1)  node[b,label=above:$c_5$] (5) {}  (10,1)  node[label=above:$c_6$] (6) {}  {} (12,1)  node[b,label=above:$c_7$] (7) {} (3,-3.5)  node[b,label=below:$v$] (v) {} ;
\draw[very thin] (0)--(1)--(2)--(3) (v)--(2)  ;
\draw[thick] (v)--(0) (v)--(4) (v)--(6) (3)--(4)--(5)--(6)--(7) ;
\draw[very thin] (0)edge[bend right=40](v) (7)edge[bend left=25](v) ;
\end{tikzpicture}}
\caption{An armchair}\label{prehole:fig20}
\end{figure}
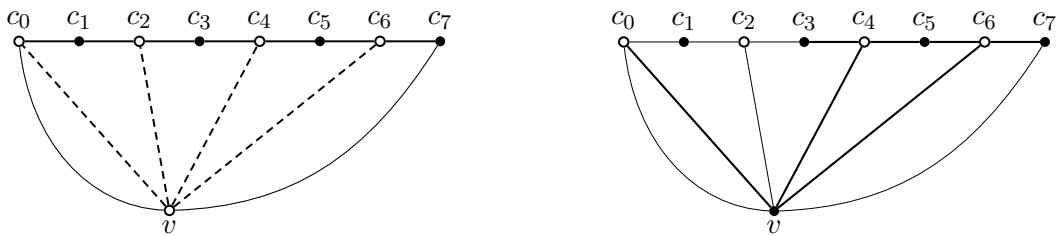
The degree bound of Lemma~\ref{lem:quasi35} is tight. See Fig.~\ref{fig:prehole}.
\begin{figure}[H]
\tikzset{every node/.style={circle,draw,fill=none,inner sep=0pt,minimum size=1.25mm}}
\centerline{\begin{tikzpicture}[xscale=1,yscale=0.5,font=\small]
\draw (0,0) node[b] (0) {}
 (1,1) node[w] (1) {} (1,-1) node[w] (1') {}
 (2,1) node[b] (2) {} (2,-1) node[b] (2') {}
 (3,1) node[w] (3) {} (3,-1) node[w] (3') {}
 (4,1) node[b] (4) {} (4,-1) node[b] (4') {}
 (5,1) node[w] (5) {} (5,-1) node[w] (5') {}
 (6,1) node[b] (6) {} (6,-1) node[b] (6') {}
 (7,1) node[w] (7) {} (7,-1) node[w] (7') {}
 (8,0) node[b] (8) {} ;
 \draw (3')--(2')--(1')--(0)--(1)--(2)--(3) (5')--(6')--(7')--(8)--(7)--(6)--(5)  (3)--(5) (3')--(5') ;
 \draw (1)--(1') (2)--(2') (3)--(3') (4)--(4') (5)--(5') (6)--(6') (7)--(7') (2')--(4)--(6');
\end{tikzpicture}}
\caption{A prehole with a vertex of degree 5}\label{fig:prehole}
\end{figure}
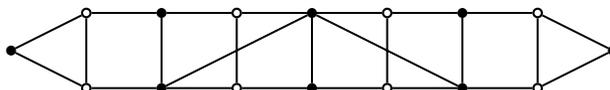

\begin{lemma}\label{lem:quasi40}
 Let $C$ be an odd hole in $G\in\noflaw$. Suppose that $v\notin C$ has a neighbour $x\notin C$. Then there are vertices $w,y\in C$ such that $(v,x,y,w)$ is a quadrangle. \end{lemma}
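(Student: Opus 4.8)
The plan is to argue by contradiction: assuming no such quadrangle exists, I would exhibit a flaw, contradicting $G\in\noflaw$. Since the conclusion only concerns the component of $C$, I may assume $G$ is connected, so Lemma~\ref{lem:quasi20} gives that $C$ dominates $G$; in particular $v$ and $x$ each have a neighbour on $C$. The target is convenient to rephrase: because $C$ is chordless, a quadrangle $(v,x,y,w)$ with $w,y\in C$ exists precisely when some edge $wy$ of $C$ has $w\in N(v)$ and $y\in N(x)$. So I assume throughout that no edge of $C$ joins $N(v)$ to $N(x)$. I would then pick $w\in N(v)\cap C$ and $y\in N(x)\cap C$ minimising $d=\dist_C(w,y)$ (the assumption forces $d\neq1$), and split into two cases according to whether $v$ or $x$ has two neighbours on $C$ at $\dist_C$ exactly $2$.

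In the first case, say $v$ is adjacent to $p,q\in C$ with $\dist_C(p,q)=2$ and common $C$-neighbour $r$; let $p',q'$ be the other $C$-neighbours of $p,q$. I colour $v,r,p',q'$ black and $p,q,x$ white. The edges $vp,vq,pr,qr$ form a $4$-cycle (with the possible chord $vr$ hidden as a monochromatic edge), and together with $vx$, $pp'$, $qq'$ they realise a pre-armchair, provided no further edge crosses the colour classes. The only candidates are $xr$, $xp'$, $xq'$, and each would at once yield one of the quadrangles $(v,x,r,p)$, $(v,x,p',p)$, $(v,x,q',q)$, contrary to assumption. Hence a pre-armchair occurs, a flaw.

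In the remaining case neither $v$ nor $x$ has two neighbours at $C$-distance $2$. Writing $C$ as $\dots,a_{-2},a_{-1},w=a_0,a_1,a_2,\dots$ with $y$ further along in the positive direction, I build a pre-tripod centred at $w$ with legs $w\,a_1\,a_2$, $\;w\,a_{-1}\,a_{-2}$ and $w\,v\,x$, colouring $w,a_2,a_{-2},x$ black and $a_1,a_{-1},v$ white (when $d=0$, so $w=y$, the edge $wx$ is then monochromatic and invisible). The four edges that could spoil the tripod are $xa_1$, $xa_{-1}$, $va_2$, $va_{-2}$: the first two would give quadrangles $(v,x,a_1,w)$, $(v,x,a_{-1},w)$ through the $C$-edges $wa_{\pm1}$; $va_2$ is excluded by minimality of $d$ (it would produce the closer pair $(a_2,y)$, and when $d=2$ a common neighbour, forcing $d=0$); and $va_{-2}$ would place us in the first case. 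Every remaining pair among the seven vertices is either non-adjacent on the chordless $C$ or monochromatic, so the cut graph is exactly a tripod, again a flaw.

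The main obstacle is the bookkeeping in the second and third steps. Because the hypotheses assume only $\noflaw$ and not $\ehf$, the vertices $v$ and $x$ may have many neighbours on $C$, so I cannot rely on their neighbourhoods being small or well separated. The whole argument hinges on choosing each colouring so that every potentially spoiling edge is forced to be either monochromatic (hence invisible in $G[\LR]$), or a direct source of the desired quadrangle, or a witness that throws the configuration into the complementary case; checking this trichotomy for all incident edges, and in particular confirming that the seven vertices are genuinely distinct for every hole length $\ell\ge5$, is the delicate part of the proof.
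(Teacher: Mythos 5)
Your proposal is correct, and at its core it \emph{is} the paper's proof: your Case A is exactly the paper's armchair configuration (the $4$-cycle $v,p,r,q$ with pendants $x,p',q'$, where each spoiler edge $xr,xp',xq'$ immediately yields the quadrangle), and your Case B is exactly the paper's tripod configuration (centre $w$, two legs along $C$, third leg $w,v,x$, with spoilers $xa_{\pm1}$ yielding the quadrangle), with the same colourings. Where you genuinely diverge is in how the case split is justified, and your version is tighter. The paper splits on whether $v$ has exactly one neighbour on $C$ or two neighbours at distance $2$, and both for this and for upgrading the $4$-cycle to a chordless one it invokes Lemma~\ref{lem:quasi30} --- whose hypotheses ($G\in\ehf$, $|C|\ge 7$) are \emph{not} among the hypotheses of Lemma~\ref{lem:quasi40}. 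Your split --- (A) some vertex of $\{v,x\}$ has two neighbours at $C$-distance $2$, (B) none does --- is exhaustive by definition and never needs Lemma~\ref{lem:quasi30}: the edges from $v$ to $a_{\pm 1}$ that the paper would rule out via that lemma are simply monochromatic in your colouring, so they cannot spoil the tripod. (Relatedly, you read ``quadrangle'' as a plain $4$-cycle, which agrees with the paper's definition of a cycle; the paper instead wants an induced $4$-cycle and handles chords by yet another appeal to Lemma~\ref{lem:quasi30}.) Two blemishes, neither fatal. First, your stated reason for excluding $va_2$ in Case B --- minimality of $d$ --- breaks down in the subcase $d=0$ (there $\dist_C(a_2,y)=2>0$, so minimality is not violated); but $va_2$, exactly like $va_{-2}$, would give $v$ the two neighbours $a_0,a_2$ at $C$-distance $2$ and so is already excluded by the Case B hypothesis, which is the argument you should cite for both. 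Second, the reduction ``I may assume $G$ is connected'' should really be the assumption that $v$ and $x$ lie in the component of $C$ (needed so that Lemma~\ref{lem:quasi20} gives them neighbours on $C$); the paper's proof makes the same implicit assumption through the same lemma, so you are no worse off than the original.
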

\begin{proof}

From Lemmas~\ref{lem:quasi20} and~\ref{lem:quasi30}, $v$ and $x$ have at least one, and at most three, neighbours on $C$.  If either has two neighbours then these are at distance 2 on $C$. Observe that it suffices to prove that $w,y$ exist so that $vxyw$ is simply a 4-cycle. If it is not a quadrangle, either $v$ or $x$ has two adjacent neighbours on $C$, contradicting Lemma~\ref{lem:quasi30}.

Suppose $v$ has exactly one neighbour $c$ on $C$. Then we have the first configuration shown in Fig.~\ref{fig:dist1d}, with the bipartition indicated by the black and white vertices on $G[S]$, where $S=\{a,b,c,d,e,v,x\}$. This contains a tripod, unless $x$ is adjacent to one of either $b,d$ or both. If is adjacent to $b$, then the Lemma follows with $w=c$ and $y=b$. If is adjacent to $d$, then the Lemma follows with $w=c$ and $y=d$.

\begin{figure}[ht]
\centering{
\begin{tikzpicture}[line width=0.5pt,minimum size=3pt,inner sep=1.5pt,scale=0.9,font=\small]
\draw (0,0) node[circle,draw,b,label=below:$a$] (1){} (1,0.5) node[circle,draw,w,label=below:$b$] (2){}(2,0.75) node[circle,draw,b,label=below:$c$] (3){} (3,0.5) node[circle,draw,w,label=below:$d$] (4){} (4,0) node[circle,draw,b,label=below:$e$] (5){};
\draw (1)--(2)--(3)--(4)--(5);
\draw (2,1.8) node[circle,draw,w,label=above:$v$] (6){}
(3.1,1.8) node[circle,draw,b,label=above:$x$] (7){} (6)--(3) (6)--(7) ;
\draw (2,-0.25) node {$C$}; \draw (-0.5,-0.5)--(1) (4.5,-0.5)--(5)
(7)edge[densely dotted,bend right=15](2)
(7)edge[densely dotted](4);
\end{tikzpicture}\hspace*{2cm}
\begin{tikzpicture}[line width=0.5pt,minimum size=3pt,inner sep=1.5pt,scale=0.9,font=\small]
\draw (0,0) node[circle,draw,b,label=below:$a$] (1){} (1,0.5) node[circle,draw,w,label=below:$b$] (2){}(2,0.75) node[circle,draw,b,label=below:$c$] (3){} (3,0.5) node[circle,draw,w,label=below:$d$] (4){} (4,0) node[circle,draw,b,label=below:$e$] (5){};
\draw (1)--(2)--(3)--(4)--(5);
\draw (2,1.8) node[circle,draw,b,label=above:$v$] (6){}
(3.1,1.8) node[circle,draw,w,label=above:$x$] (7){} (2)--(6)--(4) (6)--(7) ;
\draw (2,-0.25) node {$C$}; \draw (-0.5,-0.5)--(1) (4.5,-0.5)--(5)
(7)edge[densely dotted,bend right=15](1)
(7)edge[densely dotted](3)
(7)edge[densely dotted](5);
\end{tikzpicture}
}
\caption{}\label{fig:dist1d}
\end{figure}

Now suppose $v$ has two neighbours $b,d$ on $C$. Then, by Lemma~\ref{lem:quasi30},we have the second configuration in Fig.~\ref{fig:dist1d}, with the bipartition on $G[S]$ shown in black and white. If $v$ has a third neighbour $z$ on $C$, then $z\notin S$ so we may ignore it.  Then $G[S]$ contains an armchair, unless $x$ is adjacent to at least one of the black vertices $a,c,e$. If $xa\in E$, we take $w=b,y=a$, if $xc\in E$,  we take $w=b,y=c$, and if $xe\in E$, we take $w=d,y=e$.
\end{proof}

\subsection{Determining a short odd hole}\label{ss:short}

We can test whether $G$ contains a hole in time $O(|E|^2)$, using the algorithm of~\cite{NikPal07}. Moreover, the algorithm returns a hole if one exists. If the hole is even, we can conclude $G\notin\qmon$. If $G\in\noflaw$, we will show that it has a well-defined structure, so it is possible that there is a faster algorithm than~\cite{NikPal07} for detecting a hole. However, we will not pursue this here.

We begin with a simple result.
\begin{lemma}\label{lem:oddcycle}
If $C$ is an odd cycle in a graph $G$, there is a triangle or an odd hole $C'$ in $G$.
\end{lemma}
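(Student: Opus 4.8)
The plan is to prove that any odd cycle $C$ in $G$ contains, within the subgraph it spans together with its chords, either a triangle or a chordless odd cycle of length at least five (an odd hole). I would proceed by induction on the length $\ell$ of $C$. The base case $\ell=3$ is immediate: $C$ is itself a triangle. For the inductive step, assume the claim holds for all odd cycles shorter than $\ell$, and let $C=(p_1,\dots,p_\ell)$ be an odd cycle of length $\ell\ge 5$.

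First I would check whether $C$ is chordless. If it is, then $C$ is itself an odd hole (since $\ell\ge 5$ and $\ell$ is odd), and we are done. Otherwise $C$ has a chord $e=p_ip_j$. The key observation is that any chord splits $C$ into two cycles $C_1$ and $C_2$, both sharing the edge $e$, whose lengths sum to $\ell+2$. Since $\ell$ is odd, $\ell+2$ is odd, so exactly one of $|C_1|,|C_2|$ is odd and the other is even. Let $C'$ be the odd one; then $C'$ is an odd cycle strictly shorter than $C$ (because the chord $e$ is a genuine shortcut, each of $C_1,C_2$ has length strictly less than $\ell$). Since $C'$ is an odd cycle in $G$ of length smaller than $\ell$, the induction hypothesis applies and yields a triangle or an odd hole in $G$, completing the step.

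The main thing to get right is the length bookkeeping for the split: a chord $p_ip_j$ with the two arcs of $C$ between its endpoints having $a$ and $b$ internal edges (so $a+b=\ell$, the number of edges of $C$) produces cycles of length $a+1$ and $b+1$, summing to $\ell+2$, and each is at least $3$ and at most $\ell-1$. The parity argument — that exactly one of the two summands $a+1,b+1$ is odd because their sum $\ell+2$ is odd — is what guarantees a strictly shorter \emph{odd} cycle on which to recurse. I would make sure to note that this shorter odd cycle is genuinely a cycle in $G$, using that $e\in E[G]$ and both arcs lie on $C\subseteq G$.

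I do not expect a serious obstacle here; the only subtlety is to confirm that a chord always produces two cycles each of length at least $3$ (which holds because a chord joins two non-consecutive vertices of $C$, so each arc contains at least one internal vertex) and strictly less than $\ell$, so the induction is well-founded. An alternative, non-inductive phrasing would be to take $C$ to be a \emph{shortest} odd cycle in $G$ and argue that it must be chordless or a triangle directly — if a shortest odd cycle had a chord, the chord would create a shorter odd cycle, contradicting minimality — but the inductive formulation above is cleaner to write and avoids needing to first invoke the existence of a minimal odd cycle.
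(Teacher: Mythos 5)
Your proof is correct and follows essentially the same route as the paper: induction on the cycle length, using a chord to split the odd cycle into a strictly shorter odd cycle and an even cycle, and recursing on the odd one. The extra parity bookkeeping you supply (lengths $a+1$ and $b+1$ summing to $\ell+2$) is exactly the detail the paper leaves implicit.
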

\begin{proof}
The claim is clearly true if $|C|\leq 3$. Otherwise, assume by induction that it is true for all cycles shorter than $C$. If $C$ is not already a hole, it has a chord that divides it into a smaller odd cycle $C_1$, and an even cycle $C'_1$. The lemma now follows by induction on $C_1$.
\end{proof}
The proof of Lemma~\ref{lem:oddcycle} can easily be turned into an efficient algorithm to find $C'$.
Let $C$ be an odd hole in a  graph $G$. Then $C$ will be called a \emph{short} odd hole in $G$ if $\dist(v,w)=\dist_C(v,w)$ for all pairs $v,w\in C$.
\begin{lemma}\label{lem:shorthole}
If $G$ is a triangle-free graph containing an odd hole $C$, then $G$ contains a short odd hole.
\end{lemma}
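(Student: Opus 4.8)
The plan is to let $C$ be an odd hole of \emph{minimum} length in $G$ — one exists since $G$ contains an odd hole — and to show that this particular $C$ must be short. Write $\ell=|C|$, which is odd.

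I would argue by contradiction. Suppose $C$ is not short, so there are $v,w\in C$ with $\dist(v,w)<\dist_C(v,w)$. Let $P$ be a shortest \Path{v}{w} in $G$, of length $p=\dist(v,w)$. The cycle $C$ is split by $v$ and $w$ into two arcs, of lengths $a$ and $\ell-a$, and $\dist_C(v,w)=\min(a,\ell-a)$; hence $p<a$ and $p<\ell-a$.

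Next I would combine $P$ with each arc in turn to form two closed walks $W_1$ and $W_2$, of lengths $p+a$ and $p+(\ell-a)$ respectively. Their lengths sum to $2p+\ell$, which is odd, so exactly one of $W_1,W_2$ is an odd closed walk; moreover the bounds $p<a$ and $p<\ell-a$ force both walks to have length strictly less than $\ell$. Since any odd closed walk contains an odd cycle, $G$ has an odd cycle of length less than $\ell$. Lemma~\ref{lem:oddcycle} then yields a triangle or an odd hole of length at most that of this cycle, hence less than $\ell$; as $G$ is triangle-free it must be an odd hole, contradicting the minimality of $C$. Therefore $C$ is short, which proves the lemma.

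The one subtlety to watch is that the shortcut $P$ may meet $C$ at interior vertices, so $W_1$ and $W_2$ need not be simple cycles. I would sidestep this entirely by reasoning only about the \emph{parity} and \emph{length} of the two closed walks, extracting a genuine odd cycle via the standard fact that every odd closed walk contains one, and then appealing to Lemma~\ref{lem:oddcycle} to descend to a shorter odd hole. The crux of the argument — and essentially its only content — is the parity count $|W_1|+|W_2|=2p+\ell$ together with the bound $p<\min(a,\ell-a)$, which simultaneously guarantees that one of the combined walks is odd and that both are shorter than $C$.
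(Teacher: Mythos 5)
Your proof is correct, and its core is the same as the paper's: split $C$ by a shortcut $P$ of length $p<\min(a,\ell-a)$, use the parity count $(p+a)+(p+\ell-a)=2p+\ell$ to get an odd piece shorter than $\ell$, and descend to an odd hole via Lemma~\ref{lem:oddcycle} (both you and the paper use the slightly stronger reading of that lemma, implicit in its proof, that the hole lies inside the given odd cycle and so is no longer than it). You differ in two organisational respects. First, the paper does not start from a minimum-length odd hole; it starts from an arbitrary odd hole and iterates the descent, arguing termination because the hole length strictly decreases, whereas you invoke extremality once and finish with a single contradiction. Second, the paper insists that the shortcut $P$ be internally disjoint from $C$ (``we may choose a different pair $v,w$ for which this is true''), so that the two pieces are genuine cycles; you instead allow arbitrary intersections and work with closed walks, quoting the standard fact that an odd closed walk contains an odd cycle. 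Your treatment is arguably cleaner on this point, since the paper's re-choice of the pair $v,w$ is asserted rather than justified (it can be repaired by taking a pair minimising $\dist(v,w)$ among all shortcut pairs). What the paper's formulation buys is explicitly algorithmic content: the iterative descent is exactly the procedure invoked later (the remark following the lemma, and the reduction to a short hole in the recognition algorithm of section~\ref{sec:algorithm}). Your extremal argument proves existence but, as stated, does not directly supply that procedure, although unwinding the contradiction yields the same descent.
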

\begin{proof}
  Clearly $\dist(v,w)\leq \dist_C(v,w)$ for all pairs $v,w\in C$. Thus,  suppose that $C$ is an odd hole in $G$, but there is a pair $v,w$ such that $\dist(v,w)=d< \ell=\dist_C(v,w)$, and let $\ell'=|C|-\ell\geq\ell> d$. Thus one of $\ell,\ell'$ is odd and the other even. See Fig.~\ref{fig:short}.
  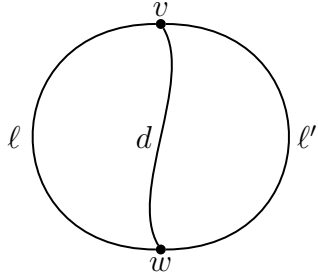
\begin{figure}[H]
\centering{
\begin{tikzpicture}[line width=0.75pt,minimum size=2pt,inner sep=0.9pt,scale=1.5]
\draw (2,0) node[b,label=below:$w$] (w) {} (2,2) node[b,label=above:$v$] (v) {};
\draw (w) .. controls (0.5,0) and (0.5,2) .. (v)  (w) .. controls (3.5,0) and (3.5,2) .. (v)
(w) .. controls (1.7,0.5) and (2.3,1.5) .. (v);
\draw (0.7,1) node {$\ell$} (3.3,1) node {$\ell'$}(1.85,1) node {$d$} ;
\end{tikzpicture}
}
\caption{Odd hole $C$ with a shorter $vw$ path}\label{fig:short}
\end{figure}
We may assume that the shortest $vw$ path $P$ has no internal vertex in common with $C$. Otherwise, we may choose a different pair $v,w$ for which this is true. Thus we can form two cycles $C_1,C_2$ of lengths $\ell+d,\ell'+d$. Now one of $\ell+d,\ell'+d$ is odd and the other even. Also $\max\{\ell+d,\ell'+d\}=\ell'+d=|C|-\ell+d<|C|$. Thus we have an odd cycle, $C_1$ say, with $|C_1|<|C|$. Now, by Lemmas~\ref{lem:triangle10} and~\ref{lem:oddcycle}, $C_1$ contains an odd hole $H$, and we have $|H|\leq |C_1|<|C|$. We can now check whether $H$ is a short hole. This process must clearly terminate with a short hole, since the hole becomes progressively shorter.
\end{proof}
Note that the proof of Lemma~\ref{lem:shorthole} gives  an efficient algorithm for finding a short odd hole $H$, given any odd hole $C$. Clearly the shortest hole in $G$ is a short hole, but the converse need not be true in general, even for quasimonotone graphs. See Fig.~\ref{fig:twoholes}, which has a short 5-hole and a short 7-hole.

\begin{figure}[H]
  \centering
  \begin{tikzpicture}[
                      scale=0.9]
    \foreach \x [evaluate = \x as \y using \x*360/7+90/7] in {0,1,...,7} \node[b] (\x) at (\y:1.67) {};
    \draw (1)--(2)--(3)--(4)--(5)--(6)--(7)--(1) ;
    \foreach \x [evaluate = \x as \y using (\x-8)*72-18] in {8,9,10,11} \node[b] (\x) at (\y:1) {};
    \draw (8)--(9)--(10)--(11) ;
    \draw (11)--(5)--(8) (0)--(8) (1)--(9) (2)--(10) (3)--(11) ;
  \end{tikzpicture}
  \caption{Short odd holes of unequal size in a quasimonotone graph}\label{fig:twoholes}
\end{figure}
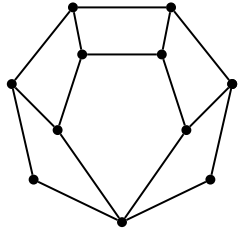

 We will also use the following simple corollary.
 \begin{corollary}\label{cor:diameter}
   If $C$ is a short odd hole in a graph $G$, $\diam(G)\geq\diam(C)=(|C|-1)/2$.\qed
 \end{corollary}

 We can make similar definitions for preholes. Thus, if $C$ is a prehole,  $G'=G[C]$, and \LR\ is the alternating bipartition of $C$, then $G'[\LR]$ contains no edge other than those of $C$. A \emph{minimal} prehole $C$ is such that $G[C]$ contains no prehole with fewer than $|C|$ vertices. Clearly, any graph which contains a prehole contains a minimal prehole.

\section{Flawless graphs containing a long hole}\label{sec:long}
\subsection{Triangles}
\begin{lemma}\label{lem:triangle10}
Let $G$ be a quasimonotone graph containing an odd hole $C$ of size at least 7. Then $G$ contains no triangle that has a vertex in $C$.
\end{lemma}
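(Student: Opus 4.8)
The plan is to argue by contradiction: assume $G$ contains a triangle $T$ meeting $C$, and extract one of the forbidden configurations. Since $G$ is quasimonotone we have $G\in\noflaw\cap\ehf$, so the structural Lemmas~\ref{lem:quasi30} and~\ref{lem:quasi40} are available. First I would reduce to the generic case in which exactly one vertex of $T$ lies on $C$. As $C$ is a hole it is chordless, so any edge of $T$ joining two vertices of $C$ must be an edge of $C$, i.e.\ join consecutive vertices; and $C$, being an induced cycle of length at least $7$, contains no triangle, so $T$ cannot have all three vertices on $C$. If $T$ had two vertices $a,b\in C$ (necessarily consecutive) and a third $c\notin C$, then $c$ would have two neighbours on $C$ at $C$-distance $1$, contradicting Lemma~\ref{lem:quasi30}. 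Hence we may write $T=\{a,b,c\}$ with $a\in C$ and $b,c\notin C$.

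Next I would apply Lemma~\ref{lem:quasi40} to $v=b$, $x=c$ (note $b\sim c$ and both lie off $C$) to obtain a quadrangle $(b,c,y,w)$ with $w,y\in C$. Thus $b\sim w$, $c\sim y$ and $w\sim y$; since the quadrangle is chordless, $b\not\sim y$ and $c\not\sim w$, which forces $w,y\neq a$ (because $a$ is adjacent to both $b$ and $c$). Now $a$ and $w$ are two neighbours of $b$ on $C$, and $a,y$ are two neighbours of $c$ on $C$, so Lemma~\ref{lem:quasi30} gives $\dist_C(a,w),\dist_C(a,y)\in\{2,4\}$; in particular both are even. The crucial observation is that $w$ and $y$ are \emph{adjacent} on $C$, so the values $\dist_C(a,w)$ and $\dist_C(a,y)$ differ by exactly $1$, unless $a$ is the unique vertex equidistant from the edge $wy$, which on the odd cycle $C$ sits at distance $(|C|-1)/2$ from both endpoints. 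Two even numbers cannot differ by $1$, so $a$ must be that equidistant vertex, whence $\dist_C(a,w)=\dist_C(a,y)=(|C|-1)/2\in\{2,4\}$. As $|C|\ge 7$ is odd, the only surviving possibility is $|C|=9$ with both distances equal to $4$.

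It remains to dispose of this single residual configuration, which I expect to be the main obstacle precisely because it passes the parity test. Writing $C=(v_0,\dots,v_8)$ with $a=v_0$, the edge $wy$ is the one opposite $v_0$, so $\{w,y\}=\{v_4,v_5\}$; say $w=v_4$, $y=v_5$. Then the three-neighbour case of Lemma~\ref{lem:quasi30}, whose extreme pair is at distance $4$, forces the middle neighbours, giving $b\sim v_0,v_2,v_4$ and $c\sim v_0,v_5,v_7$. One checks that this configuration contains no even hole (so $\ehf$ alone does not finish the job); instead I would exhibit a flaw directly. On the seven vertices $\{b,c,v_0,v_3,v_4,v_5,v_7\}$ take the bipartition $L=\{b,v_3,v_5,v_7\}$, $R=\{v_0,v_4,c\}$: the cut consists exactly of the edges $bv_0,bv_4,bc,v_3v_4,v_4v_5,v_5c,v_7c$, which form the $4$-cycle $b\,v_4\,v_5\,c$ with pendants $v_0$ on $b$, $v_3$ on $v_4$ and $v_7$ on $c$, i.e.\ an armchair; the only within-part edge, $cv_0$, disappears in the bipartite graph. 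This pre-armchair contradicts $G\in\noflaw$, completing the proof.

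The delicate point throughout is that Lemma~\ref{lem:quasi30} controls neighbourhoods on $C$ only up to $C$-distance, so the argument must combine the quadrangle supplied by Lemma~\ref{lem:quasi40} with a careful parity/antipodality analysis of $w$, $y$ and $a$, and then settle the lone even-hole-free exception at $|C|=9$ by producing the explicit pre-armchair above.
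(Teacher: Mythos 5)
Your reduction to the one-vertex case is sound, and for $|C|\ge 9$ the quadrangle-plus-parity argument works (the residual $|C|=9$ configuration and the explicit pre-armchair you exhibit there are correct). The approach is genuinely different from the paper's, which never uses Lemma~\ref{lem:quasi40}: the paper argues directly by cases on how many triangle vertices lie on $C$ and on how many extra neighbours the off-hole vertices have, reading off a tripod, stirrer or armchair in each case. However, your proof has a genuine gap at $|C|=7$, and it occurs exactly at the step you flag as crucial.

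The claim ``Lemma~\ref{lem:quasi30} gives $\dist_C(a,w),\dist_C(a,y)\in\{2,4\}$'' is false when $|C|=7$. Lemma~\ref{lem:quasi30} only says that a vertex with three neighbours on $C$ has them at \emph{consecutive} distances $2$; the extreme pair is then at $C$-distance $\min(4,|C|-4)$, which equals $3$ when $|C|=7$. So on a $7$-hole the possible distances are $\{2,3\}$, not $\{2,4\}$, and both the parity step (two even numbers cannot differ by $1$) and the antipodality step ($(|C|-1)/2\in\{2,4\}$) collapse: the pairs $(2,3)$, $(3,2)$ and $(3,3)$ all survive. These cases are not vacuous. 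Writing $C=(v_0,\dots,v_6)$ and $a=v_0$: the configuration $b\sim v_0,v_2$, $c\sim v_0,v_3,v_5$ (with quadrangle $(b,c,v_3,v_2)$, distances $2$ and $3$) and the configuration $b\sim v_0,v_3,v_5$, $c\sim v_0,v_2,v_4$ (with quadrangle $(b,c,v_4,v_3)$, distances $3$ and $3$) are consistent with Lemmas~\ref{lem:quasi30} and~\ref{lem:quasi40}, with $C$ being chordless, and with the absence of even holes, so your argument simply terminates without a contradiction there. Note that these cases are not routine to finish: the first one does contain a flaw, but of a different kind and in a different place than your construction produces (a pre-stirrer, e.g.\ on $\{b,c,v_0,v_2,v_3,v_4,v_5\}$ with $L=\{b,v_0,v_3,v_5\}$, $R=\{c,v_2,v_4\}$, whose cut is the two quadrangles $(b,c,v_3,v_2)$ and $(c,v_3,v_4,v_5)$ sharing the edge $cv_3$ plus the pendant $v_0$ on $c$), while in the second, symmetric configuration the analogous constructions are all destroyed by the remaining chords. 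So the $7$-hole case needs a separate, substantive argument, not just a remark; as it stands the proof only establishes the lemma for $|C|\ge 9$.
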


\begin{proof}
  Since $C$ is an odd hole, there are only two cases.
\begin{enumerate}[label=(\roman*)]
  \item If the triangle has two vertices on $C$, then we have the situation of Fig.~\ref{fig:dist1a} on page \pageref{fig:dist1a}, which cannot occur.
  \item If the triangle $T=(v,w,x)$ has one vertex $v$ on $C$, then consider the graph $G'$ induced by $C\cup T$, so $C$ is the shortest hole in $G'$. There are two subcases.
    \begin{enumerate}[label=(\alph*)]
    \item If neither $w$ nor $x$ has another neighbour on $v$,  we have the situation of Fig.~\ref{fig:dist1} above, which cannot occur.
    \item If $w$ has another neighbour $z$ on $C$,  but $x$ does not, we have the situation in Fig.~\ref{fig:triangle20}. This is a pre-armchair, since the edge $xv$ is not in the bipartition shown. If $x$ has another neighbour on $C$, this must be either $y$ or $z$. Neither appears in the bipartition shown in Fig.~\ref{fig:triangle20}.\qedhere
    \end{enumerate}

    \begin{figure}[H]
      \centering{
        \begin{tikzpicture}[line width=0.5pt,minimum size=3pt,inner sep=1.25pt,font=\small]
          \draw  (-1,0) node[b,label=below:$a$] (1) {}
                 ( 0,0) node[w,label=below:$z$] (2) {}
                 ( 1,0) node[b]                 (3) {}
                 ( 2,0) node[w,label=below:$v$] (4) {}
                 ( 3,0) node[b]                 (5) {}
                 ( 4,0) node[w,label=below:$y$] (6) {};
          \draw (1)--(2)--(3)--(4)--(5)--(6);
          \draw (1.5,1) node[b,label=above:$w$] (8) {}
                (2.5,1) node[w,label=above:$x$] (7) {}
                (7)--(8)--(4)
                (4)edge[dashed](7)
                (6)edge[densely dotted](7)
                (2)edge[densely dotted](7)
                (8)--(2) ;
          \draw (1,-0.7) node {$C$};
        \end{tikzpicture}
      }
      \caption{}\label{fig:triangle20}
    \end{figure}
  \end{enumerate}
\end{proof}

However, if $C$ is a 5-hole, there are quasimonotone graphs which contain a triangle with one vertex in $C$. See Fig.~\ref{fig:triangle10}.
\begin{figure}[H]
\centering{
\begin{tikzpicture}[line width=0.5pt,minimum size=3pt,inner sep=1.1pt,xscale=0.8,yscale=0.8,font=\small]
\draw (0,0) node[b] (1) {} (-1.2,1) node[b] (2) {} (1.2,1) node[b] (3) {} (-1,2.5) node[b] (4) {} (1,2.5) node[b] (5) {} (0,1.5) node[b] (6) {} ;
\draw (1)--(2)--(4)--(6)--(5)--(3)--(1) (4)--(5) (6)--(1) ;
\end{tikzpicture}
\hspace*{1in}
\begin{tikzpicture}[line width=0.5pt,minimum size=3pt,inner sep=1.25pt,xscale=0.8,yscale=0.8,font=\small]
\draw (0,0) node[b] (1) {} (-1.2,1) node[b] (2) {} (1.2,1) node[b] (3) {} (-1,2.5) node[b] (4) {} (1,2.5) node[b] (5) {} ;
\draw (1)--(2)--(4)--(5)--(3)--(1) (4)--(5)  ;
\draw (1)--(4) node[b,pos=0.6] (6) {} ;
\draw (1)--(5) node[b,pos=0.6] (7) {} ;
\draw (6)--(7) ;
\end{tikzpicture}
\hspace*{1in}
\begin{tikzpicture}[line width=0.5pt,minimum size=3pt,inner sep=1.25pt,xscale=0.8,yscale=0.8,font=\small]
\draw (0,0) node[b] (1) {} (-1.2,1) node[b] (2) {} (1.2,1) node[b] (3) {} (-1,2.5) node[b] (4) {} (1,2.5) node[b] (5) {} (0,1.1) node[b] (1') {};
\draw (1)--(2)--(4)--(5)--(3)--(1) (4)--(5)  ;
\draw (1')--(4) node[b,pos=0.6] (6) {} ;
\draw (1')--(5) node[b,pos=0.6] (7) {} ;
\draw (6)--(7) (2)--(1')--(3) (4)--(7) (5)--(6) (7)--(1)--(6) ;
\end{tikzpicture}
}
\caption{}\label{fig:triangle10}
\end{figure}

\begin{lemma}\label{lem:triangle20}
Let $G$ be a quasimonotone graph containing an odd hole $C$ of size at least $7$. Then $G$ contains no triangle which is vertex-disjoint from $C$.
\end{lemma}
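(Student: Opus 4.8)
The plan is to argue by contradiction. Assume $T=(v,w,x)$ is a triangle with $V(T)\cap C=\varnothing$ and derive an impossibility. Throughout I would work in the connected graph containing $C$ (as in the surrounding long-hole development); this is not a convenience but essential, since Lemma~\ref{lem:quasi20} needs connectivity and, without it, a triangle sitting in a separate component would be vertex-disjoint from $C$ yet harmless. Granting connectivity, Lemma~\ref{lem:quasi20} makes $C$ dominating, so each of $v,w,x$ has at least one neighbour on $C$; write $A=N(v)\cap C$, $B=N(w)\cap C$, $D=N(x)\cap C$ for the three \emph{footprints}. Two facts constrain them. First, by Lemma~\ref{lem:triangle10} no vertex of $C$ can be adjacent to two vertices of $T$ (it would lie on a triangle meeting $C$), so $A,B,D$ are pairwise disjoint. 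Second, by Lemma~\ref{lem:quasi30} each footprint is a set of at most three vertices at pairwise even distance on $C$ (an arithmetic progression of step $2$ of span at most $4$); in particular each footprint is ``monochromatic'' for the parity of the position index along $C$.

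Next I would apply Lemma~\ref{lem:quasi40} to each edge of $T$. Applied to $vw$ it produces $a\in A$ and $b\in B$ that are adjacent on $C$; applied to $wx$ and to $xv$ it produces $C$-adjacent pairs in $B,D$ and in $D,A$. Hence the three footprints \emph{pairwise touch}: each pair contains two vertices at distance $1$ on $C$.

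The core is then a parity argument. Index $C$ cyclically and colour each vertex by the parity of its index; since $\ell=|C|$ is odd this $2$-colouring is consistent on every proper arc of $C$ but has a single ``seam''. By monochromaticity each footprint gets one colour, and because touching footprints meet in two $C$-adjacent, hence oppositely coloured, vertices, touching footprints get opposite colours. But $A,B,D$ touch pairwise, so their three colours would have to be pairwise distinct, which is impossible with two colours — unless $A\cup B\cup D$ crosses the seam, i.e.\ the footprints encircle $C$. Since each footprint has diameter at most $4$ and consecutive touching footprints are within distance $1$, the disjoint union $A\cup B\cup D$ lies in an arc of length at most a fixed constant (at most about $13$). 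Thus for all sufficiently long holes this arc is proper, the colouring is consistent, and we reach the desired contradiction with no further work.

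The residual difficulty — and the step I expect to be the main obstacle — is exactly the encircling case, which can occur only for a bounded range of small odd lengths (roughly $7\le\ell\le 13$). There the footprints genuinely wind once around $C$, the clean colouring argument breaks, and one must argue directly. For these finitely many local patterns I would exhibit a concrete forbidden subgraph: using the disjointness of the footprints, the three triangle edges, and short arcs of $C$, one builds either an even cycle of length at least six carrying no odd chord (a prehole, contradicting $G\in\och$), an induced even cycle (contradicting $G\in\ehf$), or one of the tripod, stirrer, armchair (contradicting $G\in\noflaw$). The delicate point is to organise this so it is uniform rather than a hole-by-hole enumeration; the winding constraint combined with the forced disjointness and step-$2$ structure of the footprints sharply restricts how $T$ can attach, and it is this rigidity that should make the bounded cases tractable.
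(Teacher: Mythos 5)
Your preparatory steps are sound, and your connectivity caveat is a legitimate catch: the disjoint union of a $7$-hole and a triangle is quasimonotone, so the statement does implicitly assume $T$ lies in the component of $C$ (as the paper's algorithm, which takes connected input, does). The footprints $A,B,D$ are indeed pairwise disjoint by Lemma~\ref{lem:triangle10}, have the step-$2$ structure by Lemma~\ref{lem:quasi30}, and pairwise touch by Lemma~\ref{lem:quasi40}, and your parity argument is correct once $\ell$ is large. But the proposal is not a proof, because the case you defer --- the ``encircling'' configurations --- is exactly where the content lies, and nothing you have established makes it tractable. Concretely, on a $9$-hole with vertices indexed $0,\dots,8$, the footprints $A=\{0,2\}$, $B=\{3,5\}$, $D=\{6,8\}$ are pairwise disjoint, have the step-$2$ structure, and pairwise touch (via the edges $23$, $56$ and $80$, the last one straddling your ``seam''). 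Every constraint you derived is satisfied by this configuration, so no contradiction can be extracted from those constraints alone; a genuinely new idea is needed for $7\le\ell\le 13$, and since the small odd lengths are the most important instances of the lemma, the gap is central rather than peripheral.

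The idea you are missing is the one the paper's proof is built on: a vertex of $T$ cannot have two neighbours on $C$ at all, so every footprint is a singleton. Indeed, if $v\in T$ had neighbours $a,b\in C$, then $\dist_C(a,b)=2$ by Lemma~\ref{lem:quasi30}, and rerouting $C$ through $v$ instead of the common $C$-neighbour of $a$ and $b$ yields another hole $C'$ with $|C'|=|C|\ge 7$ and $v\in C'$; now $T$ is a triangle meeting a long hole, contradicting Lemma~\ref{lem:triangle10}. With singleton footprints your own touching step already finishes the proof: Lemma~\ref{lem:quasi40} applied to the three edges of $T$ would force three pairwise $C$-adjacent vertices, i.e.\ a triangle inside a hole of length at least $7$, which is absurd --- no parity argument and no case analysis over small $\ell$ is needed. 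The paper itself concludes slightly differently but just as cheaply: knowing that $T$ has at most three attachment vertices on $C$, it takes a maximal run of $\nu\ge 2$ consecutive vertices of $C$ with no neighbour in $T$, bordered by neighbours of two distinct vertices of $T$, and observes that according to the parity of $\nu$ one obtains either an even hole or a prehole, contradicting quasimonotonicity. Either finish is fine; what cannot be skipped is the step that collapses the footprints to single vertices.
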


\begin{proof}
Suppose $T=(vwx)$ is such a triangle, and consider the subgraph $G'$ induced by $C\cup T$. By Lemma \ref{lem:quasi40}, $C$ is a shortest hole in $G'$, so the vertices of $T$ have degree at least one and at most two in $C$. Now $T$ is the only triangle in $G'$, since any other triangle would have a vertex in $C$, contradicting Lemma~\ref{lem:triangle10}. Thus all vertices of $C$ have degree at most one in $T$, since a vertex of degree two or more would induce a triangle, using an edge of $T$. Now suppose some vertex of $T$, $v$ say, has two neighbours $a,b$ in $C$, see Fig.~\ref{fig:triangle30}. Then $G'$ has a hole with $|C'|=|C|$ through $a,v$ and $b$, with $v\in C'$. Since $|C'|\geq 7$, this  contradicts Lemma~\ref{lem:triangle10}.

Thus $T$ has at most three neighbours in $C$. Hence there are at least four vertices in $C$ which have no neighbour in $T$. Let $\nu\geq 2$ be the maximum number of consecutive vertices in $C$ with no neighbour in $T$. Suppose these are bordered by vertices $a,b\in C$, where $wa,xb\in E$. See Fig.~\ref{fig:triangle30}. Thus, if $\nu$ is odd, there is a prehole through $a,w,v,x$ and $b$, and, if $\nu$ is even, there is an even hole through $a,w,x$ and $b$. See Fig.~\ref{fig:triangle30}.\qedhere
\begin{figure}[H]
\centering{
\begin{tikzpicture}[line width=0.5pt,minimum size=3pt,inner sep=1.25pt,xscale=0.8,yscale=0.8,font=\small]
\draw (0,0) node[b,label=above:$v$] (0') {} +(150:1) node[b,label=left:$w$] (1') {} +(210:1) node[b,label=left:$x$] (2') {} (0:2.6) node {$C$} ;
\draw (-75:2) node[w] (0) {}  (-50:2) node[b] (1) {}  (-25:2) node[w,label=right:$a$] (2) {}
(0:2) node[b] (3) {}  (25:2) node[w,label=right:$b$] (4) {}  (50:2) node[b] (5) {}  (75:2) node[w] (6) {} ;
\draw (0)--(1)--(2)--(3)--(4)--(5)--(6) (0')--(1')--(2')--(0') (0')--(2) (0')--(4) ;
\end{tikzpicture}
\hspace*{1in}
\begin{tikzpicture}[line width=0.5pt,minimum size=3pt,inner sep=1.25pt,xscale=0.8,yscale=0.8,font=\small]
\draw (-1,0) node[w,label=left:$v$] (0') {} +(-30:1) node[b,label=below:$w$] (1') {} +(30:1) node[w,label=above:$x$] (2') {}  (0:2.6) node {$C$} ;
\draw (-75:2) node[w] (0) {}  (-50:2) node[b] (1) {}  (-25:2) node[w,label=right:$a$] (2) {}
(0:2) node[b] (3) {}  (25:2) node[w] (4) {}  (50:2) node[b,label=right:$b$] (5) {}  (75:2) node[w] (6) {} ;
\draw (0)--(1)--(2)--(3)--(4)--(5)--(6) (0')--(1')--(2')--(0') (1')--(2) (2')--(5) ;
\end{tikzpicture}}
\caption{}\label{fig:triangle30}
\end{figure}
\end{proof}
Again, if $|C|=5$, it is possible to have a triangle which is vertex-disjoint from $C$.
Fig.~\ref{fig:triangle10} shows an example, but note that this also contains
triangles which share a vertex or an edge  with the 5-cycle. It is not difficult to show that, if $G$ contains a 5-cycle and a vertex-disjoint triangle, then $G$ must contain a triangle which shares at least one vertex with the 5-cycle. However, we will not prove this because we make no use of it here.

\subsection{Long odd holes}

\begin{lemma}\label{lem:nooddhole}
Let $C,C'$ be odd holes in a quasimonotone graph $G$ such that $C'\cap C\neq \es$, and $|C|,|C'|\geq 7$. Let $G'=G[(C'\cup C)\sm (C'\cap C)]$, Then $G'$ has no odd hole or prehole.
\end{lemma}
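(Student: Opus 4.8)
The plan is to clear away two easy reductions and then concentrate on a single structural phenomenon: how an odd hole of $G'$ would be forced to weave back and forth between $C$ and $C'$.

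\emph{Reductions.} Since $G\in\qmon$ and $\qmon$ is hereditary (it is characterised by a set of forbidden induced subgraphs), the induced subgraph $G'$ is itself quasimonotone, so $G'\in\och$ and therefore contains no prehole. This disposes of the prehole half of the statement, and it remains only to rule out odd holes. First I would record that $G$ is triangle-free: applying Lemma~\ref{lem:triangle10} and Lemma~\ref{lem:triangle20} to the long odd hole $C$ shows that no triangle can meet $C$ and no triangle can be disjoint from $C$, so $G$, and hence $G'$, has no triangle at all. By Lemma~\ref{lem:oddcycle}, in a triangle-free graph every odd cycle contains an odd hole; thus ``$G'$ has no odd hole'' is equivalent to ``$G'$ has no odd cycle'', i.e. to ``$G'$ is bipartite''. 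So it suffices to prove that $G'=G[(C\cup C')\sm(C\cap C')]$, the subgraph induced on the symmetric difference of the two vertex sets, is bipartite.

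\emph{Decomposition of a putative odd cycle.} Suppose for contradiction that $G'$ contains an odd cycle, and let $D$ be a shortest one; then $D$ is chordless, hence an odd hole in $G'$ and in $G$. Every vertex of $D$ lies in exactly one of $C\sm C'$ and $C'\sm C$. Because $C$ is chordless, any edge between two vertices of $C$ is an edge of $C$, so a maximal run of consecutive vertices of $D$ lying in $C\sm C'$ is a subpath of $C$; the same holds for $C'\sm C$. Moreover $D$ cannot lie entirely in $C$: an induced cycle inside the chordless cycle $C$ would have to be all of $C$, but $D$ avoids the nonempty set $C\cap C'$. Hence $D$ meets both sides and decomposes cyclically as $A_1,B_1,\dots,A_m,B_m$ with $m\ge1$, where each $A_i$ is a subpath of $C$ (in $C\sm C'$), each $B_i$ a subpath of $C'$ (in $C'\sm C$), and each junction is a \emph{crossing edge} joining a vertex of $C\sm C'$ to a vertex of $C'\sm C$ (such an edge is neither a $C$-edge nor a $C'$-edge).

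\emph{The crux.} The arcs $A_i,B_i$ are paths and so carry no parity obstruction; everything is concentrated in the crossing edges. The parity I would exploit is that $|C|$ is odd: if $a,a'$ are the endpoints of the arc $A_1$, then replacing $A_1$ inside $D$ by the \emph{complementary} arc of $C$ between $a$ and $a'$ changes the total length by $|C|-2\,\mathrm{len}(A_1)$, an odd number, turning the odd cycle $D$ into an \emph{even} closed walk lying in $C\cup C'$. The goal is then to extract from this even closed walk a genuine even hole or prehole, contradicting $G\in\ehf\cap\och$. To control the crossing edges themselves I would use Lemma~\ref{lem:quasi30} (a vertex off $C$ has at most three neighbours on $C$, at prescribed $C$-distances) and Lemma~\ref{lem:quasi40} (a crossing edge, read as an off-$C$ vertex with an off-$C$ neighbour, forces a quadrangle with two vertices on $C$), playing these against the two prescribed $C'$-neighbours of each crossing endpoint.

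\emph{Main obstacle.} The difficulty is entirely in the last step, and is twofold. First, the complementary arc of $A_1$ and the remaining arcs $A_2,\dots,A_m$ all live on $C$ and may overlap, so the even closed walk produced above need not be a simple cycle; turning it into a bona fide even hole or prehole requires pruning repeated vertices and then repeatedly splitting off odd chords, which is exactly where even-hole- and prehole-freeness of $G$ do the work. Second, the number $m$ of crossings is governed by how $C$ and $C'$ interleave, so a clean argument should first pin down the intersection pattern --- ideally showing $C\cap C'$ is a single path, so that $m=1$ and $D$ consists of just one $C$-arc and one $C'$-arc --- and rule out several disjoint shared paths, which is the configuration that could a priori let the arcs close up into an odd cycle. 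I expect establishing that $C\cap C'$ is connected (or more generally bounding $m$) via Lemmas~\ref{lem:quasi30} and~\ref{lem:quasi40} to be the technical heart of the proof.
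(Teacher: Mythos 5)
Your two reductions are sound: the prehole half follows exactly as in the paper (quasimonotonicity is hereditary, so $G'$ cannot contain a prehole), and using Lemmas~\ref{lem:triangle10} and~\ref{lem:triangle20} to make $G$ triangle-free, then passing to a shortest (hence chordless) odd cycle $D$, is a legitimate and slightly more careful setup than the paper's. Your parity idea --- swap one $C$-arc of $D$ for its complementary arc in $C$ to obtain an even closed object --- is also the same germ as the paper's key construction, namely the even cycle $(C\cup H)\sm(C\cap H)$ of length $|C|+|H|-2|C\cap H|$. But the proposal stops exactly where the proof has to begin, and you say so yourself: you do not show how to bound the number $m$ of crossings (equivalently, pin down how $C$ and $C'$ interleave), and you do not show how to convert the possibly non-simple even closed walk into a genuine even hole or prehole. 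Those two acknowledged obstacles are not side issues; they are the entire content of the lemma, so what you have is a plan rather than a proof.

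The ingredient you are missing is an extremal choice. The paper takes the odd hole $H$ in $G'$ that \emph{minimises} $|H\cap C'|$, writes $P=H\cap C$, $P'=H\cap C'$, and then argues: (i) no interior vertex of $P'$ can have an edge to $C$, because such an edge would create either an odd hole meeting $C'$ in fewer vertices (contradicting the choice of $H$) or an even hole (contradicting $G\in\ehf$); (ii) given (i), the symmetric difference $(C\cup H)\sm(C\cap H)$ is an induced even cycle --- an even hole --- unless an extreme vertex of $P'$ has an edge to $C\sm P$, in which case Lemma~\ref{lem:quasi30} says that edge goes to a unique vertex at distance $2$ from $P$, and rerouting $H$ through it yields an $H'$ whose symmetric difference with $C$ \emph{is} an even hole. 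This minimisation is what simultaneously kills your ``$m\ge 2$'' worry and your ``non-simple walk'' worry, and nothing in your outline substitutes for it. A small additional slip: Lemma~\ref{lem:quasi40} does not apply to your crossing edges (those have one endpoint \emph{on} $C$); it applies to the edges of the $C'$-arcs, whose endpoints both lie off $C$, so even the tool you name for controlling crossings is aimed at the wrong objects.
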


\begin{proof}
Without loss of generality, we will assume $|C|\leq|C'|$.

If $G'$ has a prehole, $G$ is not quasimonotone. So suppose there is an odd hole $H$ in $G'$. Clearly $H$ must contain edges from both $C$ and $C'$. Let $P$ be the path $H\cap C$, and $P'$ the path $H\cap C'$. We choose $H$ so that $|P'|$ is minimised. See Fig.~\ref{fig:noholes1}, where $H$ is bounded by the edges $3'\,3$ and $5'\,6$, and $|P'|=2$.
\begin{figure}[H]
  \centering
  \begin{tikzpicture}[yscale=1.4,xscale=1.4,font=\small]
    \foreach \x in {2,4,6,8} {
      \node[b,label=below:$\x$]  (\x)  at (\x,0) {};
      \node[w,label=above:$\x'$] (\x') at (\x,1) {};
    };
    \foreach \x in {1,3,5,7} {
      \node[w,label=below:$\x$]  (\x)  at (\x,0) {};
      \node[b,label=above:$\x'$] (\x') at (\x,1) {};
    };
    \foreach \x in {0,9} {
      \node[i] (\x)  at (\x,0) {};
      \node[i] (\x') at (\x,1) {};
    };
    \foreach[count=\y] \x in {0,1,...,8} {
      \draw (\x)--(\y);
      \draw[dashed] (\x')--(\y');
    };
    \draw (3)--(3')--(4')--(5')--(6);
    \draw[densely dotted] (1)--(4')--(2);
    \draw (9.25,0) node {$C$} (9.25,1) node {$C'$} (4,0.5) node {$H$};
  \end{tikzpicture}
  \caption{}\label{fig:noholes1}
\end{figure}
Suppose any vertex in $v$ the interior of  path $P'$ has an edge to a vertex $w\in C$. Clearly $w\notin P$, or $H$ is not a hole. Let $v$ and $w$ be chosen so that $\dist_C(w,P)$ is minimised. Then there is either an odd hole $H'$ with $|C'\cap H'|<|P'|$, contradicting the choice of $H$, or an even hole, contradicting quasimonotonicity. For example, consider $v=4'$ in Fig.~\ref{fig:noholes1}. If $w=2$, then $H'$ is $(4',5',6,5,4,3,2,4')$ and $|C'\cap H'|=1$. If $w=1$, then $(4',5',6,5,4,3,2,1,4')$ is an even hole, and $G$ is not quasimonotone.
\begin{figure}[H]
  \centering
  \begin{tikzpicture}[yscale=1.4,xscale=1.4,font=\small]
    \foreach \x in {2,4,6,8} {
      \node[b,label=below:$\x$]  (\x)  at (\x,0) {};
      \node[w,label=above:$\x'$] (\x') at (\x,1) {};
    };
    \foreach \x in {1,3,5,7} {
      \node[w,label=below:$\x$]  (\x)  at (\x,0) {};
      \node[b,label=above:$\x'$] (\x') at (\x,1) {};
    };
    \foreach \x in {0,9} {
      \node[i] (\x)  at (\x,0) {};
      \node[i] (\x') at (\x,1) {};
    };
    \foreach[count=\y] \x in {0,1,...,8} {
      \draw (\x)--(\y);
      \draw[dashed] (\x')--(\y');
    };
    \draw (3')--(4')--(5');
    \draw[densely dashed] (3')--(3) (5')--(6);
    \draw[densely dotted] (3')--(1) (5')--(8);
    \draw (9.25,0) node {$C$} (9.25,1) node {$C'$} (4,0.5) node {$H'$} ;
  \end{tikzpicture}
  \caption{}\label{fig:noholes2}
\end{figure}
Thus we may assume that no vertex in $P'$ has an edge to $C$, excepting possible the extreme vertices of $P'$.
Now $(C\cup H)\sm(C\cap H)$ is an even hole, with length $|C|+|H|-2|C\cap H|$, unless the extreme vertices of $P'$ have edges to $C\sm P$. If not, $G'$ cannot be quasimonotone. So suppose one of the two extreme vertices $v_i$ has an edge to $w_i\in C\sm P$ ($i=1,2)$. Then, by Lemma~\ref{lem:quasi30}, $w_i$ is unique and $\dist(w_i,P)=2$ ($i=1,2)$. Now we can construct an even hole in $G'$. For example, consider $v_1=3',v_2=5'$ in Fig.~\ref{fig:noholes2}. The only possibilities for $v_iw_i$ are $3'\,1$ and/or $5'\,8$. Then we can use $3'\,1$ in place of $3'\,3$ and/or $5'\,8$ in place of $5'\,6$ to form an odd cycle $H'$, as shown in Fig.~\ref{fig:noholes2}.  Since $H'$ has no edge to a vertex in $C\sm H'$,  we have an even hole $(C\cup H')\sm(C\cap H')$. So $G$  is not quasimonotone, a contradiction.
\end{proof}

\begin{corollary}\label{cor:qm}
Let $C,C'$ be odd holes in a quasimonotone graph $G$, such that $C'\cap C\neq \es$. Let $G'=G[(C'\cup C)\sm (C'\cap C)]$. Then $G'$ is a monotone graph.
\end{corollary}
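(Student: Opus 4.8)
The plan is to verify that $G'$ satisfies the forbidden-induced-subgraph characterisation of monotone graphs recalled in Section~\ref{ss:fp}: a graph is monotone if and only if it is bipartite and contains no hole, tripod, stirrer or armchair as an induced subgraph. Throughout I work under the hypothesis of Lemma~\ref{lem:nooddhole}, namely $|C|,|C'|\geq 7$, which is the case relevant to this section. The key observation is that almost all of the work has already been done: Lemma~\ref{lem:nooddhole} tells us that $G'$ has no odd hole and no prehole, and the remaining forbidden configurations will be handled by heredity.

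First I would record the heredity facts. Since flaws are forbidden \emph{induced} subgraphs, the class $\noflaw$ is hereditary, and likewise $\ehf$; as $G\in\qmon\subseteq\noflaw\cap\ehf$ and $G'$ is an induced subgraph of $G$, we obtain $G'\in\noflaw\cap\ehf$. In particular $G'$ has no even hole, and combining this with Lemma~\ref{lem:nooddhole} (no odd hole in $G'$), the graph $G'$ contains no hole at all.

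Next I would establish that $G'$ is bipartite. Because $G$ is quasimonotone and contains the odd hole $C$ with $|C|\geq 7$, Lemmas~\ref{lem:triangle10} and~\ref{lem:triangle20} together show that $G$ has no triangle whatsoever: any triangle either meets $C$ (excluded by Lemma~\ref{lem:triangle10}) or is vertex-disjoint from $C$ (excluded by Lemma~\ref{lem:triangle20}). Hence the induced subgraph $G'$ is triangle-free. If $G'$ had an odd cycle then, by Lemma~\ref{lem:oddcycle}, it would contain a triangle or an odd hole; the former is impossible and the latter is excluded by Lemma~\ref{lem:nooddhole}. Therefore $G'$ has no odd cycle, so $G'$ is bipartite.

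Finally I would dispose of the three small flaws. Each of the tripod, stirrer and armchair is itself a bipartite graph, hence its own pre-graph; so if $G'$ contained any of them as an induced subgraph, then $G'$ would contain a flaw, contradicting $G'\in\noflaw$. Thus $G'$ is a bipartite graph with no hole, tripod, stirrer or armchair, and the characterisation gives that $G'$ is monotone. I do not expect a genuine obstacle here, since the real content lies in Lemma~\ref{lem:nooddhole}; the only point requiring care is the assembly of bipartiteness, where one must rule out \emph{both} triangles (via the triangle lemmas) and odd holes (via Lemma~\ref{lem:nooddhole}) before invoking Lemma~\ref{lem:oddcycle}.
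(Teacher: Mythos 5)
Your proof is correct and rests on the same key ingredient as the paper's: Lemma~\ref{lem:nooddhole} supplies the absence of odd holes and preholes in $G'$, and heredity supplies flawlessness and the absence of even holes. The paper's own proof, however, is a single sentence --- ``$G'$ is flawless, and has no holes or preholes from Lemma~\ref{lem:nooddhole}, so it is monotone'' --- and it silently skips the step you handle explicitly: bipartiteness. Being flawless with no holes and no preholes does not by itself force a graph to be bipartite (a triangle, or $K_4$, satisfies all three conditions), so one really does need your argument that $G$ is triangle-free by Lemmas~\ref{lem:triangle10} and~\ref{lem:triangle20}, whence any odd cycle in $G'$ would yield, via Lemma~\ref{lem:oddcycle}, a triangle or an odd hole, both excluded. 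Your closing observation that the tripod, stirrer and armchair are bipartite and hence their own pre-graphs, so that flawlessness directly forbids them as induced subgraphs, is likewise a correct way to invoke the forbidden-subgraph characterisation. In short: same route as the paper, but yours is the complete version of it; the only caveat, which you flag yourself, is that both proofs need $|C|,|C'|\ge 7$ (for Lemma~\ref{lem:nooddhole} and for the triangle lemmas), a hypothesis absent from the corollary's statement as printed.
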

\begin{proof}
$G'$ is flawless, and has no holes or preholes from Lemma~\ref{lem:nooddhole}, so it is monotone.
\end{proof}

Note that the holes $C,C'$ in Corollary~\ref{cor:qm} can have different size. See Fig.~\ref{fig:twoholes}, where $G'$ is a \emph{ladder} (see~\cite{DyJeMu17}) with two pendant edges.
However, if we have vertex-disjoint odd holes they cannot have different lengths.

A \emph{prism} is the graph given by joining corresponding vertices in two cycles of the same length. It is an \emph{$n$-prism} if the cycles have length $n$~\cite{HlMaPi02}. See Fig.~\ref{fig:prism} for an example.

\begin{figure}[ht]
\centering{
\begin{tikzpicture}[line width=0.5pt,minimum size=1pt,xscale=1,yscale=1]
\draw (-90:1.3) node[b] (0) {} -- (-50:1.3) node[b] (1) {} -- (-10:1.3) node[b] (2) {} -- (30:1.3) node[b] (3) {}  -- (70:1.3) node[b] (4) {} -- (110:1.3) node[b] (5) {} -- (150:1.3) node[b] (6) {} -- (190:1.3) node[b] (7) {} -- (230:1.3) node[b] (8) {} -- (0) ;
\draw (-90:2) node[b] (0') {} -- (-50:2) node[b] (1') {} -- (-10:2) node[b] (2') {} -- (30:2) node[b] (3') {} -- (70:2) node[b] (4') {} -- (110:2) node[b] (5') {} -- (150:2) node[b] (6') {}  -- (190:2) node[b] (7') {} -- (230:2) node[b] (8') {} --(0') ;
\foreach \x in {0,...,8} {\draw (\x)--(\x');};
\end{tikzpicture}}
\caption{A 9-prism}\label{fig:prism}
\end{figure}
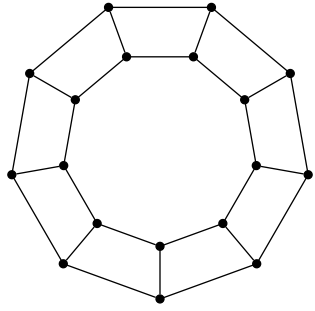

\begin{lemma}\label{lem:holes20}
Let $G$ be a quasimonotone graph containing an odd hole $C$. Then $G$ contains no vertex-disjoint hole $C'$ with $|C'|\neq |C|$. Moreover, if $|C|\geq 7$, any two vertex-disjoint holes with $|C'|=|C|$ induce a prism in $G$.
\end{lemma}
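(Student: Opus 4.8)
The plan is to establish, behind both assertions, the single geometric picture that once the two holes interact they must run ``parallel'' like the two rims of a prism, and then to read off the length equality and the prism structure from that parallelism. Begin with the preliminaries. Since $G\in\qmon\subseteq\ehf\cap\noflaw$, the graph has no even hole, so the hole $C'$ is in fact odd; thus $C$ and $C'$ are both odd holes. We may assume $G$ is connected (components are treated separately, and Lemma~\ref{lem:quasi20} applies only to a connected graph); then by Lemma~\ref{lem:quasi20} each of $C,C'$ is dominating, so every vertex of one hole has a neighbour on the other. Whenever the opposite hole has size at least $7$, Lemmas~\ref{lem:triangle10} and~\ref{lem:triangle20} together make $G$ triangle-free, and Lemma~\ref{lem:quasi30} bounds each vertex to at most three neighbours on that hole, occurring at pairwise cyclic distance exactly $2$.

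The engine is Lemma~\ref{lem:quasi40}. Applied to an opposite hole of size at least $7$ and to each edge $uu'$ of the first hole (both endpoints lying off the opposite hole), it yields a quadrangle $(u,u',y,w)$ with $w,y$ adjacent on the opposite hole; triangle-freeness forces $w\neq y$, so $wy$ is a genuine edge and the assignment $u\mapsto w$, $u'\mapsto y$ carries consecutive vertices of one cycle to adjacent vertices of the other. Chaining these quadrangles around the first cycle produces a closed \emph{tracking} correspondence: traversing one hole once, one traces a closed walk along the other, in which the ``quadrangle'' steps have length $1$ and the only slack comes from a vertex owning two or three opposite-neighbours, which by Lemma~\ref{lem:quasi30} contributes steps of \emph{even} cyclic length.

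Granting the crux below — that each vertex of the smaller hole has a \emph{unique} neighbour on the larger one — the two assertions follow cleanly. For the first, suppose the vertex-disjoint odd holes satisfy $|C|<|C'|$, so $|C'|\ge7$. The tracking correspondence is then a genuine unit-step closed walk of odd length $|C|$ on the cycle $C'$; its net displacement around $C'$ is therefore odd, yet it has absolute value at most $|C|<|C'|$ and is divisible by $|C'|$, so it must be $0$ — which is even, a contradiction. Hence $|C|=|C'|$. For the second, with $|C|=|C'|=\ell\ge7$, uniqueness on both sides makes the cross-edges a perfect matching $\sigma$, and the quadrangles force consecutive vertices to map to consecutive vertices, so $\sigma$ is an isomorphism of the two cycles; any further cross-edge would be a chord of the matching completing an even hole or a prehole, contradicting quasimonotonicity. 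Thus $G[C\cup C']$ is two $\ell$-cycles joined by a corresponding-vertex matching, i.e.\ a prism.

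The hard part, common to both assertions, is exactly the excluded degeneracy: ruling out a vertex with two or three neighbours on the opposite hole, equivalently proving the tracking has degree one (and hence winds exactly once). Here a purely local analysis is not enough — a single such vertex only produces an induced $K_{2,3}$ across the two holes, which is itself harmless — so I expect to argue globally in the spirit of Lemma~\ref{lem:nooddhole} and Corollary~\ref{cor:qm}: from any offending configuration, combine a misaligned pair of cross-edges with suitably chosen arcs of $C$ and $C'$ to build a cycle whose length can be taken even, and then, after removing any odd chord exactly as in those results, extract an even hole or a prehole, contradicting $G\in\qmon$. Carrying out this case analysis exhaustively, while tracking precisely which chords can appear between the two holes, is the principal technical burden of the proof.
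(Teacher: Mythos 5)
Your frame is the right one, and it is in fact the paper's frame: domination (Lemma~\ref{lem:quasi20}), the neighbour bound with distance-$2$ spacing (Lemma~\ref{lem:quasi30}), and the quadrangle lemma (Lemma~\ref{lem:quasi40}) chained around the two holes, with a counting/winding step giving $|C|=|C'|$ and adjacency-preservation giving the prism. But the proposal has a genuine gap, and you name it yourself: the claim that no vertex of one hole can have two or three neighbours on the other is never proved --- it is deferred as ``the principal technical burden''. That deferred step is not a loose end; it is essentially the entire content of the lemma. Everything you actually carry out is downstream of it and routine. (Indeed, once uniqueness is known, the length claim comes even more cheaply than your winding-number argument: the cross-edges then number $|C_1|$, while domination gives every vertex of the longer hole $C_2$ at least one cross-edge, so $|C_2|\le|C_1|$, a contradiction. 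This is how the paper concludes.)

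For comparison, here is how the paper closes exactly that gap. Suppose $v\in C_1$ has two neighbours on $C_2$; by Lemma~\ref{lem:quasi30} they are $a,b$ at distance $2$, with common neighbour $c$ on $C_2$. Using Lemma~\ref{lem:quasi40} together with Lemma~\ref{lem:nooddhole} and triangle-freeness (Lemmas~\ref{lem:triangle10} and~\ref{lem:triangle20}), the paper forces a \emph{crossover}: an edge $cw$, where $w$ is a neighbour of $v$ on $C_1$. It then considers the cycle $(C_1\sm wv)\cup\{av\}\cup(C_2\sm ac)\cup\{cw\}$, of even length $|C_1|+|C_2|$, with the near-alternating bipartition in which $v,w\in L$ and $a,c\in R$, and shows --- by precisely the case analysis you waved at, constrained by Lemmas~\ref{lem:triangle10}, \ref{lem:quasi30} and~\ref{lem:nooddhole} --- that every remaining edge of $G[C_1\cup C_2]$ has both ends in $L$ or both in $R$. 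Hence this cycle is a prehole (a ``crossover prehole''), contradicting $G\in\qmon$; so every vertex has a unique neighbour on the opposite hole, and the matching and prism conclusions follow. Your guess about the shape of the missing argument (an even cycle assembled from arcs of both holes plus the misaligned cross-edges, certified as a prehole) is right in spirit --- it is what the paper does --- but without executing it the proposal establishes neither assertion of the lemma.
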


\begin{proof}
Let $G'=G[C\cup C']$, and suppose $|C'|\neq |C|$.
Let $C_1$ denote the shorter of $C,C'$, and $C_2$ the longer, so $|C_1|\geq 5$ and $|C_2|\geq 7$. Then every vertex of $C_1$ has degree in $\{1,2,3\}$ in $C_2$ and every vertex of $C_2$ has degree in $\{1,2\}$ in $C_1$. Since $|C_2|>|C_1$, there must be a vertex $v\in C_1$ with degree 2 or 3 in $C_2$, by simple counting. Let $a,b$ be two of these neighbours, such that $acb$ is a subpath of $C_2$. See Fig.~\ref{fig:crossover1}. Since $G'$ is flawless, by Lemma~\ref{lem:quasi40} every edge of $C_1$ is in a quadrangle with some edge of $C_2$, and vice versa. Thus $c$ must be adjacent to a neighbour $w$ of $v$ on $C_1$, and $G'$ must contain the configuration of Fig.~\ref{fig:crossover1}. Now $G'$ must contain either $ax$ or $dw$ or both. Otherwise $(d,a,v,w,x)$ is a chordless path of length $4$, and must be a subpath of a hole, since $d$ has some edge to $C_1$, contradicting Lemma~\ref{lem:nooddhole}. Note that $dx\notin E$, since otherwise $(d,x,a)$ or $(d,x,w)$ would be a triangle, contradicting Lemma~\ref{lem:triangle10}.  Whether $ax$ or $dv$ is an edge, this argument can be repeated for the vertices to the left of $d$ and $x$, or to the right of $c$ and $v$. Thus there cannot be any more edges than those indicated in Fig.~\ref{fig:crossover1}, since every vertex of $C_2$ has degree at most 2 in $C_1$.

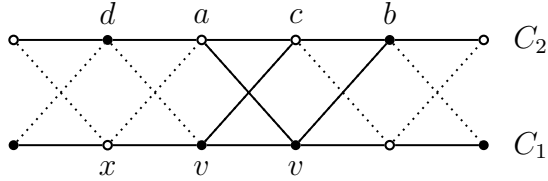
\begin{figure}[H]
  \centering
  \begin{tikzpicture}[xscale=1.25,yscale=1.4]
    \foreach[count=\x] \c/\n\d/\m in {w//b/,b/d/w/x,w/a/b/v,w/c/b/v,b/b/w/,w//b/}{
      \node[\c,label=above:$\n$] (t\x) at (\x,1) {}; 
      \node[\d,label=below:$\m$] (b\x) at (\x,0) {}; 
    };
    \foreach[count=\x] \y in {2,3,...,6}{
      \draw         (t\x)--(t\y)  (b\x)--(b\y);
      \draw[dotted] (t\x)--(b\y)  (b\x)--(t\y);
    };
    \draw (t3)--(b4)--(t5)  (b3)--(t4);
    \draw (6.5,0) node {$C_1$} (6.5,1) node {$C_2$};
  \end{tikzpicture}
  \caption{Crossover}
  \label{fig:crossover1}
\end{figure}

We give $C_1$ the alternating bipartition except, since $|C_1|$ is odd, both $w,v\in L$. Similarly we give $C_1$ the alternating bipartition with both $a,c\in R$. This is the alternating partition of an even cycle $C_1\sm wv$, $wc$, $C_2\sm ac$, $av$. We call $av$, $wc$ a \emph{crossover}, and a prehole formed in this way a \emph{crossover prehole}. Now we observe that all possible edges other than $av,cw$ have both endpoints in $L$ or both endpoints in $R$. Thus $G'$ is a prehole.

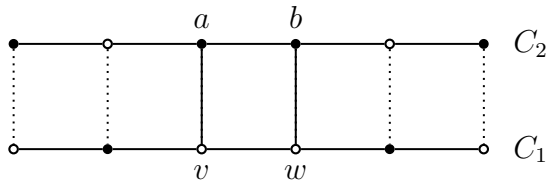
\begin{figure}[H]
  \centering
  \begin{tikzpicture}[xscale=1.25,yscale=1.4]
    \foreach[count=\x] \c/\n\d/\m in {b//w/,w//b/,b/a/w/v,b/b/w/w, w//b/,b//w/}{
      \node[\c,label=above:$\n$] (t\x) at (\x,1) {}; 
      \node[\d,label=below:$\m$] (b\x) at (\x,0) {}; 
      \draw[dotted] (t\x)--(b\x);
    };
    \foreach[count=\x] \y in {2,3,...,6} \draw (t\x)--(t\y)  (b\x)--(b\y);
    \draw (t3)--(b3)  (t4)--(b4);
    \draw (6.5,0) node {$C_1$} (6.5,1) node {$C_2$};
  \end{tikzpicture}
  \caption{Matching between $C_1$ and $C_2$}
  \label{fig:crossover2}
\end{figure}

Thus all vertices in $C_1$ must have only one edge to $C_2$. (See Fig.~\ref{fig:crossover2}.) Since these edges form a matching between $C_1$ and $C_2$, we must have $|C_1|=|C_2|$, and $G[C\cup C']$ must be a prism.
\end{proof}

\section{Preholes in flawless graphs}\label{sec:prehole}

\begin{lemma}\label{lem:preholetypes}
 If $G\in\noflaw$ and has an odd hole of size $\ell\geq 7$, any minimal prehole $C$ in $G$ is either an even hole or
 \begin{enumerate*}[label=(\alph*)]
 \item two odd holes intersecting in an edge \label{prehole:casea} or
 \item two disjoint odd holes connected by a quadrangle. \label{prehole:caseb}
 \end{enumerate*}
 See Fig.~\ref{fig:long_h}.
\end{lemma}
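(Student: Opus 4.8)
The plan is to analyze the structure of a minimal prehole $C$ in $G\in\noflaw$ that also contains a long odd hole. Recall that a prehole is an even cycle all of whose chords are even (splitting it into two odd cycles). The strategy is to show that if $C$ is not already an even hole (chordless), then the chords of $C$ force exactly one of the two claimed configurations.

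First I would observe that since $C$ is a prehole, every chord is even, and by minimality any chord of $C$ must split $C$ into two \emph{odd holes} (not merely odd cycles): if either piece had a chord, that chord would be even with respect to $C$, and using Lemma~\ref{lem:oddcycle} together with minimality one could extract a smaller prehole, contradicting minimality. So I would first argue that a minimal prehole with at least one chord decomposes, along any chord, into two odd holes sharing the endpoints of that chord. The next step is to count chords. If there is a single chord $xy$, then $C$ is exactly two odd holes $C_1,C_2$ meeting in the two vertices $x,y$; since an odd chord is forbidden, $x,y$ are at even distance on $C$, and I would need to rule out the degenerate possibilities to land in case~\ref{prehole:casea}, two odd holes intersecting in an edge, or to see that a longer shared segment would itself create a smaller prehole.

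The heart of the argument is showing there cannot be more than the chords permitted. Here I would invoke Lemma~\ref{lem:quasi35}: every vertex of the prehole has at most three chords incident, and these chords land at distance exactly $2$ on $C$. Combined with the degree and distance constraints, this limits how chords can interact. I would then distinguish whether any two chords cross or nest. Since $G$ has an odd hole of size $\ell\ge 7$, Lemmas~\ref{lem:triangle10} and~\ref{lem:triangle20} control where triangles may appear, and Lemma~\ref{lem:nooddhole} forbids certain pairs of odd holes from coexisting when they overlap — this is precisely the tool that prevents two chords from producing a smaller prehole or an even hole. The two surviving global patterns are: the two odd holes share a single edge (case~\ref{prehole:casea}), corresponding to the chords pinching $C$ at an edge; or the two odd holes are vertex-disjoint and joined through a quadrangle (case~\ref{prehole:caseb}), corresponding to two chords bounding a $4$-cycle bridge.

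The main obstacle I anticipate is the combinatorial case analysis ruling out intermediate configurations — for instance, two odd holes overlapping in a path of length greater than one, or three or more chords arranged so that no single chord yields two holes. The key leverage against these is minimality together with Lemma~\ref{lem:nooddhole}: whenever two odd holes overlap in more than an edge, taking $G'=G[(C'\cup C)\sm(C'\cap C)]$ produces, by that lemma, no odd hole or prehole, yet the ambient structure forces one, giving a contradiction; and any configuration that admits a chord splitting off a shorter even cycle with an even chord contradicts minimality of $C$. I would organize the proof by first reducing to ``two odd holes along a chord,'' then classifying the intersection as either an edge or, after contracting the shared path, as a quadrangle bridge, citing Figure~\ref{fig:long_h} for the final two pictures.
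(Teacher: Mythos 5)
Your opening reduction is false, and case~\ref{prehole:caseb} of the lemma is itself the counterexample. Take the $12$-vertex graph of Fig.~\ref{fig:long_h}\ref{prehole:caseb}: a $7$-hole $C_1$ on $0,1,\ldots,6$ and a $5$-hole $C_2$ on $0',1',\ldots,4'$, joined by the quadrangle $(0,6,3',2')$. The prehole $C$ is the $12$-cycle $(0,1,\ldots,6,3',4',0',1',2')$, whose only chords are the edge $06$ of $C_1$ and the edge $2'3'$ of $C_2$; one can check this is a \emph{minimal} prehole (the outer $12$-cycle is the only even cycle of length at least six in $G[C]$). Yet the chord $06$ splits $C$ into $C_1$ (a hole) and the $7$-cycle $(0,6,3',4',0',1',2')$, which contains the chord $2'3'$ and so is \emph{not} a hole. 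Hence minimality does not force every chord to split $C$ into two odd holes; if your Step 1 were correct, it would prove that case~\ref{prehole:caseb} minimal preholes cannot exist, contradicting the very lemma you are proving and the paper's examples (Fig.~\ref{fig:long_f}) and algorithm (section~\ref{ss:findprehole}), which are built around finding exactly these. The justification you offer also cannot be repaired: Lemma~\ref{lem:oddcycle} extracts a triangle or an odd hole from an odd cycle, and neither is a prehole (preholes are even cycles), so no contradiction with minimality of $C$ is obtained this way. Handling chords inside the two pieces is precisely where the paper does its real work: it extracts odd holes $A\subseteq C_1$, $B\subseteq C_2$ by Lemma~\ref{lem:oddcycle}, and then classifies how $A$ and $B$ meet --- sharing the chord edge (case~\ref{prehole:casea}); vertex-disjoint with crossing edges, ruled out via the prism/crossover dichotomy behind Lemma~\ref{lem:holes20} together with an explicit smaller prehole obtained by moving a vertex across the bipartition; or sharing exactly one vertex, reduced to case~\ref{prehole:caseb} via Lemmas~\ref{lem:quasi40} and~\ref{lem:quasi10}.

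A second problem is your appeal to Lemma~\ref{lem:nooddhole}. That lemma assumes $G$ is \emph{quasimonotone} and that both holes have size at least $7$. In the present setting $G$ is only flawless and, since it contains the prehole $C$, it is certainly not quasimonotone; moreover the odd holes arising inside $C$ may be $5$-holes (as in Fig.~\ref{fig:long_h}). So its conclusion that $G'$ has no odd hole or prehole is simply not available. The paper's substitute for quasimonotonicity throughout this proof is minimality of $C$: wherever a forbidden structure would arise, it constructs a strictly smaller prehole inside $G[C]$. Your proposal gestures at this, but the concrete constructions, which constitute most of the paper's proof, are the part that is missing.
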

\begin{figure}[H]
\centering{
\begin{tikzpicture}[line width=0.75pt,minimum size=2pt,inner sep=0.9pt,xscale=1.25,yscale=1.4]
\draw[rotate=25.5] (0:1.1) coordinate (0) -- (51.4:1.1) node[w] (1) {} -- (102.9:1.1) node[b] (2) {} -- (154.3:1.1) node[w] (3) {} -- (205.7:1.1) node[b] (4) {} -- (257.1:1.1) node[w] (5) {} -- (308.6:1.1) coordinate (6) edge[dashed]  (0);
\draw[xshift=1.65cm] (0:0.8) node[b] (0') {} -- (72:0.8) node[w] (1') {} -- (144:0.8) node[b] (2') {}  (216:0.8) node[b] (3') {} -- (288:0.8) node[w] (4') {} -- (0')  ;
\draw (0)--(2') (6)--(3') ;
\draw (0,0) node{$C_1$} (1.65,0) node{$C_2$} ;
\draw (1,-1.5) node {\ref{prehole:casea}} ;
\end{tikzpicture}
\hspace*{2cm}
\begin{tikzpicture}[line width=0.75pt,minimum size=2pt,inner sep=0.9pt,xscale=1.25,yscale=1.4]
\draw[rotate=25.5] (0:1.1) node[w] (0) {} -- (51.4:1.1) node[b] (1) {} -- (102.9:1.1) node[w] (2) {} -- (154.3:1.1) node[b] (3) {} -- (205.7:1.1) node[w] (4) {} -- (257.1:1.1) node[b] (5) {} -- (308.6:1.1) node[w] (6) {} edge[dashed]  (0);
\draw[xshift=2.25cm] (0:0.8) node[b] (0') {} -- (72:0.8) node[w] (1') {} -- (144:0.8) node[b] (2') {}  (216:0.8) node[b] (3') {} -- (288:0.8) node[w] (4') {} -- (0') (2') edge[dashed] (3') ;
\draw (0)--(2') (6)--(3') ;
\draw (0,0) node{$C_1$} (2.25,0) node{$C_2$} ;
\draw (1.3,-1.5) node {\ref{prehole:caseb}} ;
\end{tikzpicture}}
\caption{Preholes with odd holes $C_1$, $C_2$.}\label{fig:long_h}
\end{figure}
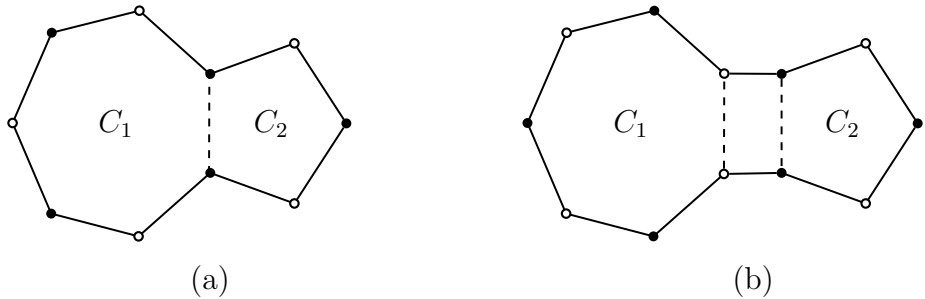
\begin{proof}
We may assume that $G$ has no triangles, from Lemmas~\ref{lem:triangle10} and~\ref{lem:triangle20}.  Clearly $C$ has at least one even chord $e$ which divides it into two smaller odd cycles $C_1$ and $C_2$. By Lemma~\ref{lem:oddcycle},   $C_1$ and $C_2$ contain odd holes $C'_1,C'_2$. If $C'_1=C_1$, $C'_2=C_2$, then we are in case  \ref{prehole:casea}. Otherwise, we can use Lemma~\ref{lem:oddcycle} to arrive at two odd holes  $A$ in $C_1$ and $B$ in $C_2$. Now $A$ and $B$ can have only one chord of $C$ in their boundaries, since $C$ has only even chords. Thus the structure of $C$ is either as shown in Fig.~\ref{fig:long_h}, or as shown in Fig.~\ref{fig:possible0}. We must show that $C$ cannot be a minimal prehole in the latter case.
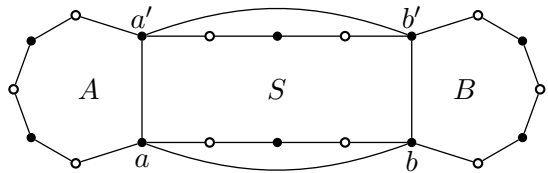
\begin{figure}[H]
\begin{center}
\begin{tikzpicture}[line width=0.5pt,inner sep=5pt,font=\small]
\foreach \x in {100,180,260} {\draw (\x:1) node[w] (A\x) {};};
\foreach \x in {45,140,220,315} {\draw (\x:1) node[b] (A\x) {};};
\draw (A45)--(A100)--(A140)--(A180)--(A220)--(A260)--(A315) ;
\draw (0,0) node[empty] {$A$} ;
\begin{scope}[xshift=5cm,rotate=180]
\foreach \x in {100,180,260} {\draw (\x:1) node[w] (B\x) {};};
\foreach \x in {45,140,220,315} {\draw (\x:1) node[b] (B\x) {};};
\draw (B45)--(B100)--(B140)--(B180)--(B220)--(B260)--(B315) ;
\draw (0,0) node[empty] {$B$} ;
\end{scope}
\draw (2.5cm,0) node[empty] {$S$} ;
\draw (A45)--(A315) (B45)--(B315) (A45)--(B315) (B45)--(A315) ;
\draw (A45)++(.9,0) node[w] {} ++(.9,0) node[b] {}  ++(.9,0) node[w] {}  ;
\draw (A315)++(.9,0) node[w] {} ++(.9,0) node[b] (c) {}  ++(.9,0) node[w] {}  ;
\draw (A45)++(0,.25) node[empty] {$a'$} (A315)++(0,-.25) node[empty] {$a$} ;
\draw (B45)++(0,-.25) node[empty] {$b$} (B315)++(0,.25) node[empty] {$b'$} ;
\draw (A45)edge[bend left=20](B315) (B45)edge[bend left=20](A315) ;
\end{tikzpicture}
\end{center}
\caption{A possible prehole}\label{fig:possible0}
\end{figure}
In Fig.~\ref{fig:possible0}, $A$ and $B$ are odd holes and $S$ joins them, and is not a single edge or quadrangle. Thus $A$ and $B$ both have size $\ell$, and $A, B$ must induce a prism. Otherwise $G[A\cup B$] contains a smaller crossover prehole, by Lemma~\ref{lem:holes20}. Thus, in particular, the edge $a'b'$ must be present, and $b'$ has no other edge to $A$. Also, we must have $a,b,a',b'\in L$, or we would have a smaller case~\ref{prehole:caseb} prehole using $ab,a'b'$.

\begin{figure}[ht]
\begin{center}
\begin{tikzpicture}[line width=0.5pt,inner sep=5pt,font=\small]
\foreach \x in {100,180,260} {\draw (\x:1) node[w] (A\x) {};};
\foreach \x in {45,140,220,315} {\draw (\x:1) node[b] (A\x) {};};
\draw (A45)--(A100)--(A140)--(A180)--(A220)--(A260)--(A315) ;
\draw (0,0) node[empty] {$A$} ;
\begin{scope}[xshift=5cm,rotate=180]
\foreach \x in {100,180,260} {\draw (\x:1) node[w] (B\x) {};};
\foreach \x in {45,140,220,315} {\draw (\x:1) node[b] (B\x) {};};
\draw (B45)--(B100)--(B140)--(B180)--(B220)--(B260)--(B315) ;
\draw (0,0) node[empty] {$B$} ;
\end{scope}
\draw (2.5cm,0) node[empty] {$S$} ;
\draw (A45)--(A315) (B45)--(B315) (A45)--(B315) (B45)--(A315) ;
\draw (A45)++(.9,0) node[w] {} ++(.9,0) node[b] {}  ++(.9,0) node[w] {}  ;
\draw (A315)++(.9,0) node[w] {} ++(.9,0) node[b] (c) {}  ++(.9,0) node[w] {}  ;
\draw (A45)++(0,.25) node[empty] {$a'$} (A315)++(0,-.25) node[empty] {$a$} (c)++(0,-.2) node[empty] {$c$};
\draw (B45)++(0,-.25) node[empty] {$b$} (B315)++(0,.25) node[empty] {$b'$} (B315)--(c) ;
\draw (A45)edge[bend left=20](B315) (B45)edge[bend left=20](A315) ;
\end{tikzpicture}
\end{center}
\caption{A possible prehole}\label{fig:possible1}
\end{figure}
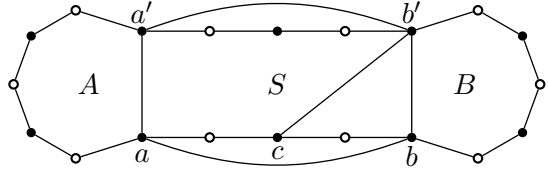

Let $c$ be the vertex nearest $a$ on the path from $a$ to $b$ such that $b'c$ is an even chord of $C$, as shown in Fig.~\ref{fig:possible1}. Note that $c$ is well determined, since $c=b$ is possible.

Consider the cycle $C'$ in $G$ shown in Fig.~\ref{fig:possible2}. It is easy to see that $C'$ is a prehole, where the certifying bipartition simply moves $b'$ from $L$ to $R$ from that of Fig.~\ref{fig:possible1}, as shown.
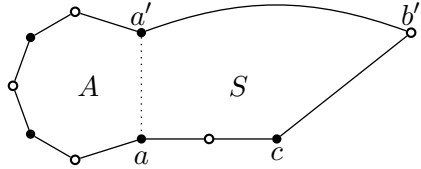
\begin{figure}[H]
\begin{center}
\begin{tikzpicture}[line width=0.5pt,inner sep=5pt,font=\small]
\foreach \x in {100,180,260} {\draw (\x:1) node[w] (A\x) {};};
\foreach \x in {45,140,220,315} {\draw (\x:1) node[b] (A\x) {};};
\draw (A45)--(A100)--(A140)--(A180)--(A220)--(A260)--(A315) ;
\draw (0,0) node[empty] {$A$} ;
\begin{scope}[xshift=5cm,rotate=180]
\foreach \x in {315} {\draw (\x:1) node[w] (B\x) {};};
\end{scope}
\draw (2cm,0) node[empty] {$S$} ;
\draw[dotted] (A45)--(A315)  ;
\draw (A315)++(.9,0) node[w] {} ++(.9,0) node[b] (c) {}  (A315)--(c);
\draw (A45)++(0,.25) node[empty] {$a'$} (A315)++(0,-.25) node[empty] {$a$} (c)++(0,-.2) node[empty] {$c$};
\draw (B315)++(0,.25) node[empty] {$b'$} (B315)--(c) ;
\draw (A45)edge[bend left=20](B315) ;
\end{tikzpicture}
\end{center}
\caption{Smaller prehole}\label{fig:possible2}
\end{figure}
Thus $C'$ is a prehole with size $|C'| < |C|$, so $C$ was not a minimal prehole. Thus any minimal prehole involving two vertex-disjoint holes of size at least 7 must be as in case~\ref{prehole:caseb}.

Finally suppose $C'_1,C'_2$ are edge- but not vertex-disjoint, so they intersect in a single vertex~$c$. Since $|C'_1 \cup C'_2|$ is odd, there must be a vertex $v\in C\sm(C'_1\cup C'_2)$. Since $C$ is minimal, and $v$ is adjacent to both $C'_1,C'_2$, by Lemma~\ref{lem:quasi40}, $v$ must be unique. Since $G$ has no even hole, $c,v$ are the opposite vertices of a quadrangle $(c,x,v,w)$, with $v\in L$, $c,x,w\in R$. Thus we have the configuration shown in Fig.~\ref{fig:commonvertex}.

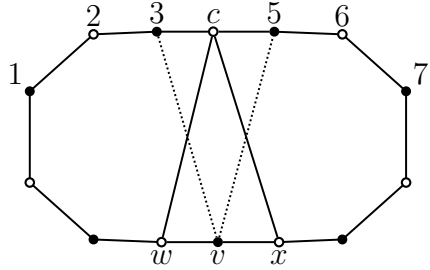
\begin{figure}[hb]
\centering{
\begin{tikzpicture}[line width=0.75pt,minimum size=2pt,inner sep=0.9pt,xscale=1.25,yscale=1.4]
\draw (1.05,1) node[w,label=above:$c$] (c) {} (1.1,-1) node[b,label=below:$v$] (v) {} ;
\draw (c) -- (0.45,1) node[b,label=above:$3$] (3) {} -- (102.9:1) node[w,label=above:$2$] (2) {} -- (154.3:1) node[b,label=above left:$1$] (1) {} -- (205.7:1) node[w] (4) {} -- (257.1:1) node[b] (5') {} -- (0.5,-1) node[w,label=below:$w$] (w) {} ;
\draw[xshift=2.2cm,rotate=180] (v) -- (0.45,1) node[w,label=below:$x$] (x) {} -- (102.9:1) node[b] (2') {} -- (154.3:1) node[w] (3') {} -- (205.7:1) node[b,label=above right:$7$] (7) {} -- (257.1:1) node[w,label=above:$6$] (6) {} -- (0.5,-1) node[b,label=above:$5$] (5) {} ;
\draw (c)--(5) (w)--(v) (w)--(c)--(x) ;
\draw[densely dotted] (3)--(v)--(5)  ;
\end{tikzpicture}}
\caption{Preholes with odd holes $C_1$, $C_2$.}\label{fig:commonvertex}
\end{figure}

The vertices such that $\dist_C(c)\leq 3$ form a prechordless path $P=(1,2,3,c,5,6,7)$, as shown in Fig.~\ref{fig:commonvertex}. Since $|C'_1|,|C'_2|\geq 5$, $\dist_C(v,P)\geq 2$. Suppose $v$ is not adjacent to any vertex of $P$. Then $\dist(v,P)=\dist(v,c)=2$, so $vc\in E$ by Lemma~\ref{lem:quasi10}. This is a contradiction, since $C$ is a prehole. Thus $v$ has an edge to $P$, thus to $1,3,5$ or $7$. By symmetry, suppose either $v1\in E$ or $v3\in E$. If $v1\in E$, $v3\notin E$, the cycle $(v,1,2,3,c,w)$ is a 6-hole in $G$, contradicting the minimality of $C$. Thus $v3\in E$. But now the quadrangle $(v,3,c,x)$ separates $C$ into two vertex-disjoint odd cycles $H_1=(v,3,2,1, \ldots, w)$ and $H_2=(c,5,6,7,\ldots x)$. Thus $C$ contains two vertex-disjoint odd holes $H'_1\subseteq H_1$, $H'_2\subseteq H_2$, by Lemma~\ref{lem:oddcycle}. By the minimality of $C$, we must have $H'_1=H_1$, $H'_2=H_2$, and we are in case~\ref{prehole:caseb} again.
\end{proof}
Thus, if $G$ contains an odd hole of size at least 7, minimal preholes have only two types, case~\ref{prehole:casea} and case~\ref{prehole:caseb}. From Lemma~\ref{lem:holes20}, case~\ref{prehole:caseb} are crossover preholes. Examples are shown in Fig.~\ref{fig:long_f}.
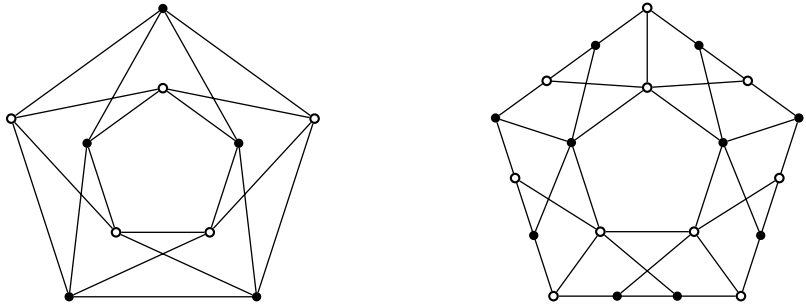
\begin{figure}[H]
\centering{
\begin{tikzpicture}[line width=0.5pt,minimum size=2pt,inner sep=0.9pt,scale=0.53]
\draw (18:2) node[b] (0) {}  (90:2) node[w] (1) {}  (162:2) node[b] (2) {}
(234:2) node[w] (3) {}  (306:2) node[w] (4) {} ;
\draw (18:4) node[w] (0') {}  (90:4) node[b] (1') {}  (162:4) node[w] (2') {}
(234:4) node[b] (3') {}  (306:4) node[b] (4') {} ;
\draw (0)--(1)--(2)--(3)--(4)--(0) ;
\draw (4)--(3') (3)--(4') (4')--(0) (4)--(0') (0')--(1) (0)--(1') (1')--(2) (1)--(2') (2')--(3) (2)--(3')  ;
\draw (0')--(1')--(2')--(3')--(4')--(0') ;
\end{tikzpicture}
\hspace*{2cm}
\begin{tikzpicture}[line width=0.5pt,minimum size=2pt,inner sep=0.9pt,scale=0.53]
\draw (18:2) node[b] (0) {}  (90:2) node[w] (1) {}  (162:2) node[b] (2) {}
(234:2) node[w] (3) {}  (306:2) node[w] (4) {} ;
\draw (18:4) node[b] (0') {}  (90:4) node[w] (1') {}  (162:4) node[b] (2') {}
(234:4) node[w] (3') {}  (306:4) node[w] (4') {} ;
\draw (0)--(0') (1)--(1') (2)--(2') (3)--(3') (4)--(4') ;
\draw (0)--(1)--(2)--(3)--(4)--(0) ;
\draw (0')--(1') node[w, pos=0.33] (00) {} node[b, pos=0.67] (01) {} ;
\draw (1')--(2') node[b, pos=0.33] (10) {} node[w, pos=0.67] (11) {} ;
\draw (2')--(3') node[w, pos=0.33] (20) {} node[b, pos=0.67] (21) {} ;
\draw (3')--(4') node[b, pos=0.33] (30) {} node[b, pos=0.67] (31) {} ;
\draw (4')--(0') node[b, pos=0.33] (40) {} node[w, pos=0.67] (41) {} ;
\draw (00)--(1) (01)--(0) (10)--(2) (11)--(1) (20)--(3) (21)--(2) (30)--(4) (31)--(3) (40)--(0) (41)--(4) ;
\end{tikzpicture}
}\vspace{5mm}
\caption{Flawless crossover preholes.}\label{fig:long_f}
\end{figure}

So let us consider the case \ref{prehole:casea} preholes. We will call these \emph{M\"obius preholes}, since we will show that such a prehole must be a \emph{M\"obius ladder}~\cite{GuyHar67,HlMaPi02}. See Fig.~\ref{fig:long_i1} for two different drawings of a M\"obius ladder. As the name suggests, this is a ladder
with a crossover.

\begin{lemma}\label{lem:prehole20}
  If $C$ is a M\"obius prehole in a flawless graph $G$, then $C$ is a \emph{M\"obius ladder}.
\end{lemma}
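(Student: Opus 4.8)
The plan is to reduce $C$ to its two-path skeleton, show that the two interiors form a ladder, and then locate the single twist. Write $C=C_1\cup C_2$, where $C_1,C_2$ are the two odd holes supplied by the definition of a M\"obius prehole (case~\ref{prehole:casea} of Lemma~\ref{lem:preholetypes}) and $C_1\cap C_2$ is the shared edge $e=ab$; let $P_1\subset C_1$ and $P_2\subset C_2$ be the two paths joining $a$ to $b$, each of even length because $C_1,C_2$ are odd. First I would record the local picture. Since $C_1$ and $C_2$ are chordless and $C$ is a prehole, the only chord incident to $a$ or $b$ is $e$ itself (any other chord at $a$ or $b$ would be a chord of $C_1$ or of $C_2$), and every remaining chord of $C$ is an \emph{even} chord joining the interior of $P_1$ to the interior of $P_2$. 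Thus $C$ consists of the two paths $P_1,P_2$ sharing their endpoints $a,b$, the edge $e$, and a family of ``rung'' chords between the two interiors.

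Next I would analyse the interiors. Applying Lemma~\ref{lem:nooddhole} and Corollary~\ref{cor:qm} to the intersecting odd holes $C_1,C_2$, the induced subgraph $G'$ on $(C_1\cup C_2)\sm\{a,b\}$---which is exactly the two interior paths together with their rungs---contains no odd hole or prehole and, being flawless, is monotone. I would then upgrade this to the statement that $G'$ is a \emph{ladder}. The two interior paths are induced (there are no chords inside $C_1$ or $C_2$), $G$ is triangle-free by Lemmas~\ref{lem:triangle10} and~\ref{lem:triangle20}, and the number of rungs at each vertex is bounded by Lemma~\ref{lem:quasi35}. The aim is to show that the rungs form a non-crossing perfect matching tiling the strip between the paths into quadrangles, so that in particular $|P_1|=|P_2|=:k$. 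The tools here are flawlessness (a missing rung, a doubled rung, or a pair of crossing rungs yields a tripod, stirrer or armchair, or an induced path that closes into a smaller even hole) together with the minimality of $C$ (any even hole is itself a prehole, so none smaller than $C$ can occur).

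It then remains to pin down the matching, and I would do this at the ends first. The rung $p_1q_1$ is impossible, since with the cycle edges $ap_1$ and $aq_1$ it would close the triangle $ap_1q_1$; symmetrically the innermost rung at $b$ is forbidden. Combined with the non-crossing quadrangle structure from the previous step, this forces the reversed matching $p_i q_{k-i}$ for $i=1,\dots,k-1$, that is, a \emph{single} crossover. (The straight matching $p_i q_i$ would instead make $C_1,C_2$ vertex-disjoint and $C$ a prism, which is the case~\ref{prehole:caseb} alternative already separated out via Lemma~\ref{lem:holes20}.) Assembling the pieces, $C$ is the even cycle of length $2k$ carrying the chords $p_iq_{k-i}$---with the chord for $i=0$ being $e=ab$---which is by definition the M\"obius ladder.

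I expect the main obstacle to be the ladder step: converting ``the interiors are monotone and triangle-free'' into the explicit quadrangle tiling, i.e.\ ruling out interior vertices carrying zero, two, or three rungs and ruling out crossing rungs, using only the forbidden-subgraph conditions and the minimality of $C$ rather than a global even-hole-free hypothesis. This is the same finite type of case check performed earlier through the neighbour-structure results, and I would organise it around Lemmas~\ref{lem:quasi30} and~\ref{lem:quasi40}.
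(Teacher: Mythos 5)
Your plan founders on a hypothesis check. The structural core of your argument---that $G'=G[(C_1\cup C_2)\sm\{a,b\}]$ is monotone---is obtained by invoking Lemma~\ref{lem:nooddhole} and Corollary~\ref{cor:qm}, but both of those results require the ambient graph to be \emph{quasimonotone}, and the lemma additionally requires $|C_1|,|C_2|\ge 7$. Neither hypothesis can hold here: a M\"obius prehole is in particular a prehole, so any graph containing it is certainly not quasimonotone---that is the whole point of this section, which characterises the obstruction inside a graph assumed only to be flawless---and the two odd holes of a M\"obius prehole may well be $5$-holes (the proof of the lemma notes only $|C_1|,|C_2|\ge 5$, $|C|\ge 8$). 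The same objection applies to your appeal to Lemmas~\ref{lem:triangle10} and~\ref{lem:triangle20} for triangle-freeness, and to Lemma~\ref{lem:quasi30}, which assumes $G\in\noflaw\cap\thsp\ehf$ and an odd hole of length at least seven. So the ``interiors are monotone'' step, and everything built on it, is unsupported. Your instinct that minimality of $C$ should stand in for even-hole-freeness is the right one, but you never actually carry out that replacement: the step you yourself flag as ``the main obstacle''---forcing the quadrangle/rung structure---is precisely the content of the lemma, and in your proposal it remains a plan rather than an argument.

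For comparison, the paper's proof stays entirely inside $G[C]$ and uses only two ingredients: (i) Lemma~\ref{lem:quasi20}, that an odd hole in a connected flawless graph is dominating, so every vertex of one arc has an (even) chord to the other odd hole; and (ii) minimality of $C$, so that any even cycle of length at least six formed inside $G[C]$ must have a chord, since otherwise it would be a prehole smaller than $C$. Starting from the chord $ab$, these two facts force the chord $xy$ joining the neighbour $x$ of $a$ on one arc to the neighbour $y$ of $b$ on the other; the chord $xy$ again splits $C$ into two odd holes, so the argument iterates around the cycle, producing one diametral rung per step. If $|C_1|<|C_2|$, the iteration terminates in a chordless even cycle on $|C_2|-|C_1|+4$ vertices, a contradiction; hence $|C_1|=|C_2|$ and all diametral chords are present, i.e.\ $C$ is a M\"obius ladder. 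To repair your write-up, this induction (or an equivalent argument from flawlessness plus minimality alone) must replace the appeal to the quasimonotone lemmas; as it stands, the proposal has a genuine gap.
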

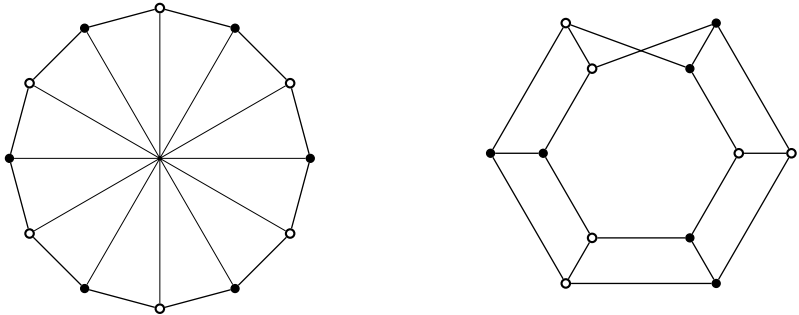
\begin{figure}[H]
\centering{
\begin{tikzpicture}[line width=0.5pt,minimum size=1.5pt,xscale=1,yscale=1]
\draw (-90:2) node[w] (0) {}  (-30:2) node[w] (2) {}  (30:2) node[w] (4) {} (90:2) node[w] (6) {}  (150:2) node[w] (8) {}  (210:2) node[w] (10) {}   ;
\draw (-60:2) node[b] (1) {}  (0:2) node[b] (3) {}  (60:2) node[b] (5) {}   (120:2) node[b] (7) {} (180:2) node[b] (9) {} (240:2) node[b] (11) {}  ;
\draw (0)--(1)--(2)--(3)--(4)--(5)--(6)--(7)--(8)--(9)--(10)--(11)--(0) ;
\draw[line width=0.25pt] (0)--(6) (2)--(8) (4)--(10) (1)--(7) (3)--(9) (5)--(11)  ;
\end{tikzpicture}
\hspace*{2cm}
\begin{tikzpicture}[line width=0.5pt,minimum size=1.5pt,xscale=1,yscale=1,rotate=30]
\draw (-90:1.3) node[b] (0) {}  (-30:2) node[w] (2) {}  (30:1.3) node[b] (4) {} (90:2) node[w] (6) {}  (150:1.3) node[b] (8) {}  (210:2) node[w] (10) {}   ;
\draw (-90:2) node[b] (1) {}  (-30:1.3) node[w] (3) {}  (30:2) node[b] (5) {}   (90:1.3) node[w] (7) {} (150:2) node[b] (9) {} (210:1.3) node[w] (11) {}  ;
\draw (1)--(2)--(5) (6)-- (9)--(10)--(1) (0)--(3)--(4) (7)--(8)--(11)--(0)
(0)--(1) (2)--(3) (4)--(5) (6)--(7) (8)--(9) (10)--(11) (5)--(7) (4)--(6) ;
\draw (-120:2) node {} ;  ;
\end{tikzpicture}
}
\caption{A M\"obius ladder}\label{fig:long_i1}
\end{figure}
\begin{proof}
Let $C$ have the alternating bipartition such that $a,b\in R$ and $ab$ divides $C$ into odd holes $C_1,C_2$, with $|C_1|\leq |C_2|$. Note, since $|C_1|,|C_2|\geq 5$, that $|C|\geq 8$. See~Fig.~\ref{fig:long_i2}. If $a$ is incident to more than one chord, then one of $C_1,C_2$ is not a hole, so $C$ is not a case~\ref{prehole:casea} prehole. So $ab$ is the only chord incident to $a$ and, similarly, $b$.

Let $v\in R\cap C_1$ have distance 2 from $a$. Then $v$ must have an edge to some $w\in C_2$. Since $C$ is a prehole, we must have $w\in R$. Then $(a,x,v,w,b,a)$ is an even cycle, so must have a chord. The only possible chords are from $x$ to the vertices on $C_2\cap L$ between $w$ and $b$. Thus, in particular, $xy$ must be an edge, where $y$ is adjacent to $b$ on $C_2$. Now $xy$ divides $C$ into odd holes $C'_1,C'_2$, so we can repeat the argument to show that $vw$ is an edge, where $w$ is adjacent to $y$ on $C_2$. We can iterate the argument for all vertices between $a$ and $b$ on $C$. If $|C_1|<|C_2|$, we will be left with an even hole on $|C_2|-|C_1|+4$ vertices. So we must have $|C_1|=|C_2|$, and all edges between diametral pairs on $C$, as in Fig.~\ref{fig:long_i1}.\end{proof}

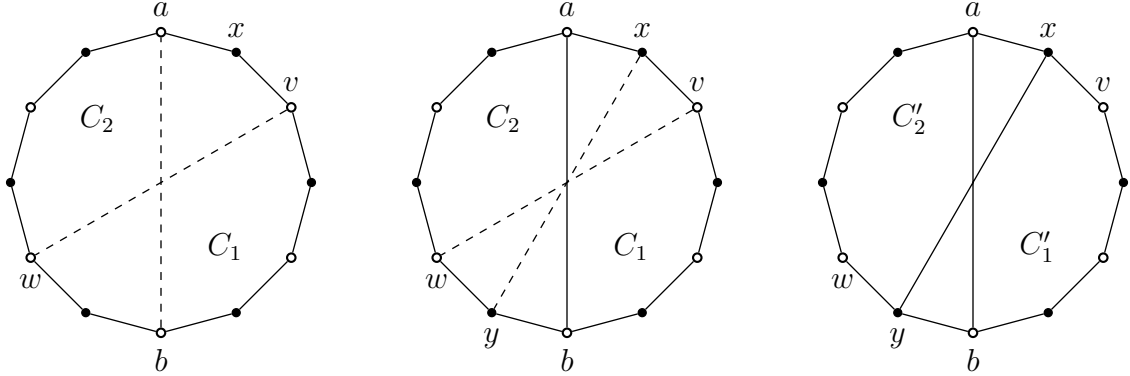
\begin{figure}[H]
\centering{
\begin{tikzpicture}[line width=0.5pt,minimum size=1.5pt,xscale=1,yscale=1]
\draw (-90:2) node[w,label=below:$b$] (0) {}  (-30:2) node[w] (2) {}  (30:2) node[w,label=above:$v$] (4) {} (90:2) node[w,label=above:$a$] (6) {}  (150:2) node[w] (8) {}  (210:2) node[w,label=below:$w$] (10) {}   ;
\draw (-60:2) node[b] (1) {}  (0:2) node[b] (3) {}  (60:2) node[b,label=above:$x$] (5) {}   (120:2) node[b] (7) {} (180:2) node[b] (9) {} (240:2) node[b] (11) {}  ;
\draw (0)--(1)--(2)--(3)--(4)--(5)--(6)--(7)--(8)--(9)--(10)--(11)--(0) ;
\draw (0)edge[dashed](6) (4)edge[dashed](10) (135:1.2) node {$C_2$} (-45:1.2) node {$C_1$};
\end{tikzpicture}
\hspace*{1cm}
\begin{tikzpicture}[line width=0.5pt,minimum size=1.5pt,xscale=1,yscale=1]
\draw (-90:2) node[w,label=below:$b$] (0) {}  (-30:2) node[w] (2) {}  (30:2) node[w,label=above:$v$] (4) {} (90:2) node[w,label=above:$a$] (6) {}  (150:2) node[w] (8) {}  (210:2) node[w,label=below:$w$] (10) {}   ;
\draw (-60:2) node[b] (1) {}  (0:2) node[b] (3) {}  (60:2) node[b,label=above:$x$] (5) {}   (120:2) node[b] (7) {} (180:2) node[b] (9) {} (240:2) node[b,label=below:$y$] (11) {}  ;
\draw (0)--(1)--(2)--(3)--(4)--(5)--(6)--(7)--(8)--(9)--(10)--(11)--(0) ;
\draw (0)--(6) (4)edge[dashed](10) (5)edge[dashed](11) (135:1.2) node {$C_2$} (-45:1.2) node {$C_1$} ;
\end{tikzpicture}
\hspace*{1cm}
\begin{tikzpicture}[line width=0.5pt,minimum size=1.5pt,xscale=1,yscale=1]
\draw (-90:2) node[w,label=below:$b$] (0) {}  (-30:2) node[w] (2) {}  (30:2) node[w,label=above:$v$] (4) {} (90:2) node[w,label=above:$a$] (6) {}  (150:2) node[w] (8) {}  (210:2) node[w,label=below:$w$] (10) {}   ;
\draw (-60:2) node[b] (1) {}  (0:2) node[b] (3) {}  (60:2) node[b,label=above:$x$] (5) {}   (120:2) node[b] (7) {} (180:2) node[b] (9) {} (240:2) node[b,label=below:$y$] (11) {}  ;
\draw (0)--(1)--(2)--(3)--(4)--(5)--(6)--(7)--(8)--(9)--(10)--(11)--(0) ;
\draw (0)--(6) (5)--(11) (135:1.2) node {$C'_2$} (-45:1.2) node {$C'_1$} ;
\end{tikzpicture}}
\caption{Implied diameter}\label{fig:long_i2}
\end{figure}

\subsection{Splitting}\label{ss:splitting}
Let $G$ be a flawless graph with a hole $C$ of length $|C|\geq 6$.  If $|C|$ is even, we conclude $G\notin\qmon$, so $|C|\geq 7$ is odd. Thus $G$ does not contain a triangle, from Lemmas~\ref{lem:triangle10} and~\ref{lem:triangle20}. We will assume that this has been tested. We will now show that $G$ must have the annular structure referred to in section~\ref{ss:stucture}, rather like a monotone graph with its ends identified.

Now suppose $G$ has a short odd hole $C$ with $C\geq 7$, determined by the procedure of Lemma~\ref{lem:shorthole}. Thus, by Corollary~\ref{cor:diameter}, $\diam(G)\geq \tfrac12(|C|-1)\geq 3$. Choose any $v\in C$, and consider the graph $\cG_v=G[V\sm N[v]]$. Then $\cG_v$ contains no holes, since any hole $H$ in $\cG_v$ must be a hole in $G$. But any hole $H$ in $G$ either contains $v$, or has a vertex $w$ adjacent to $v$, by Lemma~\ref{lem:quasi20}. Since $v,w\notin \cG_v$, $H\nsubseteq \cG_v$. Neither can $\cG_v$ contain a prehole, since any prehole must contain two holes. Thus $\cG_v$ is flawless and contains no holes or preholes, so is a monotone graph. Now $\diam(G)$ is at least $\diam(C)=(|C|-1)/2\geq 3$. Thus there exists a $w\in C$ such that $N(v)\cap N(w)=\es$.

By definition~\cite{DyJeMu17}, a graph is monotone if and only if it is bipartite, and its biadjacency matrix has an ordering of rows ($L$) and columns ($R$) so that it has the ``staircase'' structure indicated in Fig.~\ref{fig:mono}. This is symmetrical with respect to rows and columns~\cite{DyJeMu17}. That is, the transpose of the biadjacency matrix represents the same monotone graph with $L$ and $R$ interchanged.

A chain graph is a monotone graph in which each vertex in $L$ (resp.~$R$) has an edge to the first vertex in $R$ (resp.~$L$), in this ordering. Thus the biadjacency matrix has the form indicated in Fig.~\ref{fig:chain}. (See~\cite{DyJeMu17} for details.)
\begin{figure}[hb]
\centering{%
\begin{tikzpicture}[scale=0.35]
\draw[densely dotted] (0,5) coordinate (1){} --(8,5)--(8,0) coordinate (2) --(0,0) coordinate (0)--cycle
(3,3.75) coordinate (3) (6,1) coordinate (4) ;
\draw[color=gray!70!black,fill=gray!20!white] (0)--(1)--(3)--(4)--(2)--cycle  (8/3,5/3) node {$\cC$};
\draw (0,2.5) node[empty,left] {$L$} (4,5) node[empty,above] {$R$} ;
\end{tikzpicture}}\caption{Chain graph structure}\label{fig:chain}
\end{figure}
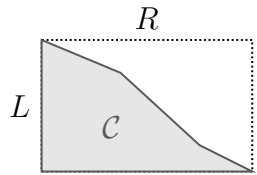
In the monotone representation, it is an easy observation that the graph has a decomposition into chain graphs, as indicated in Fig.~\ref{fig:mono}, where $L$ is partitioned in $D_1,D_3,\ldots$ and $R$ into $D_2,D_4,\ldots$. This partition was given previously by Brandst\"{a}dt and Lozin~\cite{BraLoz03}, though their proof is based on the representation of a monotone graph as a bipartite permutation graph, and is nontrivial.
\begin{figure}[ht]
\centering{%
\begin{tikzpicture}[scale=0.4,font=\small]
\draw[densely dotted] (0,11) coordinate (0){} --(11,11)--(11,0) coordinate (8) --(0,0)--cycle ;
\draw[densely dotted] (0,11) coordinate (1) {}
(4,11) coordinate (2) {} (4,8) coordinate (3) {}
(8,8)coordinate (4) {} (8,3)coordinate (5) {}
(11,3)coordinate (6) {} (11,0)coordinate (7) {} ;
\draw[densely dotted] (1)--(2)--(3)--(4)--(5)--(6)--(7) ;
\draw[line width=0.5pt,fill=gray!15!white] (0)--(1)--(2)--cycle
(1)--(3)--(2)--cycle (2)--(4)--(3)--cycle (4)--(3)--(5)--cycle
(4)--(6)--(5)--cycle (5)--(6)--(7)--cycle (8)--(6)--(7)--cycle ;
\draw[color=gray!70!black] (2.5,10) node[empty] {$\cC_1$}
(5.5,9) node[empty] {$\cC_2$}  (6.5,6.5) node[empty] {$\cC_3$}
(9,4.5) node[empty] {$\cC_4$}  (10,2) node[empty] {$\cC_5$}  ;
\draw (2,11) node[empty,above] {$D_2$} (4,11)  node[empty,above] {$|$} (6,11) node[empty,above] {$D_4$}
(8,11)  node[empty,above] {$|$}  (9.5,11) node[empty,above] {$D_6$} ;
\draw (0,1.5) node[empty,left] {$D_5$} (0,3) node[empty,left] {|}
(0,5.5) node[empty,left] {$D_3$} (0,8) node[empty,left] {|}
(0,9.5) node[empty,left] {$D_1$} ;
\end{tikzpicture}}\caption{Decomposition of a monotone graph}\label{fig:mono}
\end{figure}
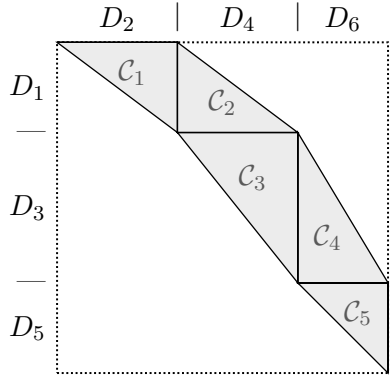
Now we have shown that $\cG_v$ is monotone, and there is a $w$ such that $N(v)\cap N(w)=\es$. Thus $N(w)$ and its neighbours induce a monotone subgraph $\cN_w$ of $G$, as indicated in Fig.~\ref{fig:neighbours}. It is easy to see that the vertex set of $\cN_w$ is $\{x\in L\cup R:\dist(w,x)\leq 2\}$. Clearly $\cN_w$ is the union of two chain graphs $C_w,C'_w$, with $C_w$ lying in the rows below and including $w$, and $C'_w$ in the rows above.
\begin{figure}[H]
\centering{%
\begin{tikzpicture}[scale=0.3,font=\small]
\draw[densely dotted] (0,11) coordinate (0){} --(11,11)--(11,0) coordinate (8) --(0,0)--cycle ;
\draw[densely dotted] (0,11) coordinate (1) {}
(4,11) coordinate (2) {} (4,8) coordinate (3) {}
(8,8)coordinate (4) {} (8,3)coordinate (5) {}
(11,3)coordinate (6) {} (11,0)coordinate (7) {} ;
\draw[densely dotted] (1)--(2)--(3)--(4)--(5) ;
\draw[line width=0.5pt,fill=gray!20!white] (2)--(4)--(3)--cycle (4)--(3)--(5)--cycle ;
\draw[line width=0.5pt,fill=gray!10!white] (1)--(3)--(2)--cycle  (5)--(4)--(6)--(8)--cycle  ;
\draw[color=gray!70!black]
(5.5,9) node[empty] {$\cC'_w$}  (6.5,6.5) node[empty] {$\cC_w$};
\draw (4,11)  node[empty,above] {$|$} (6,11) node[empty,above] {$N(w)$} (8,11)  node[empty,above] {$|$}  ;
\draw[dotted] (0,11) node[empty,left] {$v$} (0,8) node[empty,left] {$w$}--(3);
\end{tikzpicture}}\caption{Neighbourhood of $w$ in $\cG_v$}\label{fig:neighbours}
\end{figure}
We can determine this split using the monotone representation of $\cG_v$, with the algorithm of~\cite{SpBrSt87}. Then we can construct a representation of the adjacency matrix $A(G)$ of $G$ as indicated in the first diagram in Fig.~\ref{fig:quasimono}, where $D_2=N(w)$, $\cC_1=\cC_w$ (transposed), and $\cC_7=\cC'_w$. The chain graphs $\cC_2,\ldots,\cC_6$ are a decomposition of the monotone graph $\cG_w$. Note that the ordering of the chain graphs in the decomposition is circular, and the second diagram in Fig.~\ref{fig:quasimono} gives an equivalent representation to the first, where $\cC_1$ (transposed) is moved from the first to the last position.
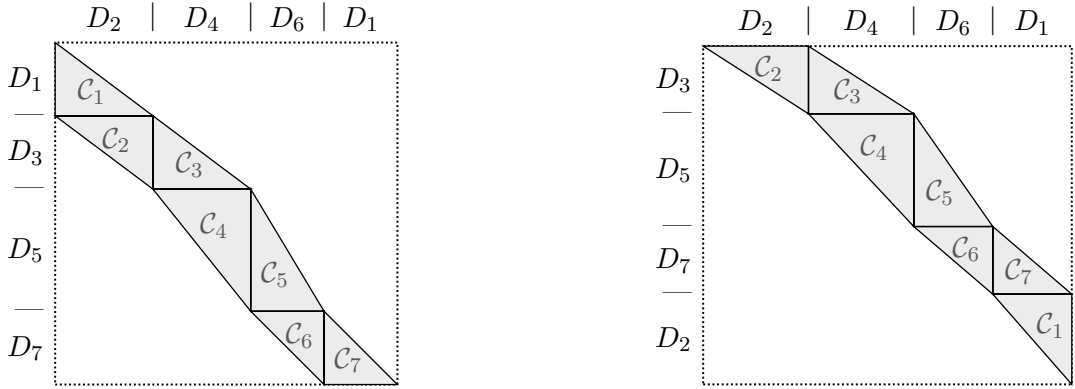
\begin{figure}[H]
\centering{%
\begin{tikzpicture}[scale=0.325,font=\small]
\draw[densely dotted] (0,14) coordinate (0){} --(14,14)--(14,0) coordinate (8) --(0,0)--cycle ;
\draw[densely dotted] (0,11) coordinate (1) {}
(4,11) coordinate (2) {} (4,8) coordinate (3) {}
(8,8)coordinate (4) {} (8,3)coordinate (5) {}
(11,3)coordinate (6) {} (11,0)coordinate (7) {} ;
\draw[densely dotted] (1)--(2)--(3)--(4)--(5)--(6)--(7) ;
\draw[line width=0.5pt,fill=gray!15!white] (0)--(1)--(2)--cycle
(1)--(3)--(2)--cycle (2)--(4)--(3)--cycle (4)--(3)--(5)--cycle
(4)--(6)--(5)--cycle (5)--(6)--(7)--cycle (8)--(6)--(7)--cycle ;
\draw[color=gray!70!black] (1.5,12) node[empty] {$\cC_1$}  (2.5,10) node[empty] {$\cC_2$}
(5.5,9) node[empty] {$\cC_3$}  (6.5,6.5) node[empty] {$\cC_4$}
(9,4.5) node[empty] {$\cC_5$}  (10,2) node[empty] {$\cC_6$}  (12,1) node[empty] {$\cC_7$}  ;
\draw (2,14) node[empty,above] {$D_2$} (4,14)  node[empty,above] {$|$} (6,14) node[empty,above] {$D_4$}
(8,14)  node[empty,above] {$|$}  (9.5,14) node[empty,above] {$D_6$} (11,14)  node[empty,above] {$|$}
(12.5,14) node[empty,above] {$D_1$};
\draw (0,1.5) node[empty,left] {$D_7$} (0,3) node[empty,left] {|}
(0,5.5) node[empty,left] {$D_5$} (0,8) node[empty,left] {|}
(0,9.5) node[empty,left] {$D_3$} (0,11) node[empty,left] {|}
 (0,12.5) node[empty,left] {$D_1$};
\end{tikzpicture}
\hspace*{3cm}
\begin{tikzpicture}[yscale=0.3,xscale=0.35,font=\small]
\draw[densely dotted] (0,11) coordinate (0){} --(14,11)--(14,-4) coordinate (9) --(0,-4)--cycle ;
\draw[densely dotted] (0,11) coordinate (1) {}
(4,11) coordinate (2) {} (4,8) coordinate (3) {}
(8,8)coordinate (4) {} (8,3)coordinate (5) {}
(11,3)coordinate (6) {} (11,0)coordinate (7) {} (14,0)coordinate (8) {} ;
\draw[densely dotted] (1)--(2)--(3)--(4)--(5)--(6)--(7) ;
\draw[line width=0.5pt,fill=gray!15!white]
(1)--(3)--(2)--cycle (2)--(4)--(3)--cycle (4)--(3)--(5)--cycle
(4)--(6)--(5)--cycle (5)--(6)--(7)--cycle (8)--(6)--(7)--cycle
(9)--(8)--(7)--cycle;
\draw[color=gray!70!black] (13.2,-1.4) node[empty] {$\cC_1$}  (2.5,10) node[empty] {$\cC_2$}
(5.5,9) node[empty] {$\cC_3$}  (6.5,6.5) node[empty] {$\cC_4$}
(9,4.5) node[empty] {$\cC_5$}  (10,2) node[empty] {$\cC_6$}  (12,1) node[empty] {$\cC_7$}  ;
\draw (2,11) node[empty,above] {$D_2$} (4,11)  node[empty,above] {$|$} (6,11) node[empty,above] {$D_4$}
(8,11)  node[empty,above] {$|$}  (9.5,11) node[empty,above] {$D_6$} (11,11)  node[empty,above] {$|$}
(12.5,11) node[empty,above] {$D_1$};
\draw (0,-2) node[empty,left] {$D_2$} (0,0) node[empty,left] {|} (0,1.5) node[empty,left] {$D_7$} (0,3) node[empty,left] {|}
(0,5.5) node[empty,left] {$D_5$} (0,8) node[empty,left] {|} (0,9.5) node[empty,left] {$D_3$} ;
\end{tikzpicture}}
\caption{Decomposition of $A(G)$ for a quasimonotone graph $G$}\label{fig:quasimono}
\end{figure}
Suppose there are $k$ chains graphs in the decomposition. In our illustration, Fig.~\ref{fig:quasimono}, $k=7$.
\begin{lemma}\label{lem:splitting}
  A flawless graph $G$ which  has an odd hole of size at least 7 is quasimonotone if and only if it has such a decomposition and does not contain a prehole. If there are $k$ chain graphs in the decomposition, then $k$ is odd, and the shortest hole in $G$ has $k$ vertices.
\end{lemma}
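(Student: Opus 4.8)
My plan is to reduce the biconditional to the already-established identity $\qmon=\noflaw\cap\och$, and then to extract the three structural assertions (existence of the decomposition, parity of $k$, length of the shortest hole) from the annular construction built above via $\cG_v$, $\cG_w$ and chain-graph splitting. Throughout I take $G$ connected, so that Lemma~\ref{lem:quasi20} applies and the short odd hole $C$ of size $\ell\ge7$ is dominating. Since $G$ is flawless by hypothesis, $G\in\qmon$ is equivalent to $G\in\och$, i.e.\ to the absence of a prehole. This settles the ``no prehole'' clause outright and makes the direction $(\Leftarrow)$ immediate: if $G$ admits the decomposition and has no prehole, then $G\in\noflaw\cap\och=\qmon$. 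Hence the substance is the direction $(\Rightarrow)$, where I must produce the decomposition, together with the two numerical claims.

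For the decomposition I would verify that the construction preceding the statement really assembles into a \emph{global} annular structure. Recall that $\cG_v=G[V\sm N[v]]$ is monotone: it is flawless, and by Lemma~\ref{lem:quasi20} together with Lemma~\ref{lem:preholetypes} it contains neither a hole nor a prehole, so by the Brandst\"adt--Lozin decomposition~\cite{BraLoz03} it splits into chain graphs, computable via~\cite{SpBrSt87}. Gluing in the two chain graphs $\cC_w,\cC'_w$ that comprise the neighbourhood of $w$ yields the circular arrangement of Fig.~\ref{fig:quasimono}. The point to check is global consistency: that every vertex lies in exactly one diagonal $D_i$; that edges run only between cyclically consecutive diagonals $D_i,D_{i+1}$, each induced bipartite graph being a chain graph $\cC_i$; and that the two representations glued along $N(w)$ agree, so that the diagonals close into a genuine cycle rather than a twisted band. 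Here the absence of preholes is decisive: by Lemmas~\ref{lem:holes20} and~\ref{lem:prehole20} any ``twist'' would surface as a crossover or a M\"obius prehole, so ruling out preholes is exactly what forces a clean annulus.

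I then establish the counts by a winding argument, in the following order. First, no hole can be confined to a contiguous proper arc $D_i,\dots,D_j$ of the annulus, since unrolling at a gap shows such an arc induces a monotone, hence hole-free, subgraph; thus every hole winds all the way round and meets every diagonal, so has length at least $k$. Next, $C$ winds \emph{exactly once}: if it passed through some diagonal $D_i$ at two distinct vertices, those vertices would share the first vertex of $D_{i+1}$ as a common neighbour (the chain-graph property of Fig.~\ref{fig:chain}), giving them $G$-distance at most $2$, whereas their $C$-distance is at least a full turn $\ge k\ge 7$, contradicting the defining property $\dist=\dist_C$ of a short hole (cf.\ Corollary~\ref{cor:diameter}). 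Hence $C$ meets each of the $k$ diagonals once, so $|C|=k$; as $C$ is odd, $k$ is odd, and $k=\ell\ge7$. Finally, choosing the first vertex of each diagonal produces a cycle that winds once, visits all $k$ diagonals and is chordless (a chord would be an edge between non-consecutive diagonals, which do not exist), so it is a hole on exactly $k$ vertices; combined with the lower bound this shows the shortest hole has $k$ vertices.

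The main obstacle I anticipate is the global gluing in the second step: showing that the locally monotone pieces, and in particular the two chain-graph representations meeting along $N(w)$, assemble into a single consistent annulus that closes up, and that flawlessness plus the absence of preholes precludes a M\"obius identification of the ends. The winding arguments for the parity of $k$ and for minimality of the shortest hole are comparatively routine once this annular structure is firmly in hand.
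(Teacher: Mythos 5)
Your reduction of the biconditional to $\qmon=\noflaw\cap\och$ is sound, and it correctly isolates the forward direction (existence of the decomposition) as the substantive part; your admission that the global gluing is the main obstacle is fair, since that is exactly the work carried by the construction preceding the lemma in the paper. The genuine problem is in your third step, the winding argument for the numerical claims.

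Your claim that the short hole $C$ meets each diagonal exactly once is false, and the argument for it breaks down as follows: if $C$ visits $D_i$ at two vertices $u,u'$, their distance along $C$ need not be ``at least a full turn''. A hole may cross a single chain graph $\cC_i$ by a monotone zigzag of three (or more) edges --- this is precisely what the paper's Fig.~\ref{fig:chainpath} depicts --- and then the two visits to $D_i$ are at $C$-distance $2$, which is perfectly compatible both with their common neighbour in $D_{i+1}$ and with the short-hole property $\dist=\dist_C$. Consequently $|C|=k$ does not follow, and in fact it is false: the paper's Fig.~\ref{fig:twoholes} shows a quasimonotone graph containing a short $7$-hole whose shortest hole has $5$ vertices, so there $k=5$ while $\ell=7$; your conclusion ``$k=\ell\ge 7$'' fails, and with it your proof that $k$ is odd. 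The paper avoids this entirely: $k$ is odd simply because the diagonals alternate between row-sets and column-sets around the annulus; and every hole crosses each chain graph in a single monotone pass that enters on a horizontal and leaves on a vertical segment of the staircase (or vice versa), hence uses an odd number of edges there, so the length of any hole is a sum of $k$ odd numbers --- in particular every hole is odd and has length at least $k$, with equality achieved by the zig-zag construction (your ``first vertex of each diagonal'' cycle is an acceptable variant, and your unrolling lower bound is fine). This per-chain-graph parity argument is the idea your proposal is missing; it also matters beyond the bare statement, because it shows that the decomposition by itself excludes even holes, independently of the no-prehole hypothesis, which is what justifies the algorithm subsequently searching only for preholes that wind around the annulus.
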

\begin{proof}
  It is clear that $k$ must be odd, since $D_1,D_3,\ldots,D_k$ are the sets of rows.

  The only reason that $G$ could fail to be quasimonotone is that it has an even hole or a prehole. But any hole $H$ must have at least one edge in each of the chain graphs $\cC_1,\cC_2,\ldots,\cC_k$. Otherwise, suppose $H$ has no edge in $\cC_i$. If $i=1$, then $H$ is entirely contained in a monotone graph with decomposition $\cC_{2},\ldots,\cC_k$. If $i=k$, then $H$ is entirely contained in a monotone graph with decomposition $\cC_{1},\ldots,\cC_{k-1}$. Otherwise, $H$ is entirely contained in a monotone graph with decomposition $\cC_{i+1},\ldots,\cC_k,\cC_1,\ldots,\cC_{i-1}$. This contradicts monotonicity.

Now we observe that $H$ must have an odd number of edges in each chain graph $\cC$. This is because the path through $\cC$ comprises alternating horizontal and vertices line segments, representing vertices, which meet at nodes representing edges. See Fig.~\ref{fig:chainpath}. The path must be monotonic within $\cC$, or $H$ would have a chord (see~Fig.~\ref{fig:chainpath}), a contradiction. For the same reason, the path cannot pass through $\cC$ more than once. So the path must enter $\cC$ through a horizontal segment and leave through a vertical segment, or vice versa. Therefore, there must be an odd number of edges of $H$ in $\cC$.  Hence $H$ has an odd number ($k$) of odd numbers of edges, and so the total number of edges of $H$ must be odd. Thus any hole $H$ in $G$ must be an odd hole.
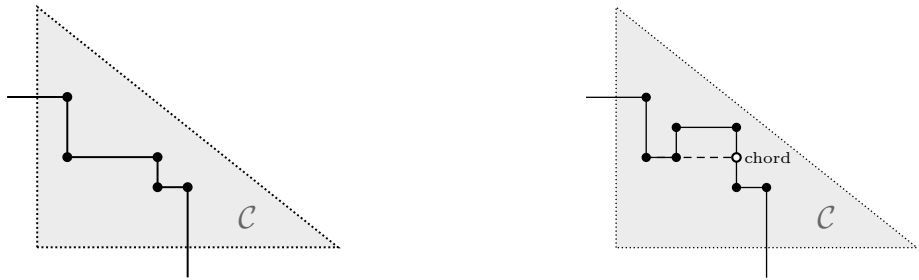
\begin{figure}[H]
\centering{
\begin{tikzpicture}[line width=0.75pt,minimum size=2pt,inner sep=0.8pt,scale=0.4]
\draw[densely dotted,fill=gray!15!white] (0,0)--(0,8)--(10,0)--cycle;
\draw (-1,5)--(1,5) node[b] {} --(1,3) node[b] {} --(4,3) node[b] {} --(4,2) node[b] {} --(5,2) node[b] {} -- (5,-1) ;
\draw (7,1) node[color=gray!80!black] {$\cC$} ;
\end{tikzpicture}
\hspace*{3cm}
\begin{tikzpicture}[line width=0.5pt,minimum size=2pt,inner sep=0.8pt,scale=0.4]
\draw[densely dotted,fill=gray!15!white] (0,0)--(0,8)--(10,0)--cycle;
\draw (-1,5)-- (1,5) node[b] {} --(1,3) node[b] (b) {} --(2,3) node[b] {} --(2,4) node[b] {} --(4,4) node[b] {} --(4,2) node[b] {} --(5,2) node[b] {} -- (5,-1) ;
\draw (4,3) node[inner sep=0.9pt,w,label=right:{\tiny chord}] (a) {}  (a)edge[densely dashed](b) (7,1) node[color=gray!80!black] {$\cC$} ;
\end{tikzpicture}}
\caption{Path through a chain graph}\label{fig:chainpath}
\end{figure}\vspace{-1ex}
Now we observe that the number $k$ of chain graphs in the decomposition of $G$ is the length of a shortest hole. We take exactly one edge in each of the $\cC_i$ ($i=1,2,\ldots,k$), connected by a zig-zag path of vertices, as indicated in Fig.~\ref{fig:quasimono}.
\end{proof}
However, it is still possible that $G$ contains a prehole. However, the decomposition of Lemma~\ref{lem:splitting} implies that any prehole must wind around the annular structure of $G$.  We consider the question of detecting such preholes in section~\ref{ss:findprehole} below.

The decomposition of $G$ can clearly be carried out in polynomial time. We check that $G$ is flawless, and has no triangle, or hole of size smaller than 7. We check that $G$ is not a prehole. If so, we determine a short odd hole. Then we use the algorithm of~\cite{SpBrSt87} to determine the monotone structure of $\cG_v$, the monotone structure of $\cG_w$, and the split of $\cN_w$. If any  of these steps fails, $G$ is not quasimonotone. If all succeed, $G$ is quasimonotone, unless it contains a prehole. As an example, consider the graph $G$ shown in Fig.~\ref{fig:qmexample}.\vspace{-1ex}
\begin{figure}[ht]
\centering{
\begin{tikzpicture}[line width=0.5pt,minimum size=2pt,inner sep=0.9pt,scale=1,font=\footnotesize]
\draw (90:1) node[b] (0){} -- (90+360/7:1) node[b] (1) {} -- (90+2*360/7:1) node[b] (2) {} -- (90+3*360/7:1) node[b] (3) {} -- (90+4*360/7:1) node[b] (4) {} -- (90+5*360/7:1) node[b] (5) {} -- (90+6*360/7:1) node[b] (6) {} -- (0) ;
\draw (90:1.67) node[b] (0'){} -- (90+360/7:1.67) node[b] (1') {} -- (90+2*360/7:1.67) node[b] (2') {} -- (90+3*360/7:1.67) node[b] (3') {} -- (90+4*360/7:1.67) node[b] (4') {} -- (90+5*360/7:1.67) node[b] (5') {} -- (90+6*360/7:1.67) node[b] (6') {} -- (0') ;
\foreach \x in {0,...,6} {\draw (\x)--(\x');};
\foreach \x [evaluate = \x as \y using \x-1] in {1,...,7} {\draw (90-\y*360/7:0.8) node {$\x$};};
\foreach \x [evaluate = \x as \y using \x-1] in {1,...,7} {\draw (90-\y*360/7:1.85) node {$\x'$};};
\draw (0,0) node {\large$G$};
\end{tikzpicture}\hspace*{1cm}
\begin{tikzpicture}[line width=0.5pt,minimum size=2pt,inner sep=0.9pt,scale=1,font=\footnotesize]
\draw  (90-2*360/7:1) node[w] (3) {} -- (90-3*360/7:1) node[b] (4) {} -- (90-4*360/7:1) node[w] (5) {} -- (90-5*360/7:1) node[b] (6) {} ;
\draw (90-360/7:1.67) node[w] (2') {} -- (90-2*360/7:1.67) node[b] (3') {} -- (90-3*360/7:1.67) node[w] (4') {} -- (90-4*360/7:1.67) node[b] (5') {} -- (90-5*360/7:1.67) node[w] (6') {} -- (90-6*360/7:1.67) node[b] (7') {} ;
\foreach \x in {3,...,6} {\draw (\x)--(\x');};
\foreach \x [evaluate = \x as \y using \x-1] in {3,...,6} {\draw (90-\y*360/7:0.8) node {$\x$};};
\foreach \x [evaluate = \x as \y using \x-1] in {2,...,7} {\draw (90-\y*360/7:1.85) node {$\x'$};};
\draw (0,0) node {\large$\cG_1$};
\end{tikzpicture}\hspace*{1cm}
\begin{tikzpicture}[line width=0.5pt,minimum size=2pt,inner sep=0.9pt,scale=1,font=\footnotesize]
\draw (90:1) node[w] (0){} -- (90-360/7:1) node[b] (1) {} -- (90-2*360/7:1) node[w] (2) {} (90-5*360/7:1) node[w] (5) {} -- (90-6*360/7:1) node[b] (6) {} -- (0) ;
\draw (90:1.67) node[b] (0'){} -- (90-360/7:1.67) node[w] (1'){} (90-6*360/7:1.67) node[w] (6') {} -- (0') ;
\draw (0)--(0') (1)--(1') (6)--(6') ;
\foreach \x [evaluate = \x as \y using \x-1] in {1,2,3,6,7} {\draw (90-\y*360/7:0.8) node {$\x$};};
\foreach \x [evaluate = \x as \y using \x-1] in {1,2,7} {\draw (90-\y*360/7:1.85) node {$\x'$};};
\draw (0,-1.67) node {};
\draw (0,0) node {\large$\cN_1$};
\end{tikzpicture}\\[2ex]
\scalebox{0.85}{$\kbordermatrix{ & 7 & 1' & 2' & 3' & 4' & 5' & 6' & 7'\\
1 & 1 & 1 & 1 & 0 & 0 & 0 & 0 & 0\\
2 & 0 & 1 & 1 & 1 & 0 & 0 & 0 & 0\\
3 & 0 & 0 & 1 & 1 & 1 & 0 & 0 & 0\\
4 & 0 & 0 & 0 & 1 & 1 & 1 & 0 & 0\\
5 & 0 & 0 & 0 & 0 & 1 & 1 & 1 & 0\\
6 & 0 & 0 & 0 & 0 & 0 & 1 & 1 & 1\\
7 & 0 & 0 & 0 & 0 & 0 & 0 & 1 & 1\\
1' & 0 & 0 & 0 & 0 & 0 & 0 & 0 & 1\\
}$}
}
\caption{$G$, $\cG_1$, $\cN_1$ and the derived $A(G)$.}\label{fig:qmexample}
\end{figure}

Observe that this procedure really only requires that $G$ be triangle-free, and have diameter at least 3. Thus it can be applied to test quasimonotonicity of some graphs with 5-holes, for example that in Fig.~\ref{fig:twoholes}.

\subsection{Recognising preholes}\label{ss:findprehole}
Let $G=(V,E)$ be a flawless graph with a hole of size $\ell\geq 7$. Lemma~\ref{lem:splitting} can determine whether or not $G$ is quasimonotone provided it does not contain a prehole. We now consider recognition of a prehole in such a graph.

We use the partition of $V$ from section~\ref{ss:splitting} into independent sets $D_1,D_2,\ldots,D_\ell$, where $D_{\ell+1}\equiv D_1$. All edges in $E$ run between $D_i$ and $D_{i+1}$ ($i\in[\ell]$). Let $G_i=G[D_i\cup D_{i+1}]$, with edge set $E_i$, and let $\overline{G}_i=(V,E\setminus E_i)$. Note that $G_i$ is a chain graph and $\overline{G}_i$ is a monotone graph. Thus $\overline{G}_i$ is bipartition, with bipartition $\LR$, say, with $D_i,D_{i+1}\in L$.

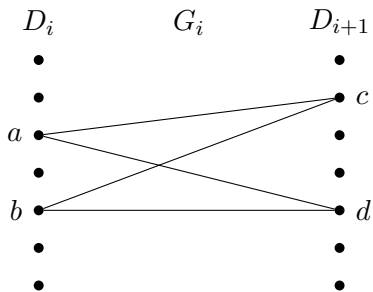
\begin{figure}[ht]
\centerline{
\begin{tikzpicture}[xscale=2,yscale=0.5,font=\small];
\draw
 (0,0) node[b] (1) {} (2,0) node[b] (1') {}
 (0,1) node[b] (2) {} (2,1) node[b] (2') {}
 (0,2) node[b,label=left:$b$] (3) {} (2,2) node[b,label=right:$d$] (3') {}
 (0,3) node[b] (4) {} (2,3) node[b] (4') {}
 (0,4) node[b,label=left:$a$] (5) {} (2,4) node[b] (5') {}
 (0,5) node[b] (6) {} (2,5) node[b,label=right:$c$] (6') {}
 (0,6) node[b] (7) {} (2,6) node[b] (7') {};
 \draw[thin] (5)--(3')--(3)--(6')--(5);
 \draw (1,7) node[empty]{$G_i$} (0,7) node[empty]{$D_i$} (2,7) node[empty]{$D_{i+1}$} ;
 \end{tikzpicture} }
\caption{Possible crossover}\label{fig:findprehole1}
\end{figure}

We search for possible crossovers in $G_i$. These are pairs $a,b\in D_{i+1}$, $c,d\in D_i$, such that $ac,ad,bc,bd\in E$. We list all such quadruples $a,b,c,d$, $O(n^4)$ in total, see Fig.~\ref{fig:findprehole1}.
Given any quadruple, we attempt to determine vertex disjoint paths $P_{ac},P_{bd}$ in $\overline{G}_i$ between $a,c$ and $b,d$ or between $a,d$ and $b,c$. See Fig.~\ref{fig:findprehole2}, cases (a) and (b). We can do this in $O(n|E|)=O(n^3)$ time by network flow. Both paths are even length, since $G_i$ is bipartite and $a,b,c,d\in L$.

 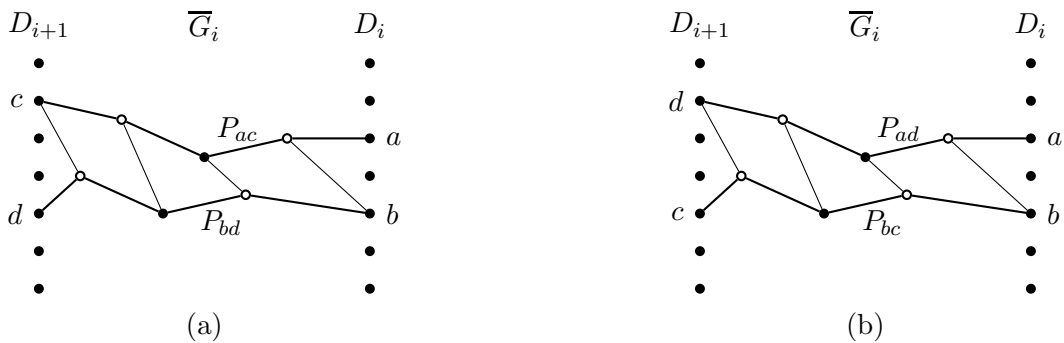
\begin{figure}[H]
 \centerline{
\begin{tikzpicture}[xscale=2.2,yscale=0.5,font=\small]
\draw
 (2,0) node[b] (1) {} (0,0) node[b] (1') {}
 (2,1) node[b] (2) {} (0,1) node[b] (2') {}
 (2,2) node[b,label=right:$b$] (3) {} (0,2) node[b,label=left:$d$] (3') {}
 (2,3) node[b] (4) {} (0,3) node[b] (4') {}
 (2,4) node[b,label=right:$a$] (5) {} (0,4) node[b] (5') {}
 (2,5) node[b] (6) {} (0,5) node[b,label=left:$c$] (6') {}
 (2,6) node[b] (7) {} (0,6) node[b] (7') {};
\draw  (0.5,4.5) node[w] (x) {} (1,3.5) node[b] (y') {} (1.5,4) node[w] (z) {} ;
\draw[thick] (6')--(x)--(y')--(z)--(5) ;
  \draw  (0.25,3) node[w] (x') {} (0.75,2) node[b] (y) {} (1.25,2.5) node[w] (z') {} ;
\draw[thick] (3')--(x')--(y)--(z')--(3) ;
\draw[thin] (6')--(x') (x)--(y) (y')--(z') (z)--(3);
  \draw (2,7) node[empty]{$D_i$} (0,7) node[empty]{$D_{i+1}$}
 (1,7) node[empty]{$\overline{G}_i$} (1.2,4.25) node[empty]{$P_{ac}$} (1.1,1.75) node[empty]{$P_{bd}$} ;
 \draw (1,-1) node {(a)};
 \end{tikzpicture}
 \hspace*{3cm}
 \begin{tikzpicture}[xscale=2.2,yscale=0.5,font=\small]
\draw
 (2,0) node[b] (1) {} (0,0) node[b] (1') {}
 (2,1) node[b] (2) {} (0,1) node[b] (2') {}
 (2,2) node[b,label=right:$b$] (3) {} (0,2) node[b,label=left:$c$] (3') {}
 (2,3) node[b] (4) {} (0,3) node[b] (4') {}
 (2,4) node[b,label=right:$a$] (5) {} (0,4) node[b] (5') {}
 (2,5) node[b] (6) {} (0,5) node[b,label=left:$d$] (6') {}
 (2,6) node[b] (7) {} (0,6) node[b] (7') {};
\draw  (0.5,4.5) node[w] (x) {} (1,3.5) node[b] (y') {} (1.5,4) node[w] (z) {} ;
\draw[thick] (6')--(x)--(y')--(z)--(5) ;
  \draw  (0.25,3) node[w] (x') {} (0.75,2) node[b] (y) {} (1.25,2.5) node[w] (z') {} ;
\draw[thick] (3')--(x')--(y)--(z')--(3) ;
\draw[thin] (6')--(x') (x)--(y) (y')--(z') (z)--(3);
  \draw (2,7) node[empty]{$D_i$} (0,7) node[empty]{$D_{i+1}$}
 (1,7) node[empty]{$\overline{G}_i$} (1.2,4.25) node[empty]{$P_{ad}$} (1.1,1.75) node[empty]{$P_{bc}$} ;
 \draw (1,-1) node {(b)};
 \end{tikzpicture} }
\caption{Vertex-disjoint paths}\label{fig:findprehole2}
\end{figure}
 If these paths do not exist, we discard this quadruple and consider the next in the list. If these paths do exist, in case (a) we have found a crossover prehole $P_{ac},ad,P_{bd},bc$, in case (b) we have found a M\"obius prehole $P_{ad},bd,P_{bc},ac$. This is clearly a cycle with even length. That it is a prehole is certified by reversing the bipartition on $P_{ac}$ in case(a), $P_{ad}$ in case (b), as shown in Fig.~\ref{fig:findprehole3}.

 \begin{figure}[H]
 \centerline{
\begin{tikzpicture}[xscale=2,yscale=0.5,font=\small]
\draw
 (2,2) node[w,label=right:$b$] (3) {} (0,2) node[w,label=left:$d$] (3') {}
 (2,4) node[b,label=right:$a$] (5) {}  (0,5) node[b,label=left:$c$] (6') {} ;
\draw  (0.5,4.5) node[w] (x) {} (1,3.5) node[b] (y') {} (1.5,4) node[w] (z) {} ;
\draw[thick] (6')--(x)--(y')--(z)--(5) ;
  \draw  (0.25,3) node[b] (x') {} (0.75,2) node[w] (y) {} (1.25,2.5) node[b] (z') {} ;
\draw[thick] (3')--(x')--(y)--(z')--(3) ;
\draw[thin] (6')--(x') (x)--(y) (y')--(z') (z)--(3);
  \draw[thick] (5)..controls (3,0) and (2,-1)..(3') (3)..controls (3,7) and (2,7)..(6') ;
  \draw[thin] (6')edge[bend left=40](5) (3')edge[bend right=55](3) ;
  \draw (1.2,4.2) node[empty]{\scriptsize$P_{ac}$} (0.85,2.6) node[empty]{\scriptsize$P_{bd}$};
\draw (1,-1) node {(a) Crossover};
  \end{tikzpicture}
\hspace*{1.5cm}
\begin{tikzpicture}[xscale=2,yscale=0.5,font=\small]
\draw
 (2,2) node[w,label=right:$b$] (3) {} (0,2) node[w,label=left:$c$] (3') {}
 (2,4) node[b,label=right:$a$] (5) {}  (0,5) node[b,label=left:$d$] (6') {} ;
\draw  (0.5,4.5) node[w] (x) {} (1,3.5) node[b] (y') {} (1.5,4) node[w] (z) {} ;
\draw[thick] (6')--(x)--(y')--(z)--(5) ;
  \draw  (0.25,3) node[b] (x') {} (0.75,2) node[w] (y) {} (1.25,2.5) node[b] (z') {} ;
\draw[thick] (3')--(x')--(y)--(z')--(3) ;
\draw[thin] (6')--(x') (x)--(y) (y')--(z') (z)--(3);
  \draw[thick] (5)..controls (3,0) and (2,-1)..(3') (3)..controls (3,7) and (2,7)..(6') ;
  \draw[thin] (6')edge[bend left=40](5) (3')edge[bend right=55](3) ;
  \draw (1.2,4.2) node[empty]{\scriptsize$P_{ad}$} (0.85,2.6) node[empty]{\scriptsize$P_{bc}$};
  \draw (1,-1) node {(b) M\"obius};
  \end{tikzpicture} }
\caption{Preholes}\label{fig:findprehole3}
\end{figure}
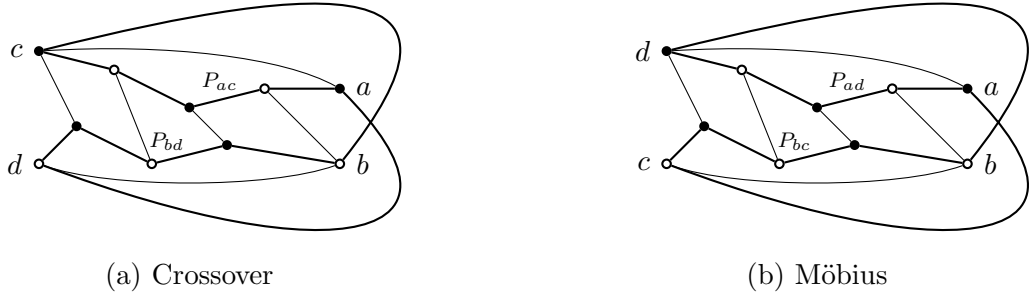

Thus we can detect a prehole, or show that none exists, in $O(n^4\times n^3)=O(n^7)$ time.
If a prehole exists, we may stop. We have shown that $G$ is not quasimonotone.

\section{Flawless graphs without long holes} \label{sec:nolongholes}

\subsection{Minimal preholes in hole-free graphs}\label{ss:noholes}

The main problem here is to recognise preholes. Let $C$ be any minimal prehole in a flawless hole-free graph $G$. A triangle in $G[C]$ will be called an \emph{interior} triangle of $C$ if it has no edge in common with $C$,  a \emph{crossing} triangle if it has one edge in common with $C$, and a \emph{cap} of $C$ if it has two edges in common with $C$.
\begin{lemma}\label{prehole:lem30}
  If $C$ is a minimal prehole in a flawless graph with $|C|>12$, then $G[C]$ has no interior or crossing triangles, and $C$ is determined by two edge-disjoint caps.
\end{lemma}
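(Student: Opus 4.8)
The plan is to treat the three assertions separately: no crossing triangles, no interior triangles, and the reduction to two edge-disjoint caps joined by two paths. Write $C=(c_0,c_1,\dots,c_{\ell-1})$ with its alternating bipartition $L,R$; since $C$ is a prehole every chord of $C$ joins two vertices of the same part, i.e.\ every chord is even. The first assertion is then immediate: a crossing triangle would consist of a cycle edge $c_ic_{i+1}$ together with two chords from a common vertex $u\in C$ to $c_i$ and to $c_{i+1}$, but $c_i,c_{i+1}$ lie in opposite parts, so one of $uc_i,uc_{i+1}$ would be an odd chord, contradicting that $C$ is a prehole. The same parity count shows that every triangle of $G[C]$ is either a cap $\{c_{s-1},c_s,c_{s+1}\}$ (two cycle edges and the even chord $c_{s-1}c_{s+1}$) or an \emph{interior} triangle whose three vertices lie in one part and whose three edges are all even chords.

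For interior triangles I would use a rerouting that exploits minimality of $C$ together with the chord bound of Lemma~\ref{lem:quasi35}. Suppose $T=\{P,Q,R\}$ is an interior triangle and let $P,Q,R$ cut $C$ into arcs of even lengths $2p,2q,2r$ with $2p+2q+2r=\ell$. Replacing the arc between $R$ and $P$ by the two chords $PQ,QR$ produces an even cycle $C'$ of length $2r+2<\ell$ inside $G[C]$. Tracking the alternating bipartition of $C'$ (relative to $C$ only $Q$ changes part), the one edge that could be an odd chord of $C'$ is a chord from $Q$ into the interior of the replaced arc; hence $C'$ is a prehole \emph{unless} $Q$ carries such a chord. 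By Lemma~\ref{lem:quasi35}, $Q$ has at most one chord besides $QP,QR$, and when it does its three chord-endpoints $w,x,y$ satisfy $\dist_C(w,x)=\dist_C(x,y)=2$ with $x$ the middle one. A short case check forces the offending chord to be that midpoint, so the replaced arc has length exactly $4$. Performing the three symmetric reroutes, each fails only if the corresponding arc has length $4$; if all three fail then $\ell=2p+2q+2r=12$. As $|C|>12$, some reroute yields a strictly shorter prehole in $G[C]$, contradicting minimality. This is exactly where the hypothesis $|C|>12$ is used, and it shows every triangle of $G[C]$ is a cap.

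It remains to extract the caps. Since $C$ is not a hole it has an even chord, which splits it into two odd cycles; as $G$ is hole-free each contains a triangle by Lemma~\ref{lem:oddcycle}, and by the above that triangle is a cap, so caps exist. Taking a cap $K_1$ with tip $a$, its chord $a^-a^+$ cuts off the triangle and leaves an odd cycle on $\ell-1\ge 12$ vertices, which again contains a cap $K_2$; a parity argument (interior triangles are now excluded, and any triangle using the edge $a^-a^+$ would need a cross-parity chord) shows $K_2$ shares neither tip nor chord with $K_1$, so the two caps are edge-disjoint. Deleting the tips $a,b$ and inserting the two cap chords gives an even cycle $C''$ of length $\ell-2$; tracking the alternating bipartition, exactly one of the two arcs between the caps has its colours reversed, so $C''$ is a prehole \emph{unless} some chord of $C$ runs between the two arcs. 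Minimality forbids a shorter prehole, which I would leverage to show that both arcs are chordless: an innermost internal chord of an arc forms a chordless odd cycle with its subarc, which by hole-freeness is a triangle, hence a third cap, and a suitably chosen cut of two tips then returns a strictly smaller prehole. Once both arcs are chordless the only distance-$2$ chords are the two cap chords (a distance-$2$ chord across the arcs would force a cap tip between its ends), so there are exactly two caps and $C$ is the union of $K_1$, $K_2$ and the two chordless paths joining them; that is, $C$ is determined by two edge-disjoint caps.

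The main obstacle is this last step. The cut-two-tips reroute only yields a shorter prehole when no chord crosses between the two resulting arcs, and the ``rung'' chords of a ladder-like prehole always threaten to supply such a crossing chord; so ruling out internal chords (equivalently, further caps) has to proceed by an \emph{extremal} choice of the pair of tips to cut, and the delicate bookkeeping is to verify that reversing the colours of one arc keeps every surviving chord even. Once that verification is in hand, minimality of $C$ closes each case.
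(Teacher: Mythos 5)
Your first three paragraphs prove everything the lemma actually asserts, and they are correct. The crossing-triangle parity argument is the paper's. For interior triangles the paper is more direct: it applies Lemma~\ref{lem:quasi35} at each corner of the triangle $\{v,w,x\}$ to bound each of the three segments into which $v,w,x$ cut $C$ by $4$, giving $|C|\le 12$ outright, with no rerouting and no appeal to minimality. Your reroute-and-shorten argument is a valid alternative (it is the same device the paper uses later, e.g.\ in Lemmas~\ref{prehole:lem37} and~\ref{prehole:lem40}), but it needs one small repair: when the kept arc has length $2$, the rerouted cycle $C'$ is a quadrangle, which under the paper's definitions is \emph{not} a prehole (a hole has length at least five), so that reroute also ``fails'' without any obstructing chord at $Q$. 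This does not hurt you: a reroute then fails only when its arc has length $2$ or $4$, so if all three fail then $\ell\le 12$, and the contradiction with $|C|>12$ survives. Your extraction of two edge-disjoint caps (descend to a triangle via Lemma~\ref{lem:oddcycle} and hole-freeness, observe that every triangle of $G[C]$ must now be a cap, then delete the first tip to find a second cap) is essentially the paper's argument, which instead finds one cap on each side of the splitting chord.

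The ``main obstacle'' you describe is not an obstacle, because it is not part of the lemma. ``$C$ is determined by two edge-disjoint caps'' means exactly what you have already established: two edge-disjoint caps exist, and their tips cut $C$ into two paths $P_1,P_2$. Chordlessness of $P_1$ and $P_2$ is the separate, later Lemma~\ref{prehole:lem35}, which is deduced \emph{from} the present lemma, so you should not fold it in here. Worse, the stronger conclusion you aim for, that $C$ has \emph{exactly} two caps, is not something you can hope to prove: Lemma~\ref{prehole:lem37} (see Fig.~\ref{fig:v chords}) permits a minimal prehole to carry a chord $vb$ from a tip $v$ to the vertex $b$ at distance~$2$ along $C$, and then $\{v,a,b\}$ is a third cap. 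So delete your final paragraph; what remains is a complete and correct proof of the statement.
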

\begin{proof}
Suppose $C$ has an interior triangle, as shown in Fig.~\ref{fig:internal triangle}. The vertices $v,w,x$ partition $C$ into three segments. From Lemma~\ref{lem:quasi35}, we must have $\dist_C(v,w)\leq 4$, $\dist_C(w,x)\leq 4$ and $\dist_C(x,v)\leq 4$, and hence $|C|\leq 12$.
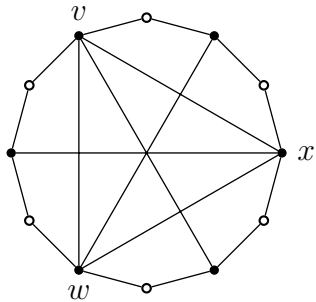
\begin{figure}[H]
\centering{
\begin{tikzpicture}[line width=0.5pt,minimum size=1.5pt,xscale=0.9,yscale=0.9]
\draw (-90:2) node[w] (0) {}  (-30:2) node[w] (2) {}  (30:2) node[w] (4) {} (90:2) node[w] (6) {}  (150:2) node[w] (8) {}  (210:2) node[w] (10) {}   ;
\draw (-60:2) node[b] (1) {}  (0:2) node[b,label=right:$x$] (3) {}  (60:2) node[b] (5) {}   (120:2) node[b,label=above:$v$] (7) {} (180:2) node[b] (9) {} (240:2) node[b,label=below:$w$] (11) {}  ;
\draw (0)--(1)--(2)--(3)--(4)--(5)--(6)--(7)--(8)--(9)--(10)--(11)--(0) ;
\draw (7)--(11)--(3)--(7) (7)--(1) (3)--(9) (11)--(5);
\end{tikzpicture}}
\caption{An internal triangle}\label{fig:internal triangle}
\end{figure}
  If $G[C]$ has a crossing triangle, then $C$ has an odd chord, a contradiction. Now, since $C$ is a prehole, there must be an even chord $u_0v_0$ which partitions $C$ into two odd cycles $C_1$, $C_2$ with common edge $u_0v_0$. Suppose that $C_1$ is not a triangle. Since $G\in\hf$, $C_1$ must have a chord $u_1v_1$ partitioning it into an odd cycle $C'_1$ and an even cycle $C'_2$, with $|C'_1|<|C|$. If $C'_1$ is not a triangle we repeat the process until we reach a triangle $T_1$, which must be a cap, and must be unique. Otherwise we have discovered a crossing or internal triangle of $C$, a contradiction. This must occur after at most $|C_1|$ repetitions. We then apply the same procedure to $C_2$, obtaining the second cap $T_2$.  Clearly $T_1$ and $T_2$ are edge-disjoint, since they are separated by the chord $u_0v_0$. Then $|C|\geq 6$ implies that they can share at most one of the vertices $u_0,v_0$.
\end{proof}
Note that preholes with fewer than 12 vertices may contain an interior triangle. See Fig.~\ref{fig:degree2} for an example with 6 vertices. However, the bound 12 is probably far from tight.

Let $T_1$, $T_2$ be caps of $C$, such that $v_i\in T_i$ is adjacent to two edges of $C$ $(i=1,2)$. Then there are two edge-disjoint \Path{v_1}{v_2}s $P_1,P_2$ in $C$.  See  Fig.~\ref{prehole:fig30}.

\begin{figure}[H]
\tikzset{every node/.style={circle,draw,fill=none,inner sep=0pt,minimum size=1.25mm}}
\centerline{\begin{tikzpicture}[xscale=0.8,yscale=0.6,font=\small]
\draw (0,0) node[b,label=left:$v_1$] (0) {}
 (1,1) node[w] (1) {} (1,-1) node[w] (1') {}
 (2,1) node[b] (2) {} (2,-1) node[b] (2') {}
 (3,1) node[w] (3) {} (3,-1) node[w] (3') {}
 (5,1) node[w] (5) {} (5,-1) node[w] (5') {}
 (6,1) node[b] (6) {} (6,-1) node[b] (6') {}
 (7,1) node[w] (7) {} (7,-1) node[w] (7') {}
 (8,0) node[b,label=right:$v_2$] (8) {} ;
 \draw (3')--(2')--(1')--(0)--(1)--(2)--(3) (5')--(6')--(7')--(8)--(7)--(6)--(5) ;
 \draw[dashed] (3)--(5) (3')--(5') ;
 \draw[dotted] (1)--(1') (7)--(7') ;
 \draw (0.6,0) node[empty] {$T_1$}  (7.4,0) node[empty] {$T_2$}
 (2.5,1.5) node[empty] {$P_1$}  (5.5,-1.5) node[empty] {$P_2$}  ;
\end{tikzpicture}\hspace*{1cm}
\begin{tikzpicture}[xscale=0.8,yscale=0.6,font=\small]
\draw (0,0) node[w,label=left:$v_1$] (0) {}
 (1,1) node[b] (1) {} (1,-1) node[w] (1') {}
 (2,1) node[w] (2) {} (2,-1) node[b] (2') {}
 (3,1) node[b] (3) {} (3,-1) node[w] (3') {}
 (5,1) node[b] (5) {} (5,-1) node[w] (5') {}
 (6,1) node[w] (6) {} (6,-1) node[b] (6') {}
 (7,1) node[b] (7) {} (7,-1) node[w] (7') {}
 (8,0) node[w,label=right:$v_2$] (8) {} ;
 \draw (3')--(2')--(1')--(1)--(2)--(3) (5')--(6')--(7')--(7)--(6)--(5) ;
 \draw[dashed] (3)--(5) (3')--(5') ;
 \draw (1)--(1') (7)--(7') ;
 \draw (2.5,1.5) node[empty] {$P_1$}  (5.5,-1.5) node[empty] {$P_2$}  ;
\end{tikzpicture}}
\caption{A prehole and its Hamilton subgraph}\label{prehole:fig30}
\end{figure}
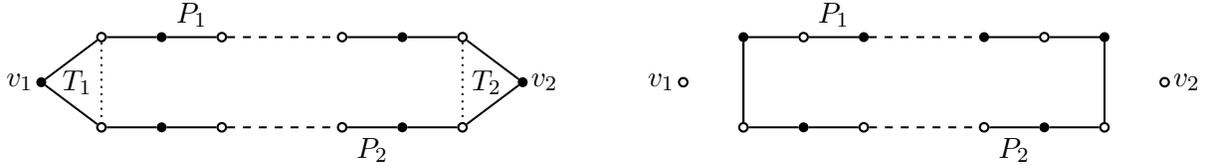

\begin{lemma}\label{prehole:lem35}
 Let $C$, with $|C|>12$, be a minimal prehole in a flawless hole-free graph determined by $v_1,v_2$, and let $C'=C\sm\{v_1,v_2\}$. Then $G[C']$ is a Hamilton monotone graph, and all chords of $C'$ connect $P_1$ to $P_2$.
\end{lemma}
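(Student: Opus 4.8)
The plan is to exhibit an explicit Hamilton cycle of $G[C']$, reduce the entire statement to the single assertion that no chord of $C$ has both ends on the same path, and then read off both monotonicity and the chord description from that assertion.

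First I would fix the data supplied by Lemma~\ref{prehole:lem30}: write the two caps as $T_1=(v_1,a_1,b_1)$ and $T_2=(v_2,a_2,b_2)$, where $a_1b_1$ and $a_2b_2$ are the (even) cap chords, and let $P_1'$ be the subpath of $P_1$ from $a_1$ to $a_2$ and $P_2'$ the subpath of $P_2$ from $b_1$ to $b_2$, so that $C'=V(P_1')\cup V(P_2')$. Since the caps are vertex-disjoint (they are edge-disjoint and $|C|>12$), the closed walk
\[
\hat C \;=\; a_1\,P_1'\,a_2,\; a_2b_2,\; b_2\,P_2'\,b_1,\; b_1a_1
\]
is a Hamilton cycle of $G[C']$, of even length $|C|-2$. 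Thus the word ``Hamilton'' is immediate, and it remains to show that $G[C']$ is monotone and that every chord of $\hat C$ runs between $P_1'$ and $P_2'$.

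The crux, and the step I expect to be the hardest, is to prove that $C$ has no chord with both ends on the same rail. Here I would first extract two consequences of minimality. A chordless even cycle is itself a prehole, so $G[C]$ can contain no even hole (it would be a prehole shorter than $C$, since $C$ has chords because $|C|>5$ and $G$ is hole-free). More usefully, \emph{every} even cycle $D$ in $G[C]$ contains a prehole of length at most $|D|$: repeatedly cut $D$ along an odd chord, which always splits an even cycle into two strictly shorter even cycles, until an even cycle with no odd chord, that is, a prehole, remains. Hence, by minimality, $G[C]$ has no even cycle of length smaller than $|C|$. Now suppose for contradiction that $pq$ is a chord of $C$ with $p,q\in P_1'$. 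As $C$ is a prehole this chord is even, so the subpath of $P_1'$ between $p$ and $q$ has even length; deleting $v_1$ together with the interior of that subpath and inserting the edges $a_1b_1$ and $pq$ reconnects the two resulting paths into a single even cycle $C^\flat\subseteq G[C]$ of length $<|C|$. This contradicts the previous sentence, and the case $p,q\in P_2'$ is symmetric. Therefore every chord of $C$ other than the two cap chords joins $P_1'$ to $P_2'$, which is precisely the chord statement.

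It then remains to deduce monotonicity. Let $(L,R)$ be the alternating bipartition of the prehole $C$, and recolour by flipping the two colours on $P_2'$ only, i.e.\ set $X=(L\cap P_1')\cup(R\cap P_2')$ and $Y=(R\cap P_1')\cup(L\cap P_2')$. Every edge of $C$ lying inside a rail goes between $L$ and $R$ and hence between $X$ and $Y$; each cap chord $a_1b_1$, $a_2b_2$ joins two vertices of equal $C$-colour on opposite rails, so exactly one endpoint is flipped and the edge again runs between $X$ and $Y$; and by the crux every remaining chord is even and cross-rail, so once more exactly one endpoint is flipped and it lands between $X$ and $Y$. Thus $(X,Y)$ is a proper bipartition and $G[C']$ is bipartite. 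Being an induced subgraph of a flawless, hole-free graph, $G[C']$ is itself flawless and hole-free, and a bipartite graph containing no hole, tripod, stirrer or armchair is monotone. Together with the Hamilton cycle $\hat C$, this shows that $G[C']$ is a Hamilton monotone graph, completing the proof.
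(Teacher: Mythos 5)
Your Hamilton-cycle construction is fine, and your final step is also sound: \emph{given} the chord statement, flipping the colours of the alternating bipartition on one rail does produce a proper $2$-colouring of $G[C']$, and a bipartite, hole-free, flawless graph is monotone. But the step you yourself call the crux contains a genuine gap. The auxiliary claim ``every even cycle $D$ in $G[C]$ contains a prehole of length at most $|D|$'' is false: the odd-chord recursion terminates at an even cycle with no odd chord, and such a cycle need not be a prehole, because a prehole is by definition a \emph{hole} in some bipartition, and holes have length at least five --- so preholes have length at least six. A quadrangle has no odd chord yet is not a prehole, and the recursion can bottom out entirely in quadrangles (a $6$-cycle with one odd chord splits into two quadrangles and yields nothing). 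Consequently your derived statement ``by minimality, $G[C]$ has no even cycle of length smaller than $|C|$'' is not just unproven but false: minimality only excludes shorter \emph{preholes}, and minimal preholes are in fact full of short even cycles --- by Lemma~\ref{prehole:lem40} every vertex of $P_1$ carries a chord to $P_2$, and quadrangles such as $(1,2,2',1')$ are visible in Fig.~\ref{prehole:example}. So your cycle $C^\flat$ is merely one more even cycle shorter than $C$; its existence contradicts nothing, and the whole contradiction evaporates.

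The repair is to aim for an \emph{odd} cycle rather than an even one, which is what the paper does. If $pq$ is a chord with $p,q\in P_1'$, then, since every chord of the prehole $C$ is even, $pq$ together with the even-length segment of $P_1'$ between $p$ and $q$ forms an odd cycle avoiding $v_1$ and $v_2$. By Lemma~\ref{lem:oddcycle} and hole-freeness this odd cycle contains a triangle, which would be an interior or crossing triangle of $C$, contradicting Lemma~\ref{prehole:lem30}. (The paper packages this in the opposite order: it first deduces from Lemma~\ref{prehole:lem30} that $G[C']$ contains no triangle, hence no odd cycle, hence is bipartite and --- being flawless and hole-free --- monotone, and then reads off the chord statement in one line, since a same-rail even chord would close an odd cycle inside the bipartite graph $G[C']$. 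Your order of deduction, chords first and then an explicit recolouring, would also be fine once the crux is proved by the triangle argument; it is the detour through short even cycles that cannot be made to work.)
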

\begin{proof}
  Clearly $G[C']$ is Hamilton, since $G[C]$ is Hamilton. Now $C'$ cannot be a prehole, since it is strictly smaller than $C$. So $G[C']$ cannot contain a triangle, by Lemma~\ref{prehole:lem30}. It cannot contain a larger odd cycle, since then it would contain a triangle, by the argument of Lemma~\ref{prehole:lem30}. Therefore, $G[C']$ is bipartite and, since $G\in\hf$,  contains no hole. So, since $G\in\noflaw$, $G[C']$ is a monotone graph. Suppose $uv$ is an edge of $G[C']$ with $u,v\in P_1$. Then, since $G[C]$ has only even chords, the even chord $uv$ and the segment of $P_1$ between $u$ and $v$ forms an odd cycle, giving a contradiction.
\end{proof}

Thus any minimal prehole $C$ comprises a Hamilton monotone graph $G[C']$, to which we add two caps $T_1$, $T_2$. We may also add edges from $v_1$ and $v_2$ to $C'$, as long as they are even chords in $C$.

\begin{lemma}\label{prehole:lem37}
  Let $C$ be a minimal prehole with a cap at $v\in\{v_1,v_2\}$. Then there are at most two chords
  from $v$, and both must be connected to either $P_1$ or $P_2$.
\end{lemma}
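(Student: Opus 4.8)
The plan is to fix $v=v_1$ (the case $v=v_2$ is symmetric) and record the local picture: the cap at $v_1$ is the triangle $(v_1,a_1,b_1)$ with base $a_1b_1$ an even chord, where $P_1=(v_1,a_1,\ldots,a_p,v_2)$ and $P_2=(v_1,b_1,\ldots,b_q,v_2)$ are the two paths of Fig.~\ref{prehole:fig30}. Since $C$ is a prehole, every chord is even, so in the alternating bipartition $L,R$ of $C$ with $v_1\in L$ each chord $v_1z$ has $z\in L$; equivalently $z$ is an even number of steps from $v_1$ along $C$. By Lemma~\ref{prehole:lem35}, $G[C']$ is Hamilton monotone with all chords running between $P_1$ and $P_2$, and this is what I would use to locate $v_1$'s attachments.

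First I would bound the number of chords using Lemma~\ref{lem:quasi35}: $v_1$ has at most three chords, pairwise at $C$-distance $2$. For the \emph{same-path} claim, suppose $v_1$ had a chord to $a_i\in P_1$ ($i\ge2$) and to $b_j\in P_2$ ($j\ge2$). Measured through $v_1$ these endpoints are at $C$-distance $\ge i+j\ge4$, so the distance-$2$ condition of Lemma~\ref{lem:quasi35} can only be met if they both sit at the far cap, $i=p$, $j=q$. But then $v_1a_p$, $v_1b_q$ and the base $a_pb_q$ are three chords of $C$ forming the interior triangle $(v_1,a_p,b_q)$, which Lemma~\ref{prehole:lem30} forbids in a minimal prehole with $|C|>12$. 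Hence all chords of $v_1$ lie on one path, say $P_1$, with endpoints among $a_2,a_4,a_6,\ldots$ by evenness. Hole-freeness pins them down: if $a_i$ ($i\ge4$) were a chord while none of $a_2,\ldots,a_{i-2}$ were, then $(v_1,a_1,\ldots,a_i)$ would be a chordless cycle of odd length $i+1\ge5$, a hole, contradicting $G\in\hf$. So the chords form a saturated run $a_2,a_4,\ldots$ and, by Lemma~\ref{lem:quasi35}, number at most three.

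It remains to exclude the saturated triple $a_2,a_4,a_6$, which I expect to be the main obstacle, since this local configuration is fully quadrangulated and yields neither a hole nor a triangle directly. My plan is a minimality argument: delete $a_1,\ldots,a_5$ and $v_2$ and form the even cycle $C^\sharp=(v_1,a_6,a_7,\ldots,a_p,b_q,\ldots,b_1)$, which uses the even chord $v_1a_6$, the far base $a_pb_q$, and path edges, and has length $|C|-6$. If $C^\sharp$ is chordless it is an even hole, contradicting $G\in\hf$; otherwise I must show it is a prehole, contradicting the minimality of $C$. The delicate point is that the alternating bipartition of $C^\sharp$ is \emph{not} the restriction of that of $C$ — the spliced edge $v_1a_6$ joins two $L$-vertices — so I must verify that no surviving chord of $G[C]$ becomes an odd chord of $C^\sharp$. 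This is exactly where I would invoke the nested staircase structure of the monotone graph $G[C']$ from Lemma~\ref{prehole:lem35} (chords only between $P_1$ and $P_2$), together with the absence of crossing triangles from Lemma~\ref{prehole:lem30} (which, for instance, already rules out short cross-chords such as $a_2b_1$ that would obstruct the splice). With the chords confined to one path and the triple excluded, the two assertions of the lemma — at most two chords, both on $P_1$ (or both on $P_2$) — follow at once.
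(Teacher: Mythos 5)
Your first two steps are sound. In fact your treatment of chords going to both paths is a genuine (and arguably cleaner) alternative to the paper's: the paper handles that case by moving $v$ across the bipartition and cutting off a smaller prehole with the longer chord on each side, whereas you get a contradiction directly from the distance-$2$ condition of Lemma~\ref{lem:quasi35}, which forces the two endpoints to be $a_p$ and $b_q$, so that together with the base of the cap at $v_2$ they form an interior triangle forbidden by Lemma~\ref{prehole:lem30}. (For three chords you should pick the cross-path pair that Lemma~\ref{lem:quasi35} places at distance $2$, but that is a one-line repair.) The ``saturated run'' step via hole-freeness also matches the paper.

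The gap is exactly in the step you flagged as delicate: excluding the triple $v_1a_2,v_1a_4,v_1a_6$. Your dichotomy for $C^\sharp=(v_1,a_6,\ldots,a_p,b_q,\ldots,b_1)$ cannot be completed, because its second branch is never available. By Lemma~\ref{prehole:lem35}, every chord of $C$ that survives in $C^\sharp$ joins the $P_1$-remnant $\{a_6,\ldots,a_p\}$ to the $P_2$-remnant $\{b_1,\ldots,b_q\}$; and the splice flips the alternating bipartition on the $P_1$ side only ($a_i$ moves from position $i$ on $C$ to position $i-5$ on $C^\sharp$, while the parity of each $b_j$ is unchanged since $p+q$ is even). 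Hence \emph{every} even chord of $C$ between the two remnants becomes an \emph{odd} chord of $C^\sharp$. So either $C^\sharp$ is chordless (an even hole, fine) or it is certainly not a prehole, and minimality yields nothing; the bad case is the generic one, since such $P_1$--$P_2$ chords are plentiful (indeed Lemma~\ref{prehole:lem40} later shows every vertex of $P_1$ has one). No appeal to the staircase structure or to the absence of crossing triangles can remove this parity obstruction, and more generally no splice around $v_1$ using chords on a single path can produce an even cycle without flipping one side. What this case needs is a flaw argument, not a minimality argument --- this is what the paper's phrase ``similarly to Lemma~\ref{lem:quasi30}'' is doing. Concretely: swap $v_1$ and $a_1$ across the bipartition of $C$. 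In the resulting cut graph, the seven vertices $v_1,a_1,\ldots,a_6$ induce precisely the two quadrangles $v_1a_2a_3a_4$ and $v_1a_4a_5a_6$ sharing the edge $v_1a_4$, plus the pendant $a_1$ attached to $v_1$ (the edge $a_1a_2$ becomes same-sided and disappears; $v_1a_3,v_1a_5$ would be odd chords of $C$, and $P_1$--$P_1$ chords are excluded by Lemma~\ref{prehole:lem35}). That is a stirrer, so the triple makes $G[\{v_1,a_1,\ldots,a_6\}]$ a pre-stirrer, contradicting $G\in\noflaw$.
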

\begin{proof}
The chords must be as shown in  Fig.~\ref{fig:v chords}(a), since otherwise there is an even hole or a pre-armchair, similarly to Lemma~\ref{lem:quasi30}. Note that $vb$ must be present if $vd$ is in $G$. If there is a chord to both $P_1$ and $P_2$, we can find a smaller prehole by moving $v$ from $L$ to $R$ and using the longer chords from $v$. See Fig.~\ref{fig:v chords}(b).
\end{proof}
\begin{figure}[H]
\tikzset{every node/.style={circle,draw,fill=none,inner sep=0pt,minimum size=1.25mm}}
\centerline{
\begin{tikzpicture}[xscale=0.8,yscale=0.7,font=\small]
\draw  (3,1) node[w,label=above:$a$] (a) {} (2,0) node[b,label=left:$v$] (v) {}
 (4,1) node[b,label=above:$b$] (b) {} (3,-1) node[w,label=below:$w$] (w) {} (5,1) node[w,label=above:$c$] (c) {} (4,-1) node[b,label=below:$x$] (x) {}  (5,-1) node[w,label=below:$y$] (y) {} (6,-1) node[b,label=below:$z$] (z) {}  (6,1) node[b,label=above:$d$] (d) {} ;
\draw (a)--(v)--(w)--(x)--(y)--(z)--(7,-1) (v)--(a)--(b)--(c)--(d)--(7,1) (a)--(w);
\draw[dashed] (v)--(b) (v)--(d) ;
\draw (5,-2.5) node[empty]{(a)};
\end{tikzpicture}
\hspace*{3cm}
\begin{tikzpicture}[xscale=0.8,yscale=0.7,font=\small]
\draw  (3,1) node[w,label=above:$a$] (a) {} (2,0) node[w,label=left:$v$] (v) {}
 (4,1) node[b,label=above:$b$] (b) {} (3,-1) node[w,label=below:$w$] (w) {} (5,1) node[w,label=above:$c$] (c) {} (4,-1) node[b,label=below:$x$] (x) {}  (5,-1) node[w,label=below:$y$] (y) {} (6,-1) node[b,label=below:$z$] (z) {}  (6,1) node[b,label=above:$d$] (d) {} ;
\draw (a)--(v)--(w)--(x)--(y)--(z)--(7,-1) (v)--(a)--(b)--(c)--(d)--(7,1) (a)--(w);
\draw[dashed] (v)--(b) (v)--(d) (v)--(x) (v)--(z) ;
\draw (5,-2.5) node[empty]{(b)};
\end{tikzpicture}}
\caption{}\label{fig:v chords}
\end{figure}

\begin{lemma}\label{prehole:lem40}
  Let $C$ be a minimal prehole with $|C|\geq 8$. Then all vertices in $P_1$ have a chord to $P_2$ and vice versa.
\end{lemma}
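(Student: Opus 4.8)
The plan is to argue by contradiction, using the monotone structure of $G[C']$ supplied by Lemma~\ref{prehole:lem35}. Recall that $C'=C\sm\{v_1,v_2\}$ is a Hamilton monotone graph whose Hamilton cycle is obtained from $C$ by shortcutting each apex $v_i$ through its cap edge; the two sides of this cycle are exactly the interiors of $P_1$ and $P_2$, every chord of $C'$ runs between $P_1$ and $P_2$, and the two cap edges furnish a crossing chord at the $P_1$-end (and $P_2$-end) vertices. The engine of the proof is that $G[C']$, being monotone, is a proper interval bigraph, so in its staircase ordering every neighbourhood is an \emph{interval} with monotone endpoints. It suffices to prove the statement for $P_1$; the claim for $P_2$ follows by the symmetric argument.

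So suppose some $u\in P_1$, necessarily distinct from $v_1,v_2$, has no chord to $P_2$. By Lemma~\ref{prehole:lem35} this means $u$ has no chord at all in $G[C']$, so its only neighbours there are its two neighbours along $P_1$. Walking along $P_1$ away from $u$ in each direction, let $a,b$ be the first vertices carrying a chord to $P_2$; these exist because the cap edges give the two extreme interior vertices of $P_1$ a crossing chord. Then $u$ lies strictly between $a$ and $b$, every interior vertex of the subpath $Q=(a,\ldots,b)$ has degree two in $G[C']$, and since all chords cross, $a$ and $b$ are non-adjacent, so $Q$ is an induced path. I would now pick chords $aa',bb'$ with $a',b'\in P_2$ and close $Q$ into a cycle $\cH$ by appending $bb'$, the $P_2$-segment from $b'$ to $a'$, and $a'a$. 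As $G[C']$ is bipartite, $\cH$ is even. To make $\cH$ induced I would use the interval property: the neighbourhoods of $a$ and of $b$ on $P_2$ are intervals, so choosing $a',b'$ to be the innermost such neighbours prevents any chord of $a$ or $b$ from reaching the interior of the chosen $P_2$-segment, while the interior of $Q$ and the $P_2$-segment contribute no chords. Then $\cH$ is a chordless even cycle in $G[C']\subseteq G$; if $|\cH|\ge 6$ it is a hole, contradicting $G\in\hf$.

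The main obstacle is the degenerate case $|\cH|=4$. This forces $a,b$ to be the two neighbours of a single degree-two vertex $u$ sharing a common neighbour $a'=b'$ on $P_2$, so that $(a,u,b,a')$ is an induced quadrangle and no hole arises. Here monotonicity alone produces nothing, and the contradiction must instead be extracted from the minimality of $C$: the plan is to reroute $C$ through the common neighbour $a'$, bypassing $u$, and to verify, using the even-chord (prehole) condition of Lemma~\ref{prehole:lem35} together with the interval structure, that this yields a strictly shorter prehole. This quadrangle step is exactly where the size hypothesis on $|C|$ must do its work — for short preholes the reroute can collapse (for instance, the two arms created at $a'$ can turn out odd rather than even, as a parity count along the alternating bipartition shows), so careful use of the lower bound on $|C|$ to guarantee that a genuine smaller prehole survives is the part I expect to be the most delicate.
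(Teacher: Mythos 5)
There is a genuine gap, and it sits exactly at the point you flag as delicate: the quadrangle case. Your first stage is sound in spirit, though the ``interval property'' needs care: the path order along $P_2$ is not the staircase order of the monotone graph $G[C']$ (the path $P_2$ alternates between the two colour classes of $G[C']$, so it is not even a set of rows or columns), and in a hole-free graph a neighbourhood can skip single vertices of $P_2$, which is precisely the quadrangle pattern. This part is repairable without intervals: a shortest cycle of $G[C']$ through $u$ is automatically chordless (a chord would cut off a shorter cycle through $u$), it is even since $G[C']$ is bipartite, and so by $G\in\hf$ it has length $4$. Thus, as in the paper's first step, everything reduces to the configuration in which the two $P_1$-neighbours $a,b$ of $u$ have a common neighbour $w\in P_2$.

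For that configuration your plan fails. You cannot ``reroute $C$ through $a'=w$'': $w$ already lies on $C$ (it is a vertex of $P_2$), so the rerouted walk visits $w$ twice and is not a cycle. The alternatives are no better: each of the two cycles cut off by the chords $aw$ and $bw$ consists of one chord plus an arc of $C$ of even length (every chord of a prehole is an even chord by definition, and an even chord splits an even cycle into two odd cycles), so both are odd and cannot be preholes; and the quadrangle $(a,u,b,w)$ itself has length $4$, too short to be a prehole. This parity obstruction is universal, not a small-$|C|$ degeneracy as your parenthetical suggests, so no lower bound on $|C|$ rescues the reroute, and minimality of $C$ alone yields nothing here. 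The paper uses a different resource at exactly this point: flawlessness. When $w$ is at distance at least $3$ from both $v_1$ and $v_2$, the $7$-vertex subpath of $C$ centred at $w$ is prechordless and $u$ is at distance exactly $2$ from it, so Lemma~\ref{lem:quasi10} forces $uw\in E$, a contradiction; the remaining near-apex positions of $w$, and the possible chords at $v_1,v_2$ (which can destroy induced flaws and so must be checked separately), are disposed of by an explicit case analysis producing pre-stirrers or genuinely shorter preholes obtained by moving vertices between $L$ and $R$, not by rerouting --- and it is that case analysis, not the quadrangle parity, which consumes the hypothesis $|C|\geq 8$.
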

\begin{proof}
Suppose first that there are no chords from $v_1$ or $v_2$.
\begin{figure}[H]
\tikzset{every node/.style={circle,draw,fill=none,inner sep=0pt,minimum size=1.25mm}}
\centerline{\begin{tikzpicture}[xscale=0.7,yscale=0.6,font=\small]
\draw (0,-1) node[b] (0') {} (1,-1) node[w] (1') {}
 (2,1) node[w] (2) {} (2,-1) node[b] (2') {}
 (3,1) node[b,label=above:$u$] (3) {} (3,-1) node[w,label=below:$w$] (3') {}  (4,1) node[w] (4) {} (4,-1) node[b] (4') {}
 (5,-1) node[w] (5') {} (6,-1) node[b] (6') {} ;
\draw (-1,-1)--(0')--(1')--(2')--(3')--(4')--(5')--(6')--(7,-1) (1,1)--(2)--(3)--(4)--(5,1) (2)--(3')--(4);
\draw (3,-2.5) node[empty]{(a)};
\end{tikzpicture}
\hspace*{1cm}
\begin{tikzpicture}[xscale=0.7,yscale=0.6,font=\small]
\draw  (3,1) node[w,label=above:$a$] (a) {} (2,0) node[b,label=left:$v$] (v) {}
 (4,1) node[b,label=above:$b$] (b) {} (3,-1) node[w,label=below:$w$] (w) {} (5,1) node[w,label=above:$c$] (c) {} (4,-1) node[b,label=below:$x$] (x) {}  (5,-1) node[w,label=below:$y$] (y) {} (6,-1) node[b,label=below:$z$] (z) {}  (6,1) node[b,label=above:$d$] (d) {} ;
\draw (a)--(v)--(w)--(x)--(y)--(z)--(7,-1) (v)--(a)--(b)--(c)--(d)--(7,1) (a)--(w)
(a)--(y);
\draw[dashed] (b)--(z) (c)--(y);
\draw (5,-2.5) node[empty]{(b)};
\end{tikzpicture}
\hspace*{1cm}
\begin{tikzpicture}[xscale=0.7,yscale=0.6,font=\small]
\draw  (3,1) node[w,label=above:$a$] (a) {} (2,0) node[b,label=left:$v$] (v) {}
 (4,1) node[b,label=above:$b$] (b) {} (3,-1) node[w,label=below:$w$] (w) {} (5,1) node[w,label=above:$c$] (c) {} (4,-1) node[b,label=below:$x$] (x) {}  (5,-1) node[w,label=below:$y$] (y) {} (6,-1) node[b,label=below:$z$] (z) {}  (6,1) node[b,label=above:$d$] (d) {} ;
\draw (a)--(v)--(w)--(x)--(y)--(z)--(7,-1) (v)--(a)--(b)--(c)--(d)--(7,1) (a)--(w)
(a)--(y) (c)--(w) ;
\draw (5,-2.5) node[empty]{(c)};
\end{tikzpicture}}
\caption{}\label{fig:crossing chord}
\end{figure}
Let $u\in P_1$ have no edge to $P_2$. Since $G[C']$ is monotone, $u$ must be in a quadrangle with its neighbours in $P_1$ and a vertex $w\in P_2$. See Fig.~\ref{fig:crossing chord}(a). Now $u$ is a distance at least 2 from $v_1$ and $v_2$, since otherwise it has an edge of $T_1$ or $T_2$ to $P_2$. Thus $u$ is at distance at least 2 from $P_2$. Now if $w$ is a distance at least 3 from both $v_1$ and $v_2$, Lemma~\ref{lem:quasi10} implies that $u$ must have an edge to $P_1$, a contradiction.

Otherwise, we have the situations shown in Fig.~\ref{fig:crossing chord}(b), where the edge $bx$ is absent. Thus $ay$ or $cw$, or both, must be in $G$. Suppose that only one is in $G$ and, without loss of generality, that it is $wb$, as shown in Fig.~\ref{fig:crossing chord}(b). Now either $bz$ or $cy$, or both, must be in $G$. If $bz\in G$, $v,a,b,z,y,x,w$  give a pre-stirrer, a contradiction. So suppose only $cy$ is in $G$. Then $v,w,x,y,z,a,b$  give a pre-stirrer, again a contradiction. cannot have a chord to $P_1$ or we would have an even hole $(d,w,a,u,b,\ldots,d)$. Thus the edge $bx\in G$ unless both $ay,cw\in G$, as shown in Fig.~\ref{fig:crossing chord}(c).  In this case we have a shorter prehole $(\ldots,d,c,w,v,a,y,z,\ldots)$, after interchanging $v,a,w$ between $L$ and $R$.
Thus we must have $bx\in G$, giving the conclusion.
Observe that the configuration of  case (a) requires at least 10 vertices, and those of cases (b) and (c) require at least 8. Therefore the conclusion holds only if $|C|\geq 8$.

Now we must consider the effect of chords from $v_1$ or $v_2$. These do not affect case (a), since $P_2$ remains prechordless. Also $u$ must remain at distance 2 from $P_2$, since otherwise we are in case (b) or (c).

In case (b), the only chord from $v$ that can break the flaws in Fig.~\ref{fig:crossing chord}(b) is the edge $vx$, as shown in Fig.~\ref{fig:bad v chord}(a). In this case, we simply give $w$ the role of $v$, as shown in  Fig.~\ref{fig:bad v chord}(b). Note that the chord $vz$, if it exists, will now connect $P_1$ to $P_2$.
\begin{figure}[ht]
\tikzset{every node/.style={circle,draw,fill=none,inner sep=0pt,minimum size=1.25mm}}
\centerline{
\begin{tikzpicture}[xscale=0.8,yscale=0.6,font=\small]
\draw  (3,1) node[w,label=above:$a$] (a) {} (2,0) node[b,label=left:$v$] (v) {}
 (4,1) node[b,label=above:$b$] (b) {} (3,-1) node[w,label=below:$w$] (w) {} (5,1) node[w,label=above:$c$] (c) {} (4,-1) node[b,label=below:$x$] (x) {}  (5,-1) node[w,label=below:$y$] (y) {} (6,-1) node[b,label=below:$z$] (z) {}  (6,1) node[b,label=above:$d$] (d) {} ;
\draw (a)--(v)--(w)--(x)--(y)--(z)--(7,-1) (v)--(a)--(b)--(c)--(d)--(7,1) (a)--(w)
(a)--(y) (v)--(x);
\draw[dashed] (b)--(z) (c)--(y);
\draw (5,-2.5) node[empty]{(a)};
\end{tikzpicture}
\hspace*{3cm}
\begin{tikzpicture}[xscale=0.8,yscale=0.6,font=\small]
\draw  (4,1) node[w,label=above:$a$] (a) {} (3,1) node[b,label=above:$v$] (v) {}
 (5,1) node[b,label=above:$b$] (b) {} (2,0) node[w,label=left:$w$] (w) {}
 (7,1) node[b,label=above:$d$] (d) {} (6,1) node[w,label=above:$c$] (c) {} (3,-1) node[b,label=below:$x$] (x) {}  (4,-1) node[w,label=below:$y$] (y) {} (5,-1) node[b,label=below:$z$] (z) {}--(6,-1) node[w,fill=white] {}--(7,-1) node[b] {}    ;
\draw (a)--(v)--(w)--(x)--(y)--(z) (v)--(a)--(b)--(c)--(d)--(7,1) (a)--(w)
(a)--(y) (v)--(x);
\draw[dashed] (b)--(z) (c)--(y);
\draw (5,-2.5) node[empty]{(b)};
\end{tikzpicture}}
\caption{}\label{fig:bad v chord}
\end{figure}
Finally, consider the configuration of Fig.~\ref{fig:crossing chord}(c). By symmetry, we can assume that the chords from $v$ are $vb,vd$, as shown in Fig.~\ref{fig:another v chord}(a). The edge $vb$ is absent from the shorter prehole in Fig.~\ref{fig:another v chord}(b), so we have only to consider the edge $vd$. However, if $vd\in G$, it now connects $L$ to $R$. Thus there is a shorter prehole $(\ldots,z,y,a,v,d,\ldots)$, a contradiction.
\end{proof}
\begin{figure}[H]
\tikzset{every node/.style={circle,draw,fill=none,inner sep=0pt,minimum size=1.25mm}}
\centerline{
\begin{tikzpicture}[xscale=0.8,yscale=0.6,font=\small]
\draw  (3,1) node[w,label=above:$a$] (a) {} (2,0) node[b,label=left:$v$] (v) {}
 (4,1) node[b,label=above:$b$] (b) {} (3,-1) node[w,label=below:$w$] (w) {} (5,1) node[w,label=above:$c$] (c) {} (4,-1) node[b,label=below:$x$] (x) {}  (5,-1) node[w,label=below:$y$] (y) {} (6,-1) node[b,label=below:$z$] (z) {}  (6,1) node[b,label=above:$d$] (d) {} ;
\draw (a)--(v)--(w)--(x)--(y)--(z)--(7,-1) (v)--(a)--(b)--(c)--(d)--(7,1) (a)--(w)
(a)--(y) (c)--(w) ;
\draw[dashed] (v)--(b) (v)--(d);
\draw (5,-2.5) node[empty]{(a)};
\end{tikzpicture}
\hspace*{3cm}
\begin{tikzpicture}[xscale=0.8,yscale=0.6,font=\small]
\draw  (3,1) node[b,label=above:$a$] (a) {} (2,0) node[w,label=left:$v$] (v) {}
 (4,1) node[w,label=above:$y$] (y) {} (3,-1) node[b,label=below:$w$] (w) {} (5,1) node[b,label=above:$z$] (z) {} (4,-1) node[w,label=below:$c$] (c) {}  (5,-1) node[b,label=below:$d$] (d) {} (6,-1) node[w] (d') {}  (6,1) node[w] (z') {} ;
\draw (v)--(a)--(y)--(z)--(z')--(6,1) (v)--(w)--(c)--(d)--(d') (a)--(w);
\draw[dashed]  (v)--(d);
\draw (5,-2.5) node[empty]{(b)};
\end{tikzpicture}}
\caption{}\label{fig:another v chord}
\end{figure}

\begin{corollary}\label{prehole:cor20}
 Let $C$ be a prehole with $|C|\geq 8$. Then, for $v\in C$, $3 \leq \deg_{G[C]}(v)\leq 5$ ($v\notin\{ v_1,v_2\}$), $2 \leq \deg_{G[C]}(v)\leq 4$ ($v\in\{ v_1,v_2\}$).
\end{corollary}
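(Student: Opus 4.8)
The plan is to obtain the corollary by assembling the degree information already established in Lemmas~\ref{lem:quasi35}, \ref{prehole:lem37} and~\ref{prehole:lem40}. The key observation that makes this a pure assembly is that every vertex of $C$ lies on the prehole cycle, hence contributes exactly two neighbours in $G[C]$ via edges of $C$, with all remaining neighbours reached through chords. Thus in every case $\deg_{G[C]}(v)=2+c(v)$, where $c(v)$ is the number of chords of $C$ incident to $v$, and the whole statement reduces to bounding $c(v)$ above and below.

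For the universal upper bound I would invoke Lemma~\ref{lem:quasi35}: in a flawless graph every vertex of a prehole carries at most three chords, so $c(v)\leq 3$ and hence $\deg_{G[C]}(v)\leq 5$ for all $v\in C$. To sharpen this at the two cap apexes, I would appeal to Lemma~\ref{prehole:lem37}, which gives $c(v)\leq 2$ whenever $v\in\{v_1,v_2\}$, so that $\deg_{G[C]}(v)\leq 4$ there. The lower bounds are read off dually. For $v\in\{v_1,v_2\}$ the two edges of the cap $T_i$ lying on $C$ already force $\deg_{G[C]}(v)\geq 2$, with no further argument needed. For $v\notin\{v_1,v_2\}$ the vertex lies on one of the two paths $P_1,P_2$, and Lemma~\ref{prehole:lem40} (available precisely because $|C|\geq 8$) guarantees it has at least one chord to the opposite path, so $c(v)\geq 1$ and $\deg_{G[C]}(v)\geq 3$. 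Combining the two directions yields $3\leq\deg_{G[C]}(v)\leq 5$ off the caps and $2\leq\deg_{G[C]}(v)\leq 4$ at the caps.

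The only genuinely delicate point, and the one I expect to be the main obstacle, is bookkeeping the size thresholds. The clean cap/path decomposition $C=T_1\cup P_1\cup P_2\cup T_2$ with two edge-disjoint caps was derived in Lemma~\ref{prehole:lem30} only under the hypothesis $|C|>12$ (smaller preholes may harbour interior triangles), whereas the corollary asserts the bounds already for $|C|\geq 8$. So the real content to verify is that the apexes $v_1,v_2$ and the paths $P_1,P_2$ are still well-defined, and that Lemmas~\ref{prehole:lem37} and~\ref{prehole:lem40} genuinely apply, throughout the range $8\leq|C|\leq 12$ and not merely for $|C|>12$. Once that is confirmed, the four inequalities follow immediately from the three cited lemmas with no further computation.
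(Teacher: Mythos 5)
Your proposal is correct and is essentially the paper's own argument: the paper's entire proof is the single line ``Follows directly from Lemmas~\ref{lem:quasi35} and~\ref{prehole:lem40}'', i.e.\ exactly the assembly you describe (upper bounds from the chord bound, lower bound $3$ on the paths from Lemma~\ref{prehole:lem40}, trivial lower bound $2$ at the apexes). Two points where you are in fact more careful than the paper deserve mention. First, the two lemmas the paper cites cannot by themselves yield $\deg_{G[C]}(v)\leq 4$ at $v\in\{v_1,v_2\}$: Lemma~\ref{lem:quasi35} gives only $\leq 5$ there, so the apex bound genuinely requires the two-chord bound of Lemma~\ref{prehole:lem37}, which you invoke and the paper's citation omits. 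Second, the size-threshold worry you flag is a real looseness in the paper rather than a defect of your argument relative to it: the cap/path decomposition defining $v_1,v_2,P_1,P_2$ is established (Lemma~\ref{prehole:lem30}) only for minimal preholes with $|C|>12$ in a flawless hole-free graph, whereas Lemma~\ref{prehole:lem40} and the corollary itself are stated for $|C|\geq 8$; the paper silently treats the decomposition as part of the hypotheses and never addresses the range $8\leq|C|\leq 12$ either, so your proof inherits, but does not worsen, this gap.
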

\begin{proof}
  Follows directly from Lemmas~\ref{lem:quasi35} and~\ref{prehole:lem40}.
\end{proof}
Note that there is a prehole with six vertices and three vertices of degree 2, and an interior triangle.
\begin{figure}[H]
\tikzset{every node/.style={circle,draw,fill=none,inner sep=0pt,minimum size=1.25mm}}
\centerline{\begin{tikzpicture}[xscale=0.8,yscale=0.8,font=\small]
\draw (1,1) node[b] (0) {} (0,0.5) node[w] (1) {}  (2,0.5) node[w] (2) {} (0,-0.5) node[b] (3) {} (2,-0.5) node[b] (4) {} (1,-1) node[w] (5) {} ;
\draw (4)--(2)--(0)--(1)--(3)--(5)--(4) (3)--(0)--(4)--(3) ;
\end{tikzpicture}}
\caption{Prehole with 3 vertices of degree 2}\label{fig:degree2}
\end{figure}
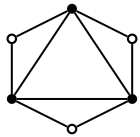
Let $T_1=\{v_1,u_1,w_1\}$, $T_2=\{v_2,u_2,w_2\}$ be any two edge-disjoint triangles in a flawless graph $G$. Let $M$ be the component of $G\sm\{v_1,v_2\}$ containing $u_1w_1$, $u_2w_2$, if such a component exists. If $M$ does not exist then $v_1,v_2$ clearly do not determine a prehole. Otherwise,
 \begin{lemma}\label{prehole:lem50}
$C=(v_1,u_1,\ldots,u_2,v_2,w_2,\ldots,w_1,v_1)$ determines a minimal prehole if and only if $M$ is a monotone graph containing two vertex-disjoint paths betweens $u_1,u_2$ and  $v_1,v_2$.
\end{lemma}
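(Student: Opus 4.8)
The plan is to prove the two implications separately, reading the forward direction off the structure theorems already established and constructing the prehole explicitly in the converse. For the forward direction, assume $v_1,v_2$ determine the minimal prehole $C=(v_1,u_1,\ldots,u_2,v_2,w_2,\ldots,w_1,v_1)$. By Lemma~\ref{prehole:lem30} its two edge-disjoint caps are exactly $T_1,T_2$, so $u_1w_1$ and $u_2w_2$ are their base edges (even chords of $C$) and $v_1,v_2$ their apices. Deleting $v_1,v_2$ breaks $C$ into the arcs $P_1$ (from $u_1$ to $u_2$) and $P_2$ (from $w_1$ to $w_2$); these are vertex-disjoint paths, and by Lemma~\ref{prehole:lem40} each vertex of $P_1$ has a chord to $P_2$, so $C'=P_1\cup P_2$ is connected in $G\sm\{v_1,v_2\}$ and carries both base edges. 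Hence $C'\subseteq M$, so $M$ contains the two required paths, and by Lemma~\ref{prehole:lem35} the subgraph $G[C']$ is monotone.

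It remains, for the forward direction, to show that the whole component $M$ is monotone. Since $M$ is an induced subgraph of $G$ it is flawless and hole-free, so it suffices to rule out a prehole in $M$. Note that the minimality of $C$ only forbids smaller preholes inside $G[C]$, and $G[C']$ is already monotone, so a prehole of $M$ would have to use a vertex of $M\sm C'$. I would assume such a minimal prehole $\tilde C$ exists, invoke Lemma~\ref{prehole:lem30} to obtain its two edge-disjoint caps inside $M$, and then recombine these caps with the dense ladder $G[C']$ to contradict the minimality of $C$. Making this recombination precise is the crux of the lemma, since, unlike the other steps, it is not an immediate consequence of the earlier results and is the place where the minimality of $C$ must be exploited most carefully; I expect this to be the hard part.

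For the converse, suppose $M$ is monotone, hence bipartite with classes $X,Y$, and contains vertex-disjoint paths $P_1$ between $u_1,u_2$ and $P_2$ between $w_1,w_2$. Form the cycle $C=v_1,u_1,P_1,u_2,v_2,w_2,\ol{P_2},w_1,v_1$, where $\ol{P_2}$ is $P_2$ traversed from $w_2$ to $w_1$. Because $u_1w_1$ and $u_2w_2$ are edges of the bipartite graph $M$, tracing the two classes along $P_1$ and $P_2$ shows that $|P_1|$ and $|P_2|$ have equal parity, so $|C|=4+|P_1|+|P_2|$ is even. First I would make $P_1$ and $P_2$ induced in $M$: any chord inside a path joins vertices at odd distance along it, so shortcutting along the chord removes an even number of vertices, preserves both endpoints and the parities above, and cannot create new intersections. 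Now give $C$ its alternating bipartition; then the vertices of $P_1$ keep their $M$-class while those of $P_2$ have them reversed, and every chord turns out to be even: there are none inside the (now induced) paths, a chord between $P_1$ and $P_2$ joins two vertices of the same $C$-class, and the base edges $u_1w_1,u_2w_2$ are even by construction. Any chord incident to $v_1$ or $v_2$ must also be even, since otherwise a pre-armchair or an even hole appears, exactly as in the flaw analysis of Lemmas~\ref{lem:quasi30} and~\ref{prehole:lem37}. Thus $C$ is a prehole.

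Finally I would establish minimality. Since $G[C']$ is monotone it contains no prehole, and $M$ is monotone, hence triangle-free, so the only triangles in $G[C]$ are $T_1$ and $T_2$ (the chord analysis above forbids $v_1$ or $v_2$ from closing a further triangle). By Lemma~\ref{prehole:lem30} every prehole of $G[C]$ is determined by two edge-disjoint caps, so $C$ is the unique prehole of $G[C]$ and is therefore minimal. This completes the converse; as noted, the single step that does not reduce to the earlier lemmas is the proof that the component $M$ harbours no prehole in the forward direction, which I anticipate being the main obstacle.
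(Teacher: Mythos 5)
Your converse direction is sound, and in fact considerably more detailed than the paper's own treatment (the paper dismisses sufficiency in a single sentence, ``since $C'\subseteq M$, the condition is certainly sufficient''); your parity argument and the classification of chords between $P_1$ and $P_2$ are the right way to fill that in. The genuine gap is in the forward direction, and it is twofold.

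First, your reduction is incorrect: knowing that $M$ is flawless and hole-free, it does \emph{not} suffice to rule out a prehole in $M$ to conclude that $M$ is monotone. Flawless, hole-free and prehole-free only yields $M\in\noflaw\cap\och=\qmon$; monotone graphs are in addition \emph{bipartite}, and a triangle (or any graph containing one) is flawless, hole-free and prehole-free without being monotone. What must be ruled out is a triangle: since $M$ is flawless and hole-free, Lemma~\ref{lem:oddcycle} shows $M$ is bipartite --- and hence monotone --- if and only if it contains no triangle, and this is exactly the dichotomy the paper uses (``suppose that $M$ is not monotone, and hence must contain a triangle $T$''). Second, the step you yourself identify as the crux --- deriving a contradiction with the minimality of $C$ --- is never carried out; you defer it as ``the hard part.'' That step is the entire content of the lemma, and the paper's proof is precisely that missing argument, aimed at the correct object: assuming a triangle $T\subseteq M$, it splits into the four cases $|T\cap C'|=3,2,1,0$. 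The case $|T\cap C'|=3$ is impossible by Lemma~\ref{prehole:lem35}; in each remaining case the paper pins down how $T$ attaches to $C'$ (by arguments in the style of Lemma~\ref{lem:triangle10}) and then reroutes $C$ through $T$, moving one vertex of $T$ between $L$ and $R$, to exhibit an explicitly shorter prehole, contradicting minimality. Your proposed recombination of the \emph{caps of a prehole of $M$} with $G[C']$ would not substitute for this even if completed, since a triangle of $M$ need not lie in any prehole of $M$ at all. As it stands, the forward implication is not established.
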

\begin{proof}
Since $C'\subseteq M$, the condition is certainly sufficient.  Now, if there are not vertex-disjoint paths $P_1=(u_1,\ldots,u_2)$, $P_1=(v_1,\ldots,v_2)$, then $C$ cannot be a prehole. So suppose that $M$ is not monotone, and hence must contain a triangle $T=\{u,v,w\}$. There are four cases depending on the number of vertices in $T\cap C'$.
\begin{enumerate}[topsep=2pt,itemsep=0pt,label=(\roman*)]
 \item  $|T\cap C'|=3$. Then $C'$ is not monotone, so this cannot occur, by Lemma~\ref{prehole:lem35}.
\item  $|T\cap C'|=2$. There are two subcases:
\begin{enumerate}[topsep=0pt,itemsep=0pt,label=(\alph*)]
 \item $|T\cap P_1|=2$. See Fig.~\ref{prehole:2vertices}(a). In this case, with $v$ to the left of $u$, let $x$ be the neighbour of $u$ in $P_2$ furthest to the left of $w_2$.      Then, on moving $u$ from $R$ to $L$, $(v_1,u_1,\ldots,w,v,w,x,\ldots,w_1,v_1)$ is a shorter prehole.
 \item $|T\cap P_1|=|T\cap P_2|=1$. See Fig.~\ref{prehole:2vertices}(b). In this case, one edge of the triangle is a chord of $C'$. Then  $(v,u,\ldots,u_2,v_2,w_2,\ldots,w,v)$ is a shorter prehole.
   \begin{figure}[H]
\tikzset{every node/.style={circle,draw,fill=none,inner sep=0pt,minimum size=1.25mm}}
\centerline{\begin{tikzpicture}[xscale=0.9,yscale=0.67,font=\small]
\draw
 (0,0) node[w,label=left:$v_1$] (0) {}
 (1,1) node[b,label=above:$u_1$] (1) {} (1,-1) node[b,label=below:$w_1$] (1') {}
 (2,1) node[w] (2) {} (2,-1) node[w] (2') {}
 (3,1) node[b,label=above left:$w$] (3) {} (3,-1) node[b] (3') {}
 (4,1) node[w,label=above right:$u$] (4) {} (4,-1) node[w,label=below:$x$] (4') {}
 (5,1) node[b] (5) {} (5,-1) node[b] (5') {}
 (6,1) node[w,label=above right:$u_2$] (6) {} (6,-1) node[w,label=below:$w_2$] (6') {}
 (7,0) node[b,label=right:$v_2$] (7) {}
 (3.5,2.5) node[w,label=above:$v$] (v) {}(3.5,1.5) node[empty]{$T$} ;
   \draw (4')--(3')--(2')--(1')--(0)--(1)--(2)--(3)--(4)--(5)--(6)--(7)--(6')--(5')--(4')
   (3)--(v)--(4) (3,-2) node[empty]{(a)};
 \draw (1)--(1') (6)--(6') (4')--(4);
\end{tikzpicture}
\hspace*{1cm}
\begin{tikzpicture}[xscale=0.9,yscale=0.67,font=\small]
\draw
 (0,0) node[w,label=left:$v_1$] (0) {}
 (1,1) node[b,label=above:$u_1$] (1) {} (1,-1) node[b,label=below:$w_1$] (1') {}
 (2,1) node[w] (2) {} (2,-1) node[w] (2') {}
 (3,1) node[b] (3) {} (3,-1) node[b] (3') {}
 (4,1) node[w,label=above:$u$] (4) {} (4,-1) node[w,label=below:$w$] (4') {}
 (5,1) node[b] (5) {} (5,-1) node[b] (5') {}
 (6,1) node[w,label=above right:$u_2$] (6) {} (6,-1) node[w,label=below:$w_2$] (6') {}
 (7,0) node[b,label=right:$v_2$] (7) {}
 (3.25,0) node[b,label=left:$v$] (v) {}(3.75,0) node[empty]{$T$} ;
   \draw (4')--(3')--(2')--(1')--(0)--(1)--(2)--(3)--(4)--(5)--(6)--(7)--(6')--(5')--(4')
   (4)--(v)--(4') (4')--(4) (3,-2) node[empty]{(b)};
 \draw (1)--(1') (6)--(6') ;
\end{tikzpicture}}
\caption{$|T\cap C'|=3$}\label{prehole:2vertices}
\end{figure}
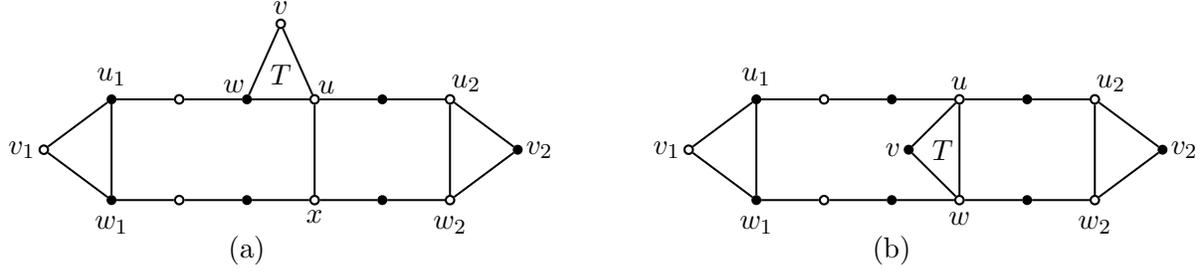

\end{enumerate}
 \item  $|T\cap C'|=1$. Then, using the arguments of Lemma~\ref{lem:triangle10}, the situation is as shown in Fig.~\ref{prehole:1vertex}. Here $x$ is the neighbour of $u$ in $P_2$ furthest to the left of $w_2$. Then, on moving $v$ from $R$ to $L$, there is a shorter prehole $(v_1,u_1,\ldots,y,w,v,u,x,\ldots,w_1,v_1)$.
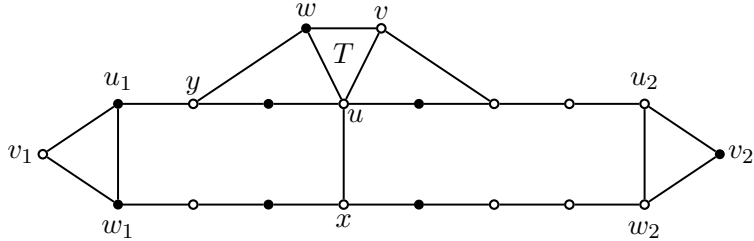
\begin{figure}[H]
\tikzset{every node/.style={circle,draw,fill=none,inner sep=0pt,minimum size=1.25mm}}
\centerline{\begin{tikzpicture}[xscale=1.0,yscale=0.67,font=\small]
\draw
 (0,0) node[w,label=left:$v_1$] (0) {}
 (1,1) node[b,label=above:$u_1$] (1) {} (1,-1) node[b,label=below:$w_1$] (1') {}
 (2,1) node[w,label=above:$y$] (2) {} (2,-1) node[w] (2') {}
 (3,1) node[b] (3) {} (3,-1) node[b] (3') {}
 (4,1) node[w,label=below right:$u$] (4) {} (4,-1) node[w,label=below:$x$] (4') {}
 (5,1) node[b] (5) {} (5,-1) node[b] (5') {}
 (6,1) node[w] (6) {} (6,-1) node[w] (6') {}
 (7,1) node[w] (7) {} (7,-1) node[w] (7') {}
 (8,1) node[w,label=above:$u_2$] (8) {} (8,-1) node[w,label=below:$w_2$] (8') {}
 (9,0) node[b,label=right:$v_2$] (9) {}  (4,2) node[empty]{$T$} ;
   \draw (4')--(3')--(2')--(1')--(0)--(1)--(2)--(3)--(4)--(5)--(6)
   --(7)--(8)--(9)--(8')--(7')--(6')--(5')--(4') ;
   \draw (3.5,2.5) node[b,label=above:$w$] (w) {} (4.5,2.5) node[w,label=above:$v$] (v) {}
   (v)--(w)--(4)--(v) (w)--(2) (v)--(6);
 \draw (1)--(1') (8)--(8') (4')--(4);
\end{tikzpicture}}
\caption{$|T\cap C'|=1$}\label{prehole:1vertex}
\end{figure}
\item  $|T\cap C'|=0$. Then, again using the arguments of Lemma~\ref{lem:triangle10}, the situation is as in Fig.~\ref{prehole:1vertex}. Here $x$ is the neighbour of $z$ in $P_2$ furthest to the left of $w_2$. Then, on moving $z$ from $R$ to $L$, there is a shorter prehole $(v_1,u_1,\ldots,y,w,v,u,z,x,\ldots,w_1,v_1)$.
\begin{figure}[H]
\tikzset{every node/.style={circle,draw,fill=none,inner sep=0pt,minimum size=1.25mm}}
\centerline{\begin{tikzpicture}[xscale=1.0,yscale=0.67,font=\small]
\draw
 (0,0) node[w,label=left:$v_1$] (0) {}
 (1,1) node[b,label=above:$u_1$] (1) {} (1,-1) node[b,label=below:$w_1$] (1') {}
 (2,1) node[w] (2) {} (2,-1) node[w] (2') {}
 (3,1) node[b,label=below:$y$] (3) {} (3,-1) node[b] (3') {}
 (4,1) node[w,label=below right:$z$] (4) {} (4,-1) node[w,label=below:$x$] (4') {}
 (5,1) node[b] (5) {} (5,-1) node[b] (5') {}
 (6,1) node[w] (6) {} (6,-1) node[w] (6') {}
 (7,1) node[w] (7) {} (7,-1) node[w] (7') {}
 (8,1) node[w,label=above:$u_2$] (8) {} (8,-1) node[w,label=below:$w_2$] (8') {}
 (9,0) node[b,label=right:$v_2$] (9) {}  (4,3.5) node[empty]{$T$} ;
   \draw (4')--(3')--(2')--(1')--(0)--(1)--(2)--(3)--(4)--(5)--(6)
   --(7)--(8)--(9)--(8')--(7')--(6')--(5')--(4') ;
   \draw (3.5,4) node[w,label=above:$w$] (w) {} (4.5,4) node[b,label=above:$v$] (v) {}
   (4,2.5)  node[w,label=left:$u$] (u) {}    (v)--(w)--(u)--(v) (u)--(4) (w)--(3) ;
 \draw (1)--(1') (8)--(8') (4')--(4);
\end{tikzpicture}}
\caption{$|T\cap C'|=0$}\label{prehole:0vertex}
\end{figure}
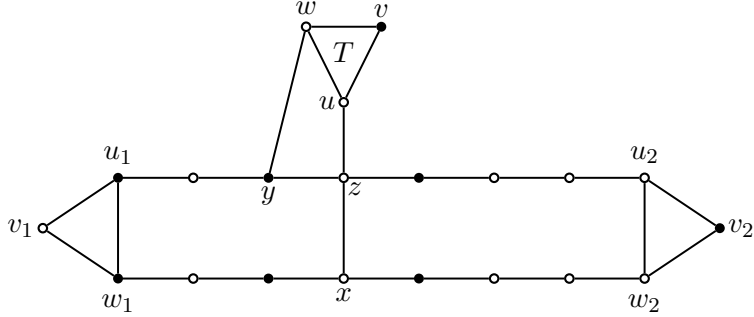
\end{enumerate}
In all cases, we have a contradiction to the minimality of $C$, and hence $M$ can have no triangle, and so is monotone. Note also, in all cases, that if there are edges other than those shown, they are either irrelevant, if they are $L$:$L$ or $R$:$R$, or can be used to shorten the prehole further, if they are \LR. See Fig.~\ref{fig:another v chord}, for example.
\end{proof}
Lemma~\ref{prehole:lem50} implies a polynomial time algorithm for detecting a minimal prehole, in a similar way to the algorithm of section~\ref{ss:findprehole}.

\subsection{Preholes containing 5-holes and triangles}

It remains to consider preholes in graphs which contain 5-holes, and may also contain triangles.
Preholes determined by two triangles will be dealt with as in section~\ref{ss:noholes}.
\begin{lemma} \label{l:le12}
Let $C$ be a minimal prehole in a flawless graph $G$ which contains no odd hole of size greater than five. If $C$ connects a 5-hole and a triangle, or if $C$ connects two 5-holes, then $|C|\leq 12$.
\end{lemma}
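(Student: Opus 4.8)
The plan is to follow the same template used for the other minimal-prehole lemmas: decompose $C$ and then bound its size by controlling the structure between its two ``ends''. Since $G$ contains no odd hole of size greater than five, an even chord of $C$ splits it (via Lemma~\ref{lem:oddcycle}, exactly as in the proof of Lemma~\ref{prehole:lem30}) into two odd cycles, each of which contains either a triangle or a $5$-hole; these are the two caps, joined along $C$ by two vertex-disjoint paths $P_1,P_2$. The hypothesis of the lemma guarantees that at least one cap, say $H_1$, is a $5$-hole. So the object is to show that the two connecting paths, and hence $|C|$, are bounded, and to count carefully enough to reach the constant $12$.

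The decisive observation — and the reason the statement excludes the two-triangle case — is that a $5$-hole is a genuine odd hole, whereas a triangle is not. Hence Lemma~\ref{lem:quasi20} applies to $H_1$: in the (connected) flawless graph $G$ the odd hole $H_1$ is \emph{dominating}, so every vertex of $G$, in particular every vertex of the far cap $H_2$ and of the paths $P_1,P_2$, lies within distance $1$ of $H_1$. Since both endpoints of each such dominating edge lie in $C$, this domination is realised inside $G[C]$. Thus every path vertex and every vertex of $H_2$ carries an edge to $\{a_1,\dots,a_5\}=H_1$, and because $C$ is a prehole all these edges are \emph{even} chords of $C$. (By contrast, two triangles impose no such domination, which is exactly why those preholes can be arbitrarily large, by Lemma~\ref{prehole:lem50}.)

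First I would use minimality to convert this domination into a length bound. Take a vertex $u$ on $P_1$ that is far from $H_1$ along the cycle and pick its neighbour $h\in H_1$. The even chord $uh$ splits $C$ into two odd cycles; recombining the short arc near $H_1$ (closed through the cap chord of $H_1$) with the far cap $H_2$ yields a strictly shorter even cycle all of whose chords remain even, i.e.\ a smaller prehole, contradicting minimality unless $u$ is close to $H_1$. Applying this symmetrically from $H_2$'s side (using that $H_2$ is also dominating when it is a $5$-hole, and otherwise using the triangle/quadrangle junction directly) bounds both $P_1$ and $P_2$ by a small constant. The number of chords that $H_1$ can send onto a single path vertex, and the placement of those chords, are controlled by Lemmas~\ref{lem:quasi30} and~\ref{lem:quasi35} (each external vertex has at most two neighbours on the short odd hole $H_1$, at cycle-distance $2$, and at most three chords overall). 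Adding up the contributions — at most $5$ vertices per cap, $O(1)$ vertices at the connecting quadrangle/edge, and the now-bounded paths — gives $|C|\le 12$.

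The hard part will be the ``short-circuit produces a smaller prehole'' step: one must check that the replacement cycle really is a prehole, i.e.\ verify that its alternating bipartition stays consistent and that every one of its chords (the new chord $uh$, the surviving cap chords, and any additional chords that $H_1$ throws onto the paths) is even, tracking all the parity interactions. Pinning down the exact constant $12$, rather than merely ``bounded'', then requires enumerating the few residual short configurations that survive these reductions — analogous to the $3\times 4$ interior-triangle count that produced the same bound in Lemma~\ref{prehole:lem30}.
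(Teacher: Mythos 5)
You correctly isolate the asymmetry that makes this lemma work: a $5$-hole is a genuine odd hole, so Lemma~\ref{lem:quasi20} (equivalently Lemma~\ref{lem:quasi10}) makes it dominating in the connected flawless graph, every dominating edge with both ends on $C$ is an edge or an even chord of $C$, and triangles give no such leverage --- which is why two-triangle preholes (Lemma~\ref{prehole:lem50}) can be arbitrarily long. Up to this point you agree with the paper. The gap is in how you turn domination into a length bound. Your central step --- take a path vertex $u$ far from $H_1$, take its domination chord $uh$, and ``recombine the short arc near $H_1$ (closed through the cap chord of $H_1$) with the far cap $H_2$'' to obtain a strictly shorter prehole --- is asserted, not proved, and as sketched it can fail. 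Splicing in both the chord $uh$ and the cap chord of $H_1$ does restore even length, but it flips the cycle-distance parity across the splice vertices: for instance a chord from an end of the cap chord of $H_1$ to a vertex of $P_2$ at distance $2$ from the other end (a configuration permitted by Lemma~\ref{lem:quasi35}) is an even chord of $C$ but becomes an \emph{odd} chord of the recombined cycle, which is then not a prehole at all. You flag exactly this parity bookkeeping as ``the hard part,'' but the proposal contains no mechanism for it, so the minimality contradiction never gets off the ground.

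The paper's proof avoids this entirely and is pointwise: it applies domination not to interior path vertices but to the junction vertex $v$ where the triangle (resp.\ the far $5$-hole) meets $P_1$. That vertex already carries the forced even chord $vw$ (the cap chord of $T$), and Lemma~\ref{lem:quasi10} gives a second chord $vu$ with $u\in H$; Lemma~\ref{lem:quasi35} then forces all chord-endpoints at $v$ to lie pairwise at distance $2$ along $C$, hence $\dist_C(u,w)\le 4$, which immediately bounds $|P_2|$, and symmetrically $|P_1|$ --- no shortening argument is needed in the triangle case. Minimality is invoked only in the two-$5$-hole case, where a stirrer/flaw argument forces $|P_1|,|P_2|\le 2$. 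Your proposal never makes this use of Lemma~\ref{lem:quasi35} at the junction vertex, and correspondingly never produces a concrete path bound: the conclusion $|C|\le 12$ rests on ``adding up the contributions'' of quantities you have only bounded by an unspecified ``small constant.'' So both decisive steps --- the validity of the shortening and the arithmetic that reaches $12$ --- are missing, and the first, as formulated, is false without substantial additional argument.
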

\begin{proof}
The situation is as shown in Fig.~\ref{fig:53prehole}. The prehole $C$ connects a hole $H$ and a triangle $T$, though the argument applies equally if $T$ is replaced by a 5-hole $H'$, as indicated in Fig.~\ref{fig:55prehole}.

\begin{figure}[H]
\tikzset{every node/.style={circle,draw,fill=none,inner sep=0pt,minimum size=1.25mm}}
\centerline{\begin{tikzpicture}[xscale=0.8,yscale=0.6,font=\small]
\draw (0,0) node[b] (0) {}
 (1,1.3) node[w] (1) {} (1,-1.3) node[w] (1') {}
 (2,1) node[b] (2) {} (2,-1) node[b,label=below:$u$] (2') {}
 (3,1) node[w] (3) {} (3,-1) node[w] (3') {}
 (4,1) node[b] (4) {} (4,-1) node[b] (4') {}
 (5,1) node[w] (5) {} (5,-1) node[w] (5') {}
 (6,1) node[b,label=above:$v$] (6) {} (6,-1) node[b,label=below:$w$] (6') {}
 (7.15,0) node[w] (7) {};
 \draw (3')--(2')--(1')--(0)--(1)--(2)--(3)--(4)--(5)--(6)--(7)--(6')--(5')--(4')--(3') ;
 \draw (2)--(2') (6)--(6') (2')--(6) (4')--(6) ;
 \draw (1.2,0) node[empty] {$H$}  (6.4,0) node[empty] {$T$} (7.65,0)
 (4,1.5) node[empty] {$P_1$}  (4,-1.5) node[empty] {$P_2$}  ;
\end{tikzpicture}}
\caption{A prehole determine by a 5-hole and a triangle}\label{fig:53prehole}
\end{figure}
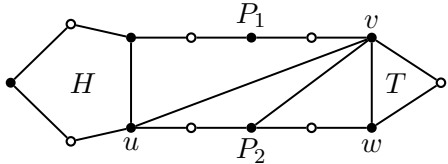
 Now $H$ and $T$ are joined by two paths $P_1,P_2$, as shown. We will not assume that these paths are of equal length. Let $v$ be the unique vertex in $T\cap P_1$, and $w$ the unique vertex in $T\cap P_2$. From Lemma~\ref{lem:quasi35}, $v$ is incident to at most three chords in $C$, which must be at distance 2 on $C$. However, by Lemma~\ref{lem:quasi10}, $v$ must be adjacent to $u\in H$. Since $\dist(u,w)\leq 4$, $|P_2|\leq 4$. The same argument applied to $w$ gives $|P_1|\leq 4$. Thus $|C|\leq 12$.

 \begin{figure}[H]
\tikzset{every node/.style={circle,draw,fill=none,inner sep=0pt,minimum size=1.25mm}}
\centerline{\begin{tikzpicture}[xscale=0.7,yscale=0.6,font=\small]
\draw (0,0) node[b] (0) {}
 (1,1.3) node[w] (1) {} (1,-1.3) node[w] (1') {}
 (2,1) node[b] (2) {} (2,-1) node[b,label=below:$u$] (2') {}
 (3,1) node[w] (3) {} (3,-1) node[w] (3') {}
 (4,1) node[b] (4) {} (4,-1) node[b] (4') {}
 (5,1) node[w] (5) {} (5,-1) node[w,label=below:$b$] (5') {}
 (6,1) node[b,label=above:$v$] (6) {} (6,-1) node[b,label=below:$w$] (6') {}
 (7,1.3) node[w,label=above:$a$] (7u) {} (7,-1.3) node[w] (7d) {}
 (8,0) node[b] (8) {}  ;
 \draw (3')--(2')--(1')--(0)--(1)--(2)--(3)--(4)--(5)--(6) (6')--(5')--(4')--(3') ;
 \draw (2)--(2') (6)--(6') (2')--(6) (4')--(6) ;
 \draw (6)--(7u)--(8)--(7d)--(6') ;
 \draw[dotted] (5')--(7u) ;
 \draw (1.2,0) node[empty] {$H$}   (7,0) node[empty] {$H'$}
 (4,1.5) node[empty] {$P_1$}  (4,-1.5) node[empty] {$P_2$} (4,-2.5) node[empty] {(a)} ; \end{tikzpicture}
\hspace*{2cm}
\begin{tikzpicture}[xscale=0.75,yscale=0.6,font=\small]
\draw (-1,0) node[b] (0) {}
 (0,1.3) node[w] (1) {} (1,-1.3) node[w] (1') {}
 (1,1) node[b] (2) {} (2,-1) node[b,label=below:$u$] (2') {}
 (2,1) node[w] (3) {} (3,-1) node[w] (3') {}
 (3,1) node[b] (4) {} (4,-1) node[b] (4') {}
 (4,1) node[w] (5) {} (5,-1) node[w,label=below:$b$] (5') {}
 (5,1) node[b,label=above:$v$] (6) {} (6,-1.3) node[b,label=below:$w$] (6') {}
 (6,1) node[w,label=above:$a$] (7u) {} (8,0) node[w] (7d) {}
 (7,1.3) node[b] (8) {}  ;
 \draw (3')--(2')--(1')--(0)--(1)--(2)--(3)--(4)--(5)--(6) (6')--(5')--(4')--(3') ;
 \draw (2)--(2') (5')--(7u) ;
 \draw (6)--(7u)--(8)--(7d)--(6') ;
 \draw (0.7,0) node[empty] {$H$}   (6.5,0) node[empty] {$H''$}
 (3.5,1.5) node[empty] {$P'_1$}  (3.5,-1.5) node[empty] {$P'_2$} (4,-2.5) node[empty] {(b)} ;
\end{tikzpicture}}
\caption{A prehole determine by two 5-holes}\label{fig:55prehole}
\end{figure}
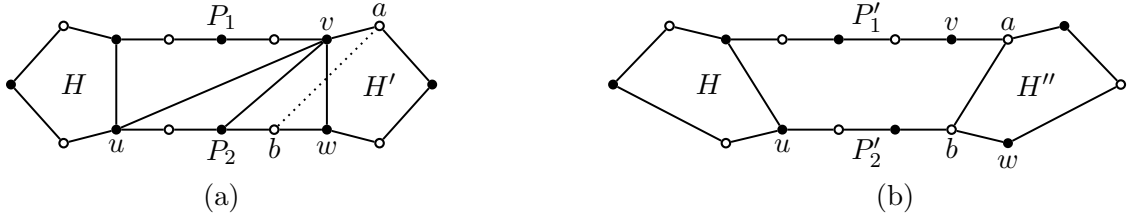
If $T$ is replaced by $H'$, let us assume that $P_2$ is the shorter path, and is as short as possible. Then using the argument above, $|P_1|,|P_2|\leq 4$. If $|P_2|\geq 3$, $v$ must have three edges to $P_2\cup H$. (In Fig.~\ref{fig:55prehole}(a), $|P_2|=4$.) Then  $\{u,\ldots,w,v,a\}$ form a stirrer, unless $ab$ is an edge. If so, $C$ has a representation with $|P'_1|>4$, as shown in Fig.~\ref{fig:55prehole}(b), so $C$ cannot be a minimal prehole.  It follows that  $|P_1|,|P_2|\leq 2$, and hence  $|C|\leq 12$.
\end{proof}
It follows that there is an $O(n^{12})$ time algorithm for detecting all minimal preholes in a graph with no holes of length greater than 5, using simple enumeration.

\section{Recognition algorithm}\label{sec:algorithm}

The pseudocode on page \pageref{algo:recognition} summarises our algorithm.

\begin{algorithm}
  \SetNlSty{small}{}{}
  \SetArgSty{textnormal}
  \SetKwInOut{Input}{input}
  \SetKwInOut{Output}{output}
  \SetKw{KwAnd}{and}
  \SetKw{KwOr}{or}
  \SetKw{KwAccept}{accept}
  \SetKw{KwReject}{reject}

  \Input{a connected graph $G=(V,E)$}
  \Output{accept if $G$ is quasi-monotone, reject otherwise}

  \BlankLine

  \Begin{
    \lIf{\label{l2}
      $G$ contains a flaw or a prehole of length $12$ or less
    }{
      \KwReject
    }
    \If{\label{l03}$G$ contains a hole}{
      \label{l3}find a hole $C$ in $G$\;
      \lIf(\tcc*[f]{Lemma \ref{prehole:lem30}}){\label{l4}
        $|C|$ is even \KwOr $G$ contains a triangle}{\KwReject
      }
      \label{l5}reduce it to a short hole $C'$;
      \tcc*[f]{Lemma \ref{lem:shorthole}} \\
      \lIf{\label{l6}$|C'|$ is even}{\KwReject}
    }
    \eIf{\label{l7}$C'$ is defined \KwAnd $|C'| \ge 7$}{
      \label{l14}choose vertices $v,w \in V$ with $\dist(v,w) \ge 3$\; 
      \eIf{$\cG_v$ and $\cG_w$ are monotone}{
        \label{l8}partition $V$ into independent sets
          $D_1, D_2, \dots, D_{\ell}$; \tcc*[f]{section \ref{ss:splitting}} \\
        \For{\label{l9}$i \gets 1$ \KwTo $\ell$}{
          \For{\label{l10}every $4$-cycle $(a,c,b,d)$ in $G_i$}{
            \lIf(\label{l11}\tcc*[f]{section \ref{ss:findprehole}}){
              disj.~paths $P_{ad}$ and $P_{bc}$ in $\ol{G}_i$ exist}{\KwReject}
          }
        }
        \KwAccept
      }{\KwReject}
    }{\label{l17} $\cT \gets {}$the set of all triangles in $G$;
      \tcc*[f]{Lemma \ref{l:le12}} \\
      \For{\label{l18}every pair of vertex-disjoint triangles $T_1 \in \cT$ and $T_2 \in \cT$}%
      {\label{l19}$U_1\gets{}$ the vertex set of the component of $G \sm T_2$ containing $T_1$\;
       \label{l20}$U_2\gets{}$ the vertex set of the component of $G \sm T_1$ containing $T_2$\;
       \label{l21}$U_3 \gets U_1 \cap U_2$; $U_4 \gets T_1 \cup T_2 \cup U_3$\;
       \lIf{\label{l22}$G[U_3]$ is bipartite \KwAnd $|U_4|>12$ \KwAnd
            $G[U_4]$ contains two disjoint $T_1$--$T_2$-paths}{\KwReject}
      }
      \KwAccept
    }
  }

  \caption{Algorithm recognising quasi-monotone graphs}
  \label{algo:recognition}
\end{algorithm}

For the run-time analysis we assume a connected graph $G=(V,E)$ as input with
$|V|=n$ and $|E|=m$. Line \ref{l2} can be executed in time $O(n^{12})$. In
line \ref{l3} the algorithm can find a hole in $G$ if there is one in time
$O(m^2)$, see \cite{NikPal07}. In line \ref{l5} we shorten a long hole,
$O(nm)$. The tests in lines \ref{l4}, \ref{l6} and \ref{l7} require time
$O(n)$. In the same time we choose $v$ and $w$ in line \ref{l14}. The graphs
$\cG_v$ and $\cG_w$ can be constructed and recognised as monotone graphs in
linear time \cite{SpBrSt87}. This also gives the partition into chain graphs
in line \ref{l8}. In lines \ref{l9} and \ref{l10} we consider $O(n^4)$
quadrangles. The body of the for-loops in line \ref{l11} can be implemented
by max flow in time $O(n^3)$. That is, the then-branch of the conditional
statement starting at line \ref{l7}, which deals with holes of length seven or
more, requires time $O(n^7)$ in total.

In the else-branch we only consider triangles, since every prehole containing
a five-hole has length at most twelve by Lemma \ref{l:le12}, and was therefore
detected in line \ref{l2}.  The set $\cT$ in line \ref{l17} can be constructed
in time $O(n^3)$, and $n^3$ also a bound on its size. Therefore the for-loop
starting in line \ref{l18} is executed at most $n^6$ times. In lines
\ref{l10}--\ref{l22} we construct $U_1$, $U_2$, $G[U_3]$ and $G[U_4]$ in
linear time. The disjoint paths in line \ref{l22} can be found in time
$O(n^3)$, which gives a total time of $O(n^9)$ for the else-branch.
Consequently line \ref{l2} determines the overall running time of $O(n^{12})$.

\section{Recognising a prehole} \label{sec:NPc}

The recognition algorithm finds a prehole in a flawless graph, if there is one.
However, it relies heavily on the absence of flaws. Therefore it is quite natural to ask
whether we can find a prehole in any graph in polynomial time. This is exactly the
recognition problem for the class \och.

Here we consider a related problem: Given a graph, is it a prehole?
This is the question of whether the graph is a cycle with only even chords.
We will show that this is an \NP-complete problem. Of course, this does not mean that
the recognition of odd-chordal graphs is \NP-complete, since that is the problem of determining
whether the graph \emph{contains} a prehole. To illustrate the difference, consider
the question of whether a graph is a cycle, having either odd or even chords. This is
\NP-complete, since it is the Hamilton  cycle problem. By contrast, the question of
whether a graph \emph{contains} a cycle is trivial. It simply involves determining
whether the graph is a forest.

A graph $G=(V,E)$ is a prehole if there is a bipartition $L,R$ of
$V$ such that $G[\LR]$ is a hole. That is, $G[\LR]$ is a
Hamilton cycle of $G$ and every other edge in $E$ has both endpoints
in $L$ or both in $R$. If $G$ is a prehole then such an $L,R$
is called \emph{certifying bipartition}. For example, the graph $G$ depicted in
Fig.~\ref{fig:rec} is a prehole. Three certifying bipartitions are shown.
$G$ has no more certifying bipartitions. To see
this, start from a bipartition of one of the triangles, and observe that it
uniquely extends to a bipartition of the whole graph.

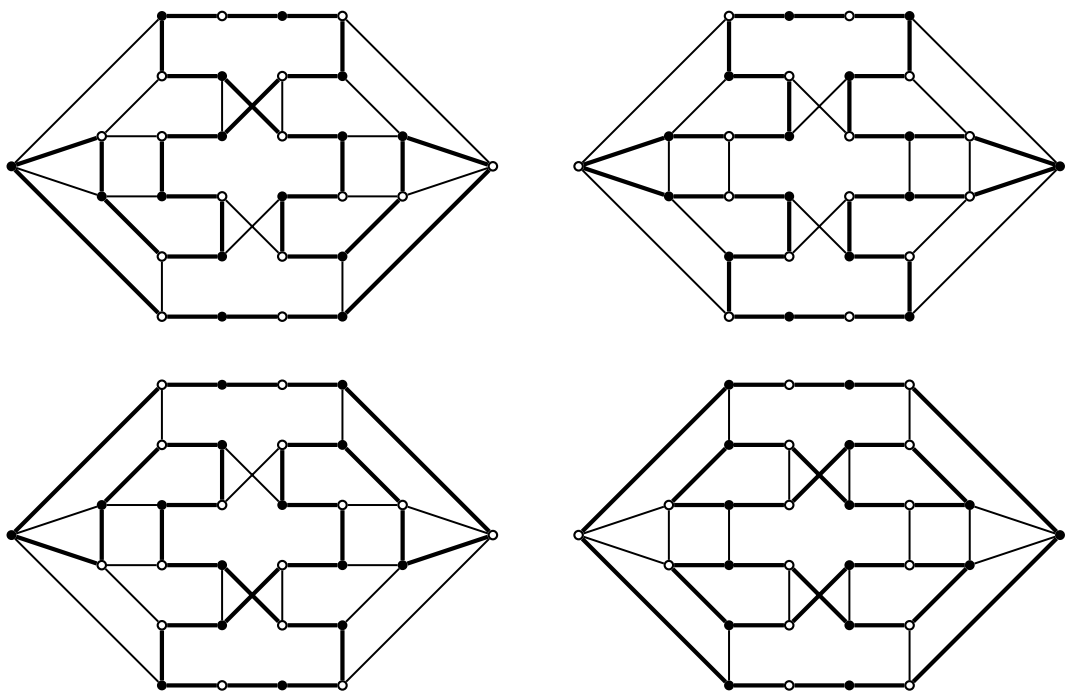
\begin{figure}[htbp]
  \centering
  \setlength{\tabcolsep}{5mm}
  \renewcommand{\arraystretch}{5.0}
  \begin{tabular}{cc}
    \begin{tikzpicture}[scale=0.4]
      \node[b] (16) at (-3, 5) {};  \node[w] (26) at (-1, 5) {};
      \node[b] (36) at ( 1, 5) {};  \node[w] (46) at ( 3, 5) {};
      \node[w] (15) at (-3, 3) {};  \node[b] (25) at (-1, 3) {};
      \node[w] (35) at ( 1, 3) {};  \node[b] (45) at ( 3, 3) {};
      \node[w] (14) at (-3, 1) {};  \node[b] (24) at (-1, 1) {};
      \node[w] (34) at ( 1, 1) {};  \node[b] (44) at ( 3, 1) {};
      \node[b] (13) at (-3,-1) {};  \node[w] (23) at (-1,-1) {};
      \node[b] (33) at ( 1,-1) {};  \node[w] (43) at ( 3,-1) {};
      \node[w] (12) at (-3,-3) {};  \node[b] (22) at (-1,-3) {};
      \node[w] (32) at ( 1,-3) {};  \node[b] (42) at ( 3,-3) {};
      \node[w] (11) at (-3,-5) {};  \node[b] (21) at (-1,-5) {};
      \node[w] (31) at ( 1,-5) {};  \node[b] (41) at ( 3,-5) {};
      \node[w] (b1) at (-5, 1) {};  \node[b] (b2) at (-8, 0) {};
      \node[b] (b3) at (-5,-1) {};  \node[b] (d1) at ( 5, 1) {};
      \node[w] (d2) at ( 8, 0) {};  \node[w] (d3) at ( 5,-1) {};
      \draw[ultra thick] (b1)--(b3)--(12)--(22)--(23)--(13)--(14)--(24)--(35)--(45)--(46)--(36)--(26)--(16)--(15)--(25)--(34)--(44)--(43)--(33)--(32)--(42)--(d3)--(d1)--(d2)--(41)--(31)--(21)--(11)--(b2)--(b1);
      \draw (13)--(b3)--(b2)--(16)  (14)--(b1)--(15);
      \draw (11)--(12)  (22)--(33)  (23)--(32)  (24)--(25)  (34)--(35);
      \draw (46)--(d2)--(d3)--(43)  (45)--(d1)--(44)  (42)--(41);
    \end{tikzpicture}
    &
    \begin{tikzpicture}[scale=0.4]
      \node[w] (16) at (-3, 5) {};  \node[b] (26) at (-1, 5) {};
      \node[w] (36) at ( 1, 5) {};  \node[b] (46) at ( 3, 5) {};
      \node[b] (15) at (-3, 3) {};  \node[w] (25) at (-1, 3) {};
      \node[b] (35) at ( 1, 3) {};  \node[w] (45) at ( 3, 3) {};
      \node[w] (14) at (-3, 1) {};  \node[b] (24) at (-1, 1) {};
      \node[w] (34) at ( 1, 1) {};  \node[b] (44) at ( 3, 1) {};
      \node[w] (13) at (-3,-1) {};  \node[b] (23) at (-1,-1) {};
      \node[w] (33) at ( 1,-1) {};  \node[b] (43) at ( 3,-1) {};
      \node[b] (12) at (-3,-3) {};  \node[w] (22) at (-1,-3) {};
      \node[b] (32) at ( 1,-3) {};  \node[w] (42) at ( 3,-3) {};
      \node[w] (11) at (-3,-5) {};  \node[b] (21) at (-1,-5) {};
      \node[w] (31) at ( 1,-5) {};  \node[b] (41) at ( 3,-5) {};
      \node[b] (b1) at (-5, 1) {};  \node[w] (b2) at (-8, 0) {};
      \node[b] (b3) at (-5,-1) {};  \node[w] (d1) at ( 5, 1) {};
      \node[b] (d2) at ( 8, 0) {};  \node[w] (d3) at ( 5,-1) {};
      \draw[ultra thick] (b2)--(b1)--(14)--(24)--(25)--(15)--(16)--(26)--(36)--(46)--(45)--(35)--(34)--(44)--(d1)--(d2)--(d3)--(43)--(33)--(32)--(42)--(41)--(31)--(21)--(11)--(12)--(22)--(23)--(13)--(b3)--(b2);
      \draw (15)--(b1)--(b3)--(12)  (16)--(b2)--(11)  (14)--(13);
      \draw (25)--(34)  (24)--(35)  (23)--(32)  (22)--(33);
      \draw (45)--(d1)--(d3)--(42)  (46)--(d2)--(41)  (44)--(43);
    \end{tikzpicture}
    \\
    \begin{tikzpicture}[scale=0.4]
      \node[w] (16) at (-3, 5) {};  \node[b] (26) at (-1, 5) {};
      \node[w] (36) at ( 1, 5) {};  \node[b] (46) at ( 3, 5) {};
      \node[w] (15) at (-3, 3) {};  \node[b] (25) at (-1, 3) {};
      \node[w] (35) at ( 1, 3) {};  \node[b] (45) at ( 3, 3) {};
      \node[b] (14) at (-3, 1) {};  \node[w] (24) at (-1, 1) {};
      \node[b] (34) at ( 1, 1) {};  \node[w] (44) at ( 3, 1) {};
      \node[w] (13) at (-3,-1) {};  \node[b] (23) at (-1,-1) {};
      \node[w] (33) at ( 1,-1) {};  \node[b] (43) at ( 3,-1) {};
      \node[w] (12) at (-3,-3) {};  \node[b] (22) at (-1,-3) {};
      \node[w] (32) at ( 1,-3) {};  \node[b] (42) at ( 3,-3) {};
      \node[b] (11) at (-3,-5) {};  \node[w] (21) at (-1,-5) {};
      \node[b] (31) at ( 1,-5) {};  \node[w] (41) at ( 3,-5) {};
      \node[b] (b1) at (-5, 1) {};  \node[b] (b2) at (-8, 0) {};
      \node[w] (b3) at (-5,-1) {};  \node[w] (d1) at ( 5, 1) {};
      \node[w] (d2) at ( 8, 0) {};  \node[b] (d3) at ( 5,-1) {};
      \draw[ultra thick] (b3)--(b1)--(15)--(25)--(24)--(14)--(13)--(23)--(32)--(42)--(41)--(31)--(21)--(11)--(12)--(22)--(33)--(43)--(44)--(34)--(35)--(45)--(d1)--(d3)--(d2)--(46)--(36)--(26)--(16)--(b2)--(b3);
      \draw (14)--(b1)--(b2)--(11)  (13)--(b3)--(12);
      \draw (16)--(15)  (25)--(34)  (24)--(35)  (22)--(23)  (32)--(33);
      \draw (41)--(d2)--(d1)--(44)  (42)--(d3)--(43)  (46)--(45);
    \end{tikzpicture}
    &
    \begin{tikzpicture}[scale=0.4]
      \node[b] (16) at (-3, 5) {};  \node[w] (26) at (-1, 5) {};
      \node[b] (36) at ( 1, 5) {};  \node[w] (46) at ( 3, 5) {};
      \node[b] (15) at (-3, 3) {};  \node[w] (25) at (-1, 3) {};
      \node[b] (35) at ( 1, 3) {};  \node[w] (45) at ( 3, 3) {};
      \node[b] (14) at (-3, 1) {};  \node[w] (24) at (-1, 1) {};
      \node[b] (34) at ( 1, 1) {};  \node[w] (44) at ( 3, 1) {};
      \node[b] (13) at (-3,-1) {};  \node[w] (23) at (-1,-1) {};
      \node[b] (33) at ( 1,-1) {};  \node[w] (43) at ( 3,-1) {};
      \node[b] (12) at (-3,-3) {};  \node[w] (22) at (-1,-3) {};
      \node[b] (32) at ( 1,-3) {};  \node[w] (42) at ( 3,-3) {};
      \node[b] (11) at (-3,-5) {};  \node[w] (21) at (-1,-5) {};
      \node[b] (31) at ( 1,-5) {};  \node[w] (41) at ( 3,-5) {};
      \node[w] (b1) at (-5, 1) {};  \node[w] (b2) at (-8, 0) {};
      \node[w] (b3) at (-5,-1) {};  \node[b] (d1) at ( 5, 1) {};
      \node[b] (d2) at ( 8, 0) {};  \node[b] (d3) at ( 5,-1) {};
      \draw[ultra thick] (b1)--(15)--(25)--(34)--(44)--(d1)--(45)--(35)--(24)--(14)--(b1);
      \draw[ultra thick] (b2)--(16)--(26)--(36)--(46)--(d2)--(41)--(31)--(21)--(11)--(b2);
      \draw[ultra thick] (b3)--(13)--(23)--(32)--(42)--(d3)--(43)--(33)--(22)--(12)--(b3);
      \draw (b1)--(b2)--(b3)--(b1)  (22)--(23)  (24)--(25);
      \draw (11)--(12)  (13)--(14)  (15)--(16);
      \draw (41)--(42)  (43)--(44)  (45)--(46);
      \draw (d1)--(d2)--(d3)--(d1)  (32)--(33)  (34)--(35);
    \end{tikzpicture}
  \end{tabular}
  \vspace{-\baselineskip}
  \caption{Three certifying bipartitions of a graph $G$.
    For the fourth bipartition $L,R$ the graph $G[\LR]$ is $2$-regular
    but disconnected.}
  \label{fig:rec}
\end{figure}

The decision problem PH asks, given a graph $G$, whether $G$ is a prehole.
Clearly PH is a problem in \NP. We show it is \NP-complete by a reduction from
NAE3SAT.

An instance of NAE3SAT is a boolean formula $\phi = \bigwedge_{j=1}^m c_j$ in
CNF. Each clause $c_j = \ell_{j,1} \vee \ell_{j,2} \vee \ell_{j,3}$ consists
of exactly three literals. If $X = \{x_1,x_2,\dots,x_n\}$ is the set of
variables occurring in $\phi$ then every literal $\ell_{j,k}$ is either a
variable $x_i$ or its negation $\neg x_i$. For a truth assignment
$a : X \to \{0,1\}$ let $\ol{a} : X \to \{0,1\}$ be defined by $\ol{a}(x) =
a(\neg x)$ for all $x \in X$. The instance $\phi$ of NAE3SAT is accepted if
there is a truth assignment $a$ such that $a(\phi)=1$ and $\ol{a}(\phi)=1$.
That is, for each clause $c_j$ not all literals receive equal truth value.

Given an instance $\phi$ of NAE3SAT we construct a graph $G=(V,E)$ as follows:
\begin{enumerate}
\setlength{\itemsep}{0pt}
\item For each variable we create a \emph{truth assignment component} (tac) as
  shown in Fig.~\ref{fig:tac}, which also defines the vertices $x_i^+$ and
  $x_i^-$.
\item For each clause we create a \emph{satisfaction test component} (stc) as
  shown in Fig.~\ref{fig:stc}, which also defines the vertices $b_{j,k}$ and
  $d_{j,k}$. The stc is obtained from the graph $G$ shown in Figure
  \ref{fig:rec} by cutting an edge that connects two vertices of degree two
  into two half-edges.
\item We link these components in a circular way as shown in Figure
  \ref{fig:ring}.
\item We add the edges in the set
  \[F = \{\{x_i^+,b_{j,k}\}, \{x_i^-,d_{j,k}\} \mid \ell_{j,k} = x_i\}\} \cup
        \{\{x_i^+,d_{j,k}\}, \{x_i^-,b_{j,k}\} \mid \ell_{j,k} = \neg x_i\}\}\]
  where $1 \le i \le n$, $1 \le j \le m$ and $1 \le k \le 3$.
\end{enumerate}
This completes the construction of $G$.

\begin{figure}[htbp]
  \hspace*{\fill}
  \begin{tikzpicture}[scale=0.3]
    \node[w] (L) at (0,3) {}; \node[w] (l) at (2,3) {};
    \node[w, label=below:$x_i^-$] (1) at (5,0) {};
    \node[w](2) at (5,2) {}; \node[w](3) at (5,4) {};
    \node[w, label=above:$x_i^+$] (4) at (5,6) {};
    \node[w] (r) at (8,3) {}; \node[w] (R) at (10,3) {};
    \draw (-2,3)--(L)--(l)--(4)--(r);
    \draw (l)--(1)--(r)--(R)--(12,3);
    \draw (1)--(2)--(3)--(4);
  \end{tikzpicture}
  \hspace*{\fill}
  \raisebox{12mm}[0pt][0pt]{%
  \begin{tikzpicture}[scale=0.2]
    \node[tac, label=center:$x_i$] (x) at (5,3) {};
    \draw (0,3)--(x)--(10,3);
  \end{tikzpicture}
  }
  \hspace*{\fill}

  \caption{The truth assignment component for $x_i$,
    on the left in full and on the right symbolically.}
  \label{fig:tac}
\end{figure}
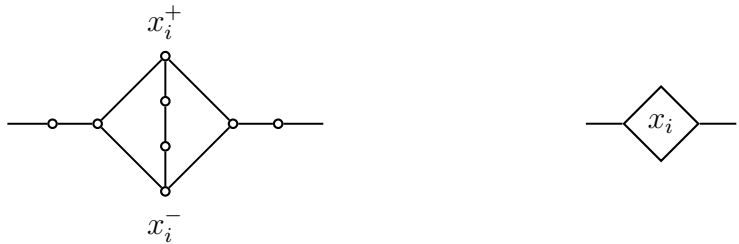

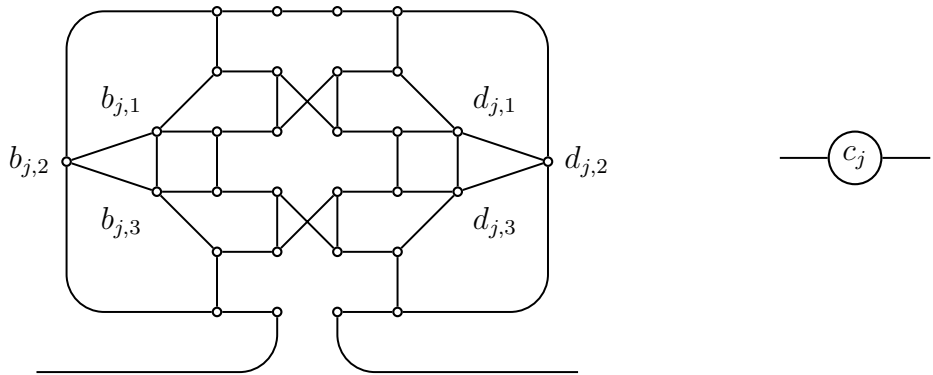
\begin{figure}[htbp]
  \hspace*{\fill}
  \begin{tikzpicture}[scale=0.4]
    \node[w] (16) at (-3, 5) {};  \node[w] (26) at (-1, 5) {};
    \node[w] (36) at ( 1, 5) {};  \node[w] (46) at ( 3, 5) {};
    \node[w] (15) at (-3, 3) {};  \node[w] (25) at (-1, 3) {};
    \node[w] (35) at ( 1, 3) {};  \node[w] (45) at ( 3, 3) {};
    \node[w] (14) at (-3, 1) {};  \node[w] (24) at (-1, 1) {};
    \node[w] (34) at ( 1, 1) {};  \node[w] (44) at ( 3, 1) {};
    \node[w] (13) at (-3,-1) {};  \node[w] (23) at (-1,-1) {};
    \node[w] (33) at ( 1,-1) {};  \node[w] (43) at ( 3,-1) {};
    \node[w] (12) at (-3,-3) {};  \node[w] (22) at (-1,-3) {};
    \node[w] (32) at ( 1,-3) {};  \node[w] (42) at ( 3,-3) {};
    \node[w] (11) at (-3,-5) {};  \node[w] (21) at (-1,-5) {};
    \node[w] (31) at ( 1,-5) {};  \node[w] (41) at ( 3,-5) {};
    \node[w, label=above left:$b_{j,1}$] (b1) at (-5, 1) {};
    \node[w, label=left:$b_{j,2}$] (b2) at (-8, 0) {};
    \node[w, label=below left:$b_{j,3}$] (b3) at (-5,-1) {};
    \node[w, label=above right:$d_{j,1}$] (d1) at (5, 1) {};
    \node[w, label=right:$d_{j,2}$] (d2) at ( 8, 0) {};
    \node[w, label=below right:$d_{j,3}$] (d3) at (5,-1) {};
    \draw (b1)--(15)--(25)--(34)--(44)--(d1)--(45)--(35)--(24)--(14)--(b1);
    \draw (16)--(26)--(36)--(46)  (41)--(31)  (21)--(11);
    \draw (b3)--(13)--(23)--(32)--(42)--(d3)--(43)--(33)--(22)--(12)--(b3);
    \draw (b1)--(b2)--(b3)--(b1)  (22)--(23)  (24)--(25);
    \draw (11)--(12)  (13)--(14)  (15)--(16);
    \draw (41)--(42)  (43)--(44)  (45)--(46);
    \draw (d1)--(d2)--(d3)--(d1)  (32)--(33)  (34)--(35);
    \draw[rounded corners=5mm] (-9,-7)--(-1,-7)--(21);
    \draw[rounded corners=5mm] ( 9,-7)--( 1,-7)--(31);
    \draw[rounded corners=5mm] (16)--(-8, 5)--(b2)--(-8,-5)--(11);
    \draw[rounded corners=5mm] (46)--( 8, 5)--(d2)--( 8,-5)--(41);
  \end{tikzpicture}
  \hspace*{\fill}
  \raisebox{25mm}[0pt][0pt]{%
  \begin{tikzpicture}[scale=0.2]
    \node[stc] (c) at (5,3) {$c_j$};
    \draw (0,3)--(c)--(10,3);
  \end{tikzpicture}
  }
  \hspace*{\fill}

  \caption{The satisfaction test component for $c_j$,
    on the left in full and on the right symbolically.}
  \label{fig:stc}
\end{figure}

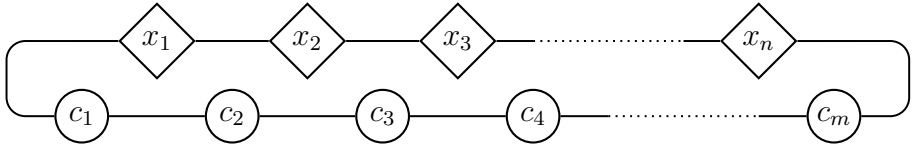
\begin{figure}[htbp]
  \centering
  \begin{tikzpicture}
    \node[tac, label=center:$x_1$] (x1) at ( 3,1) {};
    \node[tac, label=center:$x_2$] (x2) at ( 5,1) {};
    \node[tac, label=center:$x_3$] (x3) at ( 7,1) {};
    \node[tac, label=center:$x_n$] (xn) at (11,1) {};
    \node[stc] (c1) at ( 2,0) {$c_1$};
    \node[stc] (c2) at ( 4,0) {$c_2$};
    \node[stc] (c3) at ( 6,0) {$c_3$};
    \node[stc] (c4) at ( 8,0) {$c_4$};
    \node[stc] (cm) at (12,0) {$c_m$};
    \draw (x1)--(x2)--(x3)--(8,1)  (10,1)--(xn);
    \draw[dotted] (8,1)--(10,1);
    \draw (c1)--(c2)--(c3)--(c4)--(9,0)  (11,0)--(cm);
    \draw[dotted] (9,0)--(11,0);
    \draw[rounded corners=2.5mm] (x1)--(1,1)--(1,.5)--(1,0)--(c1);
    \draw[rounded corners=2.5mm] (xn)--(13,1)--(13,.5)--(13,0)--(cm);
  \end{tikzpicture}
  \caption{The components linked.}
  \label{fig:ring}
\end{figure}

Next we assume $a$ is a truth assignment such that $a(\phi)=1$ and
$\ol{a}(\phi)=1$. We construct a bipartition $(L,R)$ of $G$ that certifies
that $G$ is a prehole. Inside each tac for $x_i$ we put $x_i^+ \in R$ and
$x_i^- \in L$ if $a(x_i)=1$ and the other way around if $a(x_i)=0$, see Figure
\ref{fig:taca}. The vertices $b_{j,k}$ and $d_{j,k}$ are put in the $L$ or $R$
such that all edges in $F$ have both endpoints in the same partite set. Now
the triangles in the stc's are partitioned into two nonempty sets because
now all literals in one clause have the same truth value. The bipartition of
the triangles extends to the whole stc as shown in Fig.~\ref{fig:rec}.

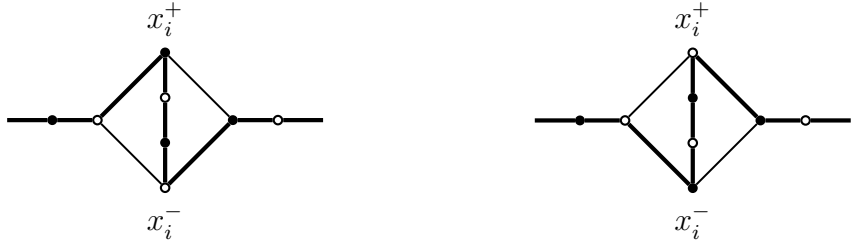
\begin{figure}[htbp]
  \hspace*{\fill}
  \begin{tikzpicture}[scale=0.3]
    \node[b] (L) at (0,3) {}; \node[w] (l) at (2,3) {};
    \node[w, label=below:$x_i^-$] (1) at (5,0) {};
    \node[b](2) at (5,2) {}; \node[w](3) at (5,4) {};
    \node[b, label=above:$x_i^+$] (4) at (5,6) {};
    \node[b] (r) at (8,3) {}; \node[w] (R) at (10,3) {};
    \draw[ultra thick] (-2,3)--(L)--(l)--(4)--(3)--(2)--(1)--(r)--(R)--(12,3);
    \draw (l)--(1)  (4)--(r);
  \end{tikzpicture}
  \hspace*{\fill}
  \begin{tikzpicture}[scale=0.3]
    \node[b] (L) at (0,3) {}; \node[w] (l) at (2,3) {};
    \node[b, label=below:$x_i^-$] (1) at (5,0) {};
    \node[w](2) at (5,2) {}; \node[b](3) at (5,4) {};
    \node[w, label=above:$x_i^+$] (4) at (5,6) {};
    \node[b] (r) at (8,3) {}; \node[w] (R) at (10,3) {};
    \draw[ultra thick] (-2,3)--(L)--(l)--(1)--(2)--(3)--(4)--(r)--(R)--(12,3);
    \draw (l)--(4)  (1)--(r);
  \end{tikzpicture}
  \hspace*{\fill}

  \caption{The bipartition for $a(x_i)=1$ on the left, and
    for $a(x_i)=0$ on the right.}
  \label{fig:taca}
\end{figure}

Finally we assume a certifying bipartition $(L,R)$ of the vertices of $G$.
We define a truth assignment $a$ by $a(x_i)=1$ if and only if $x_i^+ \in R$.
To see $a(\phi)=1$ and $\ol{a}(\phi)=1$ observe:
\begin{enumerate}
\item Every edge of $G$ that is incident to a vertex of degree 2 is contained
  in every Hamilton cycle of $G$. That is, its endpoints belong to
  different sides of the bipartition.
\item Every edge in $F$ is contained in no Hamilton cycle of $G$. That
  is, its endpoints belong to the same side of the bipartition.
\item Consequently, the bipartition of each component is one of the cases
  depicted in Figures \ref{fig:rec} (for stc) or \ref{fig:taca} (for tac).
\end{enumerate}
Now assume there is a clause $c_j$ such that $a$ assigns the same truth
value to all three literals. Then one of the triangles in the corresponding
stc has all vertices in $L$ and the other one all vertices in $R$, see
the bottom-right bipartition in Fig.~\ref{fig:rec}. This contradicts
the connectedness of $G[L{:}R]$. Hence $a(\phi)=1$ and $\ol{a}(\phi)=1$.

\section{Conclusion and discussion}\label{sec:conclusions}

In~\cite{DyeMu17a} we considered the problem of ergodicity and rapid mixing of the switch chain in hereditary graph classes. We gave a complete answer to the ergodicity question, and showed rapid mixing for the new class of quasimonotone graphs. This led us to introduce a new ``quasi-'' operator on bipartite graph classes, which is of independent interest. Quasimonotone graphs are a particular case of this construction. Another interesting class is the class of odd-chordal graphs, which are the quasi-chordal bipartite graphs. This is close to the largest class for which the switch chain is ergodic.

In this paper, we have investigated recognition of the quasimonotone graphs, and shown that this is in \PP. This is intended only to be a proof-of-concept. Our algorithms are far from optimal, and can certainly be improved. However, we do not believe that this class can be recognised in linear time, as for monotone graphs.

A more straightforward approach to recognising quasimonotone graphs would be provided by a polynomial time recognition algorithm for odd-chordal graphs. This is equivalent to the detection of preholes in a graph. We have considered this question, but we leave it as an open problem. The only evidence we can provide is that it is \NP-complete to determine if a graph is a prehole, which may be a harder question, Nonetheless, the \NP-completeness proof suggests that an efficient algorithm for recognising odd-chordal graphs may be elusive.



\end{document}